\newcommand{\Gbar}{\overline{G}}
\newcommand{\Vbar}{\overline{V}}
\newcommand{\Ebar}{\overline{E}}
\newtheorem{example}{Example}
\newtheorem{theorem}[example]{Theorem}
\newtheorem{algorithmPeter}[example]{Algorithm}
\newtheorem{definition}[example]{Definition}
\newtheorem{lemma}[example]{Lemma}
\newtheorem{question}[example]{Question}
\newtheorem{proposition}[example]{Proposition}
\chardef\other=12
\def\mdeactivate{%
\catcode`\&=\other   \catcode`\#=\other
\catcode`\%=\other   \catcode`\~=\other
}
\def\mmakeactive#1{\catcode`#1=\active\ignorespaces}
\gdef\obeywhitespace{%
  \mmakeactive\^^M %
  \let^^M=\NewLine %
  \aftergroup\removebox %
  \obeyspaces %
}}
\def\NewLine{\par\indent}
\def\removebox{\setbox0=\lastbox}
\def\mverbatim{\par\begingroup\parindent=0em\tt\mdeactivate\obeywhitespace
\catcode`\|=0  %
}
\def\|{|}
\begin{document}

\title{Minimum length RNA folding trajectories}
\author{A.H. Bayegan \and P. Clote\thanks{Correspondence
{\tt clote@bc.edu}.  Research supported in part by
National Science Foundation grant DBI-1262439.}}
\date{Biology Department, Boston College, Chestnut Hill, MA}

\maketitle
\begin{abstract}
{\bf Background:}
Existent programs for RNA folding kinetics, such as
{\tt Kinefold}, {\tt Kinfold} and {\tt KFOLD}, implement the
Gillespie algorithm to generate stochastic folding trajectories
from an initial structure $s$ to a target structure $t$, in which each
intermediate secondary structure is obtained from its predecessor by
the application of a move from a given move set.
The {\tt Kinfold} move set $MS_1$ [resp. $MS_2$] allows the addition or removal
[resp. addition, removal or shift] of a single base pair. Define the
$MS_1$ [resp. $MS_2$] distance between secondary structures $s$ and $t$ to
be the minimum path length to refold $s$ to $t$, where a move from
$MS_1$ [resp. $MS_2$] is applied in each step. The $MS_1$ distance between
$s$ and $t$ is trivially equal to the cardinality of the symmetric difference
of $s$ and $t$, i.e the number of base pairs belonging to one structure
but not the other; in contrast, the computation of $MS_2$ distance is highly
non-trivial.

\noindent
{\bf Results:}
We describe algorithms to compute the shortest $MS_2$ folding 
trajectory between any two given RNA secondary structures. These algorithms
include an optimal integer programming (IP) algorithm, an accurate and efficient
near-optimal algorithm, a greedy algorithm, a branch-and-bound algorithm,
and an optimal algorithm if one allows intermediate structures to contain 
pseudoknots. A 10-fold slower version of our IP algorithm appeared in
WABI 2017; the current version exploits special treatment of closed 2-cycles.

Our optimal IP [resp. near-optimal IP] algorithm
maximizes [resp. approximately maximizes] the number of shifts and
minimizes [resp. approximately minimizes] the number of base pair
additions and removals by applying integer programming to
(essentially) solve the minimum feedback vertex set (FVS) problem
for the RNA conflict digraph, then applies topological sort to tether
subtrajectories into the final optimal folding trajectory.

We prove NP-hardness of the problem to determine the minimum barrier energy over
all possible $MS_2$ folding pathways, and conjecture that
computing the $MS_2$ distance between arbitrary secondary structures is NP-hard.
Since our optimal IP algorithm relies on the FVS, known to be NP-complete 
for arbitrary digraphs, we compare the family of RNA conflict digraphs with
the following classes of digraphs --
planar, reducible flow graph, Eulerian, and tournament --
for which FVS is known to be either polynomial time computable or NP-hard.

\noindent
{\bf Conclusion:}
This paper describes a number of optimal and near-optimal algorithms to
compute the shortest $MS_2$ folding trajectory between any two secondary
structures. 
Source code for our algorithms 
is available at \url{http://bioinformatics.bc.edu/clotelab/MS2distance/}.
\end{abstract}

\section{Background}
\label{section:intro}

RNA secondary structure is known to form a scaffold for tertiary
structure formation \cite{Cho.pnas09}.  Moreover, secondary structure 
can be efficiently predicted with reasonable accuracy by using either
machine learning with stochastic context-free grammars 
\cite{Knudsen.nar03,Schattner.nar05,Sukosd.bb11}, provided that the
training set is sufficiently large and representative, or by using
{\em ab initio} physics-based models with thermodynamics-based algorithms 
\cite{Mathews.jmb99,Lorenz.amb11}. Since the latter approach does not 
depend on any form of homology modeling, it has been successfully used
for synthetic RNA molecular design 
\cite{Zadeh.jcc11,Dotu.nar15,GarciaMartin.nar15},
to predict microRNA binding sites \cite{Rajewsky.ng06},
to discover noncoding RNA genes \cite{Washietl.cpb07},
in simulations to study molecular evolution 
\cite{Borenstein.pnas06,Wagner:robustness,schusterStadler:conformationalEvolution,GarciaMartin.bb16} and in
folding kinetics \cite{flamm,Wolfinger:04a,Senter.jmb14,Dykeman.nar15}.
Software to simulate RNA secondary structure folding kinetics, such as
{\tt Kinfold} and {\tt KFOLD}, implement the Gillespie algorithm to
simulate the moves from one structure to another, for a particular move set.
At the elementary-step resolution, two move sets have extensively been 
studied -- the move set $MS_1$ which allows the addition or removal of
a single base pair, and the move set $MS_2$, which allows the addition,
removal or {\em shift} of a single base pair, where a shift move modifies
only one of the two positions in a base pair, 
as shown in Figure~\ref{fig:shiftMoves}.

In simulation studies related to RNA secondary structure evolution, 
the structural distance between two secondary structures $s,t$ is often
measured by the {\em base pair distance}, denoted $d_{BP}(s,t)$,
defined to be the cardinality of the symmetric difference, 
$| s \bigtriangleup t | = |s-t|+|t-s|$, 
i.e. the number of base pairs belonging to $s$ but not $t$, plus the
number of base pairs belonging to $t$ but not $s$. 
In studies concerning RNA folding kinetics,  the
fast, near-optimal algorithm {\tt RNAtabupath} \cite{Dotu.nar10} and
the much slower, but exact (optimal) 
{\tt Barriers} algorithm \cite{Lorenz.amb11} can be used to determine
$MS_1$ folding trajectories that minimize the {\em barrier energy},
defined as the maximum of the (Turner) free energy difference between 
an intermediate
structure and the initial structure. Thermodynamics-based
software such as
{\tt Kinfold}, {\tt RNAtabupath}, and {\tt KFOLD} use the nearest 
neighbor free energy model \cite{Turner.nar10} whose energy parameters 
are inferred from optical melting experiments. In contrast, the two
theorems below concern
the Nussinov energy model \cite{nussinovJacobson}, 
which assigns $-1$ per base pair and ignores entropy.  Folding 
trajectories $s=s_0,s_1,\ldots,s_m = t$ from $s$ to $t$ may either 
be {\em direct}, whereby each intermediate structure $s_i$ is required 
to contain only base pairs from $s \cup t$, or {\em indirect}, without 
this restriction. 
Note that indirect pathways may be energetically more favorable, 
though longer,
than direct pathways, and that the problem of constructing an 
energetically optimal direct folding pathway is NP-hard. Indeed, the
following theorem is proven in \cite{condonNPcompleteBarrierPathJournal}.

\begin{theorem}[Ma\v{n}uch et al.  \cite{condonNPcompleteBarrierPathJournal}]
\hfill\break
\label{thm:NPhardnessMS1}
With respect to the Nussinov energy model,
it is NP-hard to determine, for given secondary structures $s,t$ and
integer $k$, whether there exists a direct $MS_1$ folding trajectory from
$s$ to $t$ with energy barrier at most $k$.
\end{theorem}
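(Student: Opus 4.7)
The plan is to reduce from a known NP-hard problem (a natural candidate is 3-Partition or a variant of Vertex Cover / Set Cover) to the direct $MS_1$ minimum barrier problem.

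First, I would reformulate the barrier energy under the Nussinov model in a combinatorial way. Since each structure $u$ has free energy $-|u|$, the barrier of a trajectory $s = s_0, s_1, \ldots, s_m = t$ equals $\max_i(|s| - |s_i|) = |s| - \min_i |s_i|$. Moreover, in any direct $MS_1$ trajectory each $s_i$ is a subset of $s \cup t$, and the trajectory is determined by an interleaving of the removal operations (for pairs in $s - t$) and the addition operations (for pairs in $t - s$). Let $r_i$ and $a_i$ denote the number of removes and adds performed in the first $i$ steps; the decision problem becomes: is there a valid ordering such that $r_i - a_i \le k$ for all $i$, where ``valid'' means that no addition is performed while a base pair it conflicts with (shares a nucleotide with, or, in the secondary-structure sense, crosses) is still present in the current structure? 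This recasts the problem as a constrained ordering/scheduling problem on the bipartite conflict graph between $s - t$ and $t - s$.

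Second, I would construct problem-specific gadgets in which each element of the source NP-hard instance (say, each item of a 3-Partition instance, or each clause/variable of an SAT-like instance) is encoded by a local arrangement of base pairs in $s$ and corresponding ``target'' pairs in $t$, together with chains of forced precedences that push specific remove operations before specific add operations. The gadgets should be designed so that the quantity $\max_i(r_i - a_i)$ in any valid ordering is exactly (or tightly bounded by) the optimal value of the source instance. A global ``frame'' of stable base pairs that lie in $s \cap t$ can be used to pin down positions and to isolate gadgets from one another, so that the barrier decomposes cleanly over gadgets and the accounting of $r_i - a_i$ is local.

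Third, I would verify correctness in the two directions: (i) a solution to the source instance yields a direct trajectory of barrier at most $k$, by reading off an ordering that triggers removals and additions in the gadgets according to the chosen solution; and (ii) any trajectory of barrier at most $k$ induces a solution to the source instance, by reading off which gadgets were ``activated early'' and showing this satisfies the source problem's constraints.

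The main obstacle is the planarity/non-crossing restriction of RNA secondary structures. The conflict and precedence relations I need to encode are essentially arbitrary, but every gadget has to be realized using only pairs $(i,j)$ that nest properly on the backbone $1, 2, \ldots, n$. Designing gadgets that produce rich enough precedences (for example, long chains, or fan-in / fan-out patterns) while every intermediate structure remains a legal pseudoknot-free secondary structure is the delicate part; this is typically handled by spacing gadgets along the sequence, using nested ``bracket'' pairs that stay in $s \cap t$ as scaffolding, and routing conflicts through shared endpoints rather than through crossings.
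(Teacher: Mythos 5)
This theorem is not proved in the paper at all: it is imported verbatim from Thachuk, Ma\v{n}uch, Stacho and Condon \cite{condonNPcompleteBarrierPathJournal}, and the text merely cites it before using it as the basis for the reduction in Theorem~\ref{thm:NPhardnessMS2}. So there is no in-paper proof to compare against. Judged on its own terms, your opening reformulation is correct and is the right combinatorial lens: under the Nussinov model the barrier of a direct $MS_1$ trajectory is $|s|-\min_i|s_i|=\max_i(r_i-a_i)$, where $r_i$ and $a_i$ count removals and additions in the first $i$ steps, and validity of an interleaving is governed by which pairs of $t-s$ conflict (cross or share a nucleotide) with which still-present pairs of $s-t$. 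This is in fact how the cited work sets the problem up, and their reduction is from 3-Partition, so your choice of source problem is also on target.

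The gap is that everything after that is a plan rather than a proof. The entire technical content of the theorem lives in the gadget construction you defer --- you explicitly call it ``the delicate part'' and then do not do it. Specifically, you must exhibit concrete arc arrangements realizing the 3-Partition accounting such that (i) every intermediate structure of the intended schedule is pseudoknot-free, (ii) a solution of the 3-Partition instance yields a schedule with $\max_i(r_i-a_i)\le k$, and (iii) \emph{every} schedule achieving barrier $\le k$ can be decoded back into a valid partition; direction (iii) is where na\"{\i}ve gadget designs usually fail, because an adversarial schedule may interleave removals and additions across gadgets in ways your local accounting does not anticipate. Also note that your proposed scaffolding of pairs in $s\cap t$ is weaker than you suggest: in a direct trajectory those pairs never move, so they cannot create precedences; they only partition the backbone into independent nested regions, and all fan-in/fan-out structure must be built from crossings and shared endpoints between $s-t$ and $t-s$ arcs inside a single region. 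Until those gadgets are written down and both directions of the equivalence are verified, the proposal does not establish the theorem.
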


By an easy construction, we can show an analogous result for $MS_2$ folding
pathways. First, we define a {\em direct} $MS_2$ folding pathway from
secondary structure $s$ to secondary structure $t$ to be a folding
pathway $s=s_0,s_1,\ldots,s_n=t$ where each intermediate structure $s_i$
is obtained from $s_{i-1}$ by removing a base pair that belongs to $s$, 
adding a base pair that belongs to $t$, or shifting a base pair belonging
to $s$ into a base pair belonging to $t$.

\begin{theorem}
\label{thm:NPhardnessMS2}
With respect to the Nussinov energy model,
it is NP-hard to determine, for given secondary structures $s,t$ and
integer $k$, whether there exists a direct $MS_2$ folding trajectory from
$s$ to $t$ with energy barrier at most $k$.
\end{theorem}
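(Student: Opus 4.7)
The plan is to reduce from the $MS_1$ barrier problem of Theorem~\ref{thm:NPhardnessMS1} by a simple doubling construction that renders every shift move inapplicable along any direct $MS_2$ trajectory. Given an instance $(s,t,k)$ of the $MS_1$ barrier problem on a sequence of length $n$, I would construct an $MS_2$ instance $(s',t',k)$ on a sequence of length $2n$ by the maps
\[
(i,j)\in s \;\mapsto\; (2i-1,\,2j-1)\in s', \qquad (i,j)\in t \;\mapsto\; (2i,\,2j)\in t'.
\]
Under this construction every pair of $s'$ has both endpoints of odd parity and every pair of $t'$ has both endpoints of even parity.

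The key observation is that no shift move is ever applicable in a direct $MS_2$ trajectory from $s'$ to $t'$. By definition such a shift must take a pair of $s'$ into a pair of $t'$ while preserving exactly one endpoint; but this would require preserving the parity of one endpoint while flipping the parity of the other, contradicting the fact that both endpoints of the target pair in $t'$ must be even. Therefore every direct $MS_2$ trajectory on $(s',t')$ is in fact a direct $MS_1$ trajectory, and the $MS_2$ barrier on $(s',t')$ coincides with its $MS_1$ barrier.

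It then remains to verify that the $MS_1$ barriers on $(s,t)$ and on $(s',t')$ agree. The map $I \mapsto I' = \{(2i-1,2j-1):(i,j)\in I\cap s\} \cup \{(2i,2j):(i,j)\in I\cap t\}$ is cardinality-preserving, so under the Nussinov energy $-|\cdot|$ it preserves free energies and hence barriers, provided it is a bijection between valid pseudoknot-free intermediates. Nestings, crossings, and endpoint-conflicts among pairs of the same type ($s$-pairs among themselves, or $t$-pairs among themselves) are faithfully reproduced by the linear embedding, so on the ``same-type'' side everything is routine.

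The main obstacle is the one genuinely subtle case: a shared endpoint between a pair of $s$ and a pair of $t$ in the original instance may translate into either a crossing or a nesting in $(s',t')$ depending on the relative order of the two non-shared endpoints, which could in principle create spurious intermediates in $(s',t')$ with strictly smaller barriers. I would handle this either by observing that the reduction underlying Theorem~\ref{thm:NPhardnessMS1} can be adjusted at polynomial overhead to produce instances in which no pair of $s$ shares an endpoint with any pair of $t$, or by a small preprocessing step that splits each offending shared position into two nearby positions separated by enough unpaired padding to respect the minimum-loop constraint, together with a short local gadget that forces a crossing in place of the lost shared-endpoint conflict. Once this point is dispatched, the construction is a polynomial-time many-one reduction from Theorem~\ref{thm:NPhardnessMS1} and yields NP-hardness of the $MS_2$ barrier problem.
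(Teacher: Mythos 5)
Your construction is the paper's construction up to swapping which structure receives the odd and which the even positions: duplicate every position, send $s$ to one parity class and $t$ to the other, observe that a shift would have to preserve an endpoint whose parity cannot belong to both classes, conclude that every direct $MS_2$ trajectory on $(s',t')$ is in fact a direct $MS_1$ trajectory, and invoke Theorem~\ref{thm:NPhardnessMS1}. Where you differ is in the care applied to the claim that the two barrier values agree, and here you have put your finger on something the paper's proof passes over in silence. The direction ``every direct $MS_1$ pathway for $(s,t)$ lifts to one for $(s',t')$ with the same energy profile'' is unconditional, since the doubling is order-preserving on each parity class and never turns a legal configuration into a crossing or a base triple; hence the doubled barrier is at most the original one. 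The converse is exactly your ``one genuinely subtle case'': since $s$ and $t$ are each individually valid, the only shared endpoints are between an $s$-pair and a $t$-pair, and if $(i,j)\in s$ and $(i,l)\in t$ share the position $i$, their images need not conflict --- depending on the relative order of $j$ and $l$ the two image pairs may nest, so $(s',t')$ can admit intermediates holding both pairs simultaneously that have no counterpart in $(s,t)$. A two-base-pair example such as $s=\{(1,10)\}$, $t=\{(1,5)\}$ already shows the doubled barrier dropping from $1$ to $0$ under the Nussinov model. So equality of barriers is genuinely false for arbitrary $(s,t)$, and the reduction is sound only after mixed shared endpoints are eliminated or re-encoded as true crossings.

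Your two proposed repairs (checking that the hard instances of Theorem~\ref{thm:NPhardnessMS1} can be taken free of mixed shared endpoints, or splitting each offending position with padding plus a crossing gadget) are the right kind of move, but as written they are declarations of intent rather than arguments; the gadget version in particular needs a proof that it preserves the barrier value exactly and not merely the conflict structure. Until one of these is carried out the proof is incomplete --- though it should be said plainly that the paper's own proof asserts the barrier equivalence without confronting this case at all, so your account is the more honest statement of what actually remains to be shown.
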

\begin{proof}
Given secondary structures $s,t$ for an RNA sequence 
${\bf a} = a_1,\ldots,a_n$, without loss of generality 
we can assume that $s,t$
share no common base pair (otherwise, a minimum energy folding
trajectory for $s - (s \cap t)$ and $t-(s \cap t)$ yields a
minimum energy folding trajectory for $s,t$.)
Define the corresponding secondary structures
\begin{align*}
s' &= \{ (2i,2j): (i,j) \in s \}\\
t' &= \{ (2i-1,2j-1): (i,j) \in t \}\\
a'_{2i} &= a_i = a'_{2i-1}  \quad\mbox{ for each $1\leq i \leq n$}\\
{\bf a'} &= a'_1,a'_2,\ldots,a'_{2n}
\end{align*}
In other words, the sequence ${\bf a'} = a_1,a_1,a_2,a_2,\ldots,a_n,a_n$
is obtained by duplicating each nucleotide of ${\bf a}$, and placing each
copy beside the original nucleotide; $s'$ [resp. $t'$] is obtained by
replacing each base pair $(i,j) \in s$ by the base pair
$(2i,2j) \in s'$ [resp. $(2i-1,2j-1) \in t'$. Since there are no
base-paired positions that are shared between $s'$ and $t'$, no shift
moves are possible, thus any direct $MS_2$ folding pathway from $s'$ to
$t'$ immediately yields a corresponding direct $MS_1$ folding pathway from
$s$ to $t$. 
Since the Nussinov energy of any secondary
structure equals $-1$ times the number of base pairs, it follows that
barrier energy of the direct $MS_2$ pathway from $s'$ to $t'$
is identical to that of the corresponding direct $MS_1$ pathway from $s$ 
to $t$. Since $MS_1$ direct barrier energy is an NP-hard problem by 
Theorem \ref{thm:NPhardnessMS1}, it now follows that the $MS_2$ barrier
energy problem is NP-hard.
\end{proof}

Shift moves, depicted in Figure~\ref{fig:shiftMoves}, naturally
model both helix zippering and defect diffusion, depicted in
Figure~\ref{fig:defectDiffusion} and described in \cite{defectDiffusion}.
However, shift moves have rarely been considered in the literature, 
except in the context of folding
kinetics \cite{flamm}. For instance, presumably due to the absence of 
any method to compute $MS_2$ distance, Hamming distance is used as a proxy 
for $MS_2$ distance in the work on molecular evolution of secondary structures 
appearing in \cite{schusterStadler:conformationalEvolution} -- see also
\cite{Wagner.bj14}, where Hamming distance is used to quantify 
structural diversity in defining phenotypic {\em plasticity}.

In this paper, we introduce the first algorithms to compute the $MS_2$
distance between two secondary structures. Although $MS_1$ distance,
also known as base pair distance, is trivial to compute, we conjecture that
$MS_2$ distance is NP-hard, where this problem can be formalized
as the problem
to determine, for any given secondary structures $s,t$
and integer $m$, whether there is an $MS_2$ trajectory 
$s=s_0,s_1,\ldots,s_m = t$ of length $\leq m$.  We describe an
optimal (exact, but possibly exponential time) integer programming (IP) 
algorithm, a fast, near-optimal algorithm, an
exact branch-and-bound algorithm, and a greedy algorithm. 
Since our algorithms involve the 
{\em feedback vertex set} problem for {\em RNA conflict digraphs},
we now provide a bit of background on this problem.

Throughout, we are exclusively interested in {\em directed graphs}, or 
{\em digraphs}, so unless otherwise indicated, all graphs are assumed to 
be directed. Any undefined graph-theoretic concepts can be found in the
monograph by Bang-Jensen and Gutin \cite{bookDigraphdirectedgraph}.
Given a directed graph $G=(V,E)$, a 
{\em feedback vertex set} (FVS) is a subset $V' \subseteq V$ which
contains at least one vertex from every directed 
cycle in $G$, thus rendering $G$ acyclic.  Similarly, a
{\em feedback arc set} (FAS) is a subset $E' \subseteq E$ which
contains at least one directed edge (arc)
from every directed cycle in $G$.  
The FVS [resp. FAS] problem is the problem to determine a minimum size
feedback vertex set [resp.  feedback arc set] which renders $G$ acyclic.
The FVS [resp. FAS] problem can be formulated as a decision problem 
as follows.  Given an integer $k$ and a digraph $G=(V,E)$, determine 
whether there exists a subset $V' \subseteq V$ of size $\leq k$
[resp. $E' \subseteq E$ of size $\leq k$],
such that every directed cycle contains a vertex in $V'$ [resp. an edge in
$E'$].

In Proposition 10.3.1 of \cite{bookDigraphdirectedgraph}, it is proved that
FAS and FVS have the same computational complexity,  within a polynomial factor.
In Theorem 10.3.2 of \cite{bookDigraphdirectedgraph}, it is proved that
the FAS problem is NP-complete -- indeed, this problem appears in the original
list of 21 problems shown by R.M. Karp to be NP-complete \cite{karpNPcomplete}.
Note that Proposition 10.3.1 and Theorem 10.3.2 imply immediately that the FVS
problem is NP-complete. 
In Theorem 10.3.3 of \cite{bookDigraphdirectedgraph}, it is proved that
the FAS problem is NP-complete for tournaments, where a tournament is a
digraph $G=(V,E)$, such that there is a directed edge from $x$ to $y$,
or from $y$ to $x$, for every pair of distinct vertices $x,y \in V$.
In \cite{feedbackarcsetNPcompleteForEulerianDigraphs}, it is proved that
the FAS for Eulerian digraphs is NP-complete, where an Eulerian digraph
is characterized by the property that the in-degree of every vertex equals
its out-degree.
In Theorem 10.3.15  of \cite{bookDigraphdirectedgraph}, it is proved that
FAS can be solved in polynomial time for planar digraphs, a result originally
due to \cite{planarMinimaxArcTheorem}. 
In \cite{feedbackArcProblemReducibleFlowGraphsflowgraph}, a 
polynomial time algorithm is given for the FAS for {\em reducible flow graphs},
a type of digraph that models programs without any GO TO statements
(see \cite{flowgraphUllman} for a characterization of reducible flow graphs).
There is a long history of work on the feedback vertex set and feedback arc
set problems, both for directed and undirected graphs, including results on
computational complexity as well as exact and approximation algorithms for
several classes of graphs -- see the survey \cite{feedbackProblemSurvey}
for an overview of such results.

The plan of the paper is now as follows. In 
Section~\ref{section:ms2distancePseudoknottedStr}, we present 
the graph-theoretic framework for 
our overall approach and describe a simple, fast algorithm to compute the 
{\em pseudoknotted} $MS_2$ distance, or pk-$MS_2$ distance, 
between structures $s,t$. By this we mean the minimum length of an $MS_2$ 
folding trajectory between $s$ and $t$, {\em if} intermediate pseudoknotted 
structures are allowed. We show that the pk-$MS_2$ distance between
$s$ and $t$, denoted by $d_{pk-MS_2}(s,t)$, is approximately equal to one-half
the Hamming distance $d_H(s,t)$ between $s$ and $t$.
This result can be seen as justification, {\em ex post facto}, for
the use of Hamming distance in the investigation of RNA molecular evolution
\cite{schusterStadler:conformationalEvolution}.
In Section~\ref{section:exactIPalgorithm}, we describe an exact integer
programming (IP) algorithm which enumerates all directed cycles, then 
solves the feedback vertex problem for the collection of RNA conflict 
digraphs, as described in Section~\ref{subsection:RNAconflictDigraph}. Our
IP algorithm is not a simple reduction to the feedback vertex set 
(FVS) problem; however, since the complexity of FVS/FAS is known for certain
classes of digraphs, we take initial steps towards the characterization of
RNA conflict digraphs. Our optimal IP algorithm is much faster than a 
branch-and-bound algorithm, but it can be too slow to be practical
to determine $MS_2$ distance between the minimum free energy (MFE) secondary 
structure and a (Zuker) suboptimal secondary structure for some sequences from
the Rfam database \cite{Nawrocki.nar15}. For this reason, in 
Section~\ref{section:nearOptimalAlgorithm} we present
a fast, near-optimal algorithm, and in Section~\ref{section:benchmarking},
we present benchmarking results to compare various algorithms of the paper.

Since we believe that further study of RNA conflict digraphs may lead to a
solution of the question whether $MS_2$ distance is NP-hard, in
Appendix~\ref{section:edgeClassificationAppendix}, all 
types of directed edge that are possible in an RNA conflict digraph are
depicted.
Appendix~\ref{section:pathsOneToFive} presents details on minimum
length pseudoknotted $MS_2$ folding pathways, used to provide a lower
bound in the branch-and-bound algorithm of 
Appendix~\ref{section:branchAndBoundAlgorithm}.
Appendix~\ref{subsection:greedyAlgorithm} presents pseudocode for a 
greedy algorithm.

All algorithms described in this paper have been implemented 
in Python, and are publicly available at
\url{bioinformatics.bc.edu/clotelab/MS2distance}. Our software uses
the function {\tt simple\_cycles(G)} from the software {\tt NetworkX}
\url{https://networkx.github.io/documentation/networkx-1.9/reference/generated/networkx.algorithms.cycles.simple_cycles.html}, and the integer programming
(IP) solver Gurobi Optimizer version 6.0 \url{http://www.gurobi.com, 2014}.

\section{$MS_2$ distance between possibly pseudoknotted structures}
\label{section:ms2distancePseudoknottedStr}

In this section, we describe a straightforward algorithm to determine the
$MS_2$-distance $d_{pk-MS_2}(s,t)$ between any two structures $s,t$ of a
given RNA sequence $a_1,\ldots,a_n$, where 
$d_{pk-MS_2}(s,t)$ is defined to be length of a minimal length trajectory 
$s=s_0,s_1,\ldots,s_m=t$, where intermediate structures $s_i$ may 
contain pseudoknots, but do not contain any base triples. This variant
is called {\em pk-$MS_2$ distance}. Clearly,
the pk-$MS_2$ distance is less than or equal to the $MS_2$ distance.
The purpose of
this section is primarily to introduce some of the main concepts 
used in the remainder of the paper.
Although the notion of secondary structure is well-known,
we give three distinct but equivalent definitions, 
that will allow us to overload 
secondary structure notation to simplify presentation of our algorithms.

\begin{definition}[Secondary structure as set of ordered base pairs]
\label{def:secStr}
Let $[1,n]$ denote the set $\{ 1,2,\ldots,n\}$.
A secondary structure for a given RNA sequence $a_1,\ldots,a_n$
of length $n$ is defined to be a set $s$ of \underline{ordered} pairs
$(i,j)$, with $1 \leq i<j \leq n$, such that
the following conditions are satisfied.
\begin{description}
\item[]
1. {\em Watson-Crick and wobble pairs:}
If $(i,j) \in s$, then
$a_ia_j \in \{ GC,CG,AU,UA,GU,UG \}$.
\item[]
2. {\em No base triples:}
If $(i,j)$ and $(i,k)$ belong to $s$, then $j=k$;
if $(i,j)$ and $(k,j)$ belong to $s$, then $i=k$.
\item[]
3. {\em Nonexistence of pseudoknots:}
If $(i,j)$ and $(k,\ell)$ belong to $s$, then it is not the case that
$i<k<j<\ell$.
\item[]
4. {\em Threshold requirement for hairpins:}
If $(i,j)$ belongs to $s$, then $j-i > \theta$, for a fixed value $\theta\geq 0$; i.e. there must be
at least $\theta$ unpaired bases in a hairpin loop. Following standard
convention, we set $\theta=3$ for steric constraints.
\end{description}
\end{definition}

Without risk of confusion, it will be convenient to overload the concept 
of secondary structure $s$ with two alternative, equivalent notations, 
for which context will determine the intended meaning.
\begin{definition}[Secondary structure as set of unordered base pairs]
\label{def:secStrBis}
A secondary structure $s$ for the RNA sequence $a_1,\ldots,a_n$
is a set of \underline{unordered} pairs
$\{i,j\}$, with $1 \leq i,j \leq n$, such that the corresponding set
of ordered pairs 
\begin{align}
\label{eqn:secStrUnorderedBasePair}
\{ i,j\}_{<} \stackrel{\mbox{\small def}}{=} (\min(i,j),\max(i,j))
\end{align}
satisfies Definition~\ref{def:secStr}.
\end{definition}

\begin{definition}[Secondary structure as an integer-valued function]
\label{def:secStrTer}
A secondary structure $s$ for $a_1,\ldots,a_n$ is a function 
$s: [1,\ldots,n] \rightarrow [0,\ldots,n]$, such that
$\Big\{ \{i,s[i]\}_< : 1 \leq i \leq n, s[i] \ne 0 \Big\}$ 
satisfies Definition~\ref{def:secStr}; i.e.
\begin{equation}
\label{eqn:secStrFunction}
s[i] =  
\left\{ \begin{array}{ll}
0&\mbox{if $i$ is unpaired in $s$}\\
j&\mbox{if $(i,j) \in s$ or $(j,i) \in s$}\\
\end{array} \right.
\end{equation}
\end{definition}

\begin{definition}[Secondary structure distance measures]
\label{def:secStrMetrics}
Let $s,t$ be secondary structures of length $n$. Base pair distance is
defined by equation~(\ref{eqn:basePairDistance}) below, and Hamming distance is
defined by equation~(\ref{eqn:HammingDistance}) below.
\begin{align}
\label{eqn:basePairDistance}
d_{BP}(s,t) &= 
|\{ (x,y): \left((x,y) \in s \land (x,y) \not\in t \right)
\lor \left((x,y) \in t \land (x,y) \not\in s \right) \} | \\
\label{eqn:HammingDistance}
d_H(s,t) &= | \{ i \in [1,n] : s[i] \ne t[i] \} |
\end{align}
\end{definition}

Throughout this section, the term {\em pseudoknotted structure}
is taken to mean a
set of ordered pairs [resp. unordered pairs resp. function], which satisfies
conditions 1,2,4 (but not necessarily 3) of Definition~\ref{def:secStr}.
Given structure $s$  on RNA sequence
$\{ a_1, \ldots,a_n \}$, we say that a position $x \in [1,n]$ is 
{\em touched} by $s$ if $x$ belongs to a base pair of $s$, or equivalently
$s[x]\ne 0$.  For possibly pseudoknotted structures $s,t$ on 
$\{ a_1,\ldots,a_n \}$, we partition the set $[1,n]$ into disjoint sets
$A$,$B$,$C$,$D$ as follows.
Let $A$ be the set of positions that are touched by both $s$ and $t$, yet
do not belong to the same base pair in $s$ and $t$, so
\begin{align}
\label{eqn:defA}
A &= \{ i \in [1,n]: s[i] \ne 0, t(i) \ne 0, s[i] \ne t[i] \}
\end{align}
Let $B$ be the set of positions that are touched by either $s$ or $t$, but 
not by both, so
\begin{align}
\label{eqn:defB}
B &= \{ i \in [1,n]: \left( s[i] \ne 0, t[i]=0 \right)
\lor \left( s[i] = 0, t[i] \ne 0 \right) \}
\end{align}
Let $C$ be the set of positions touched by neither $s$ nor $t$, so
\begin{align}
\label{eqn:defC}
C &= \{ i \in [1,n]: s[i]=0=t[i] \}
\end{align}
Let $D$ be the set of positions that belong to the same base pair in both 
$s$ and $t$, so
\begin{align}
\label{eqn:defD}
D &= \{ i \in [1,n]: s[i] \ne 0, t[i] \ne 0, s[i]=t[i] \}
\end{align}

We further partition $A \cup B$ into a set of maximal paths and cycles, in the 
following manner. Define an undirected, vertex-colored and
edge-colored graph $G=(V,E)$, whose 
vertex set $V$ is equal to the set $A \cup B$ of positions that 
are touched by either $s$ or $t$, but not by a common base pair
in $(s \cap t)$, and whose edge set
$E= (s-t) \cup (t-s) = (s \cup t) - (s \cap t)$ 
consists of undirected edges between positions that are base-paired
together. Color edge $\{ x,y \}$ {\em green} if the base pair 
$(x,y) \in s-t$ and {\em red} if
$(x,y) \in t-s$. Color vertex $x$ {\em yellow} if $x$ is incident to both
a red and green edge, {\em green} if $x$ is incident to a green edge, but not
to any red edge, {\em red} if $x$ is incident to a red edge, but not to any
green edge. 
The connected components of $G$ can be classified into 4 types
of (maximal) paths  and one type of cycle (also called path of type 5):
type 1 paths have two green end nodes,
type 2 paths have a green end node $x$ and a red end node $y$, where $x<y$,
type 3 paths have a red end node $x$ and a green end node $y$, where $x<y$,
type 4 paths have two red end nodes, and type 5 paths (cycles) have no
end nodes. These are 
illustrated in Figure~\ref{fig:redBluePathsCycles}. Note that all
nodes of a cycle and interior nodes of paths of type 1-4 are yellow,
while end nodes (incident to only one edge) are either green or red.
If $X$ is a connected component of $G$, then define the
{\em restriction} of $s$ [resp. $t$] to $X$, denoted by
$s \upharpoonright{X}$ [resp. $t \upharpoonright{X}$], to be the
set of base pairs $(i,j)$ in $s$ [resp. $t$] such that $i,j \in X$.
With this description, most readers will be able to determine
a minimum length pseudoknotted folding pathway from
$s \upharpoonright{X}$ to $t \upharpoonright{X}$, where $X$ is a 
connected component of $G$. For instance, if $X$ is a path of type 2 or 3,
then a sequence of shift moves transforms
$s \upharpoonright{X}$ into $t \upharpoonright{X}$, beginning with a
shift involving the terminal green node. Further details can be found in
Appendix~\ref{section:pathsOneToFive}.
The formal definitions given below
are necessary to provide a careful proof
of the relation between Hamming
distance and pseudoknotted $MS_2$ distance,  also found in
Appendix~\ref{section:pathsOneToFive}.

\begin{definition}
\label{def:equivRelation}
Let $s,t$ be (possibly pseudoknotted) structures on the RNA sequence 
$a_1,\ldots,a_n$.
For $i,j \in [1,n]$, define $i \sim j$ if $s[i]=j$ or $t[i]=j$, and let
$\equiv$ be the reflexive, transitive closure of $\sim$. Thus $i \equiv j$
if $i=j$, or $i = i_1 \sim i_2 \sim \cdots \sim i_m = j$ for any $m \geq 1$.
For $i \in [1,n]$, let $[i]$ denote the equivalence class of $i$, i.e.
$[i] = \{j \in [1,n]: i \equiv j \}$.
\end{definition}
It follows that $i \equiv j$ if and only if $i$ is connected
to $j$ by an alternating red/green path or cycle. The
equivalence classes $X$ with respect to $\equiv$ are maximal
length paths and cycles, as depicted in Figure~\ref{fig:redBluePathsCycles}.
Moreover, it is easy to see that elements of $A$ either belong to cycles or
are found at {\em interior} nodes of paths, while elements of $B$ are found 
exclusively at the left or right terminal nodes of paths.

Note that odd-length cycles cannot exist, due to the fact that a structure
cannot contain base triples -- see condition 2 of
Definition~\ref{def:secStr}. Moreover,  even-length cycles can indeed exist
-- consider, for instance, the structure 
$s$, whose only base pairs are $(1,15)$ and $(5,10)$, and the structure
$t$, whose only base pairs are  $(1,5)$ and $(10,15)$.
Then we have the red/green cycle
$1 \rightarrow 5 \rightarrow 10 \rightarrow 15 \rightarrow 1$,
consisting of red edge $1 \rightarrow 5$, since $(1,5) \in t$,
green edge $5 \rightarrow 10$, since $(5,10) \in s$,
red edge $10 \rightarrow 15$, since $(10,15) \in t$, and
green edge $15 \rightarrow 1$, since $(1,15) \in s$.

From the discussion before Definition~\ref{def:equivRelation}, it follows
that $A$ in equation~(\ref{eqn:defA}) consists of the 
nodes of every cycle together with all {\em interior} (yellow) nodes
of paths of type 1-4.  Moreover, we can think of $B$ in 
equation~(\ref{eqn:defB}) as consisting of all path {\em end nodes}, 
i.e. those that have only one incident edge.  Let
$B_1 \subseteq B$ [resp. $B_2 \subseteq B$] denote the set of elements of
$B$ that belong to type 1 paths [resp. type 4 paths] of length 1, i.e. positions
incident to isolated green [resp. red] edges that correspond to
base pairs $(i,j) \in s$ where $i,j$ are {\em not} touched by $t$
[resp. $(i,j) \in t$ where $i,j$ are {\em not} touched by $s$]. 
Let $B_0 = B-B_1-B_2$ be the set of end nodes of a path of length
2 or more.  Letting $BP_1$ [resp. $BP_2$] denote the set of base pairs $(i,j)$
that belong to $s$ and are not touched by $t$
[resp. belong to $t$ and are not touched by $s$], 
we can formalize the previous definitions as follows.

\begin{align}
\label{eqn:defB1}
B_1 &= \{ i \in [1,n]: \exists j \left[ \{ i,j \} \in s,
t(i)=0=t(j) \right] \\
\label{eqn:defB2}
B_2 &= \{ i \in [1,n]: \exists j \left[ \{ i,j \} \in t,
s(i)=0=s(j) \right] \\
\label{eqn:defB0}
B_0 &= B - (B_1 \cup B_2)  \\
\label{eqn:defBP1}
BP_1 &= \{ (i,j) \in s: t[i]=0=t[j] \}\\
\label{eqn:defBP2}
BP_2 &= \{ (i,j) \in t: s[i]=0=s[j] \}
\end{align}

In Appendix~\ref{section:pathsOneToFive}, it is proved that 
pk-$MS_2$ distance between 
$s \upharpoonright{X}$ and $t \upharpoonright{X}$
for any maximal path $X$ is equal to Hamming distance
$\lfloor \frac{d_H(s \upharpoonright{X},t \upharpoonright{X})}{2}
\rfloor$; in contrast,
pk-$MS_2$ distance between $s \upharpoonright{X}$ and $t \upharpoonright{X}$
for any cycle $X$ is equal to 
$\lfloor \frac{d_H(s \upharpoonright{X},t \upharpoonright{X})}{2} \rfloor +1$.  
It follows that 
\begin{align}
\label{eqn:relationBetweenHammingAndPKMS2distance}
d_{pk-MS_2}(s,t) &= \lfloor \frac{d_H(s,t)}{2} \rfloor
\end{align}
if and only if there are no type 5 paths (i.e. cycles).
This result justifies {\em ex post facto}
the use of Hamming distance in the investigation of RNA molecular evolution
\cite{schusterStadler:conformationalEvolution,Wagner.bj14}.
We also have the following.

\begin{lemma}
\label{lemma:pkMS2}
Let $s,t$ be two arbitrary (possibly pseudoknotted) 
structures for the RNA sequence
$a_1,\ldots,a_n$, and let $X_1,\ldots,X_m$ be the
equivalence classes with respect to equivalence relation $\equiv$ on
$A \cup B$.
Then the pk-$MS_2$ distance between $s$ and $t$ is equal to
\begin{align*}
d_{pk-MS_2}(s,t) &=\sum_{i=1}^m \max\Big( |s \upharpoonright{X_i}|,
|t \upharpoonright{X_i}| \Big)
\end{align*}
\end{lemma}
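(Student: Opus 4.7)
The plan is to prove this additive formula by first reducing the global pk-$MS_2$ distance to a sum of per-component distances indexed by the equivalence classes $X_1,\ldots,X_m$, and then invoking the per-component identity recorded just before the lemma (with proof given in Appendix~\ref{section:pathsOneToFive}).

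The first step is to establish the decomposition $d_{pk-MS_2}(s,t) = \sum_{i=1}^m d_{pk-MS_2}(s\upharpoonright X_i, t\upharpoonright X_i)$ via two matching inequalities. The upper bound is immediate by concatenation: because the classes $X_i$ are pairwise vertex-disjoint, optimal trajectories for the individual restricted instances can be executed sequentially with no move inside $X_i$ ever inducing a base triple or other conflict with base pairs supported on a different class. For the matching lower bound I would use a locality argument: any single pk-$MS_2$ move (add, remove, or shift) touches at most two positions, and the edge being added, removed, or reattached itself instantiates the relation $\sim$ of Definition~\ref{def:equivRelation} between those positions, forcing them to lie inside a common class $X_i$. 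An optimal trajectory from $s$ to $t$ therefore projects to a valid pk-$MS_2$ trajectory from $s\upharpoonright X_i$ to $t\upharpoonright X_i$ on each class. Moves that touch only positions of $C$ or $D$ can be stripped from any optimal trajectory without increasing its length, so the projected sub-trajectories together account for all the moves.

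The second step is to verify the per-component identity $d_{pk-MS_2}(s\upharpoonright X_i, t\upharpoonright X_i) = \max(|s\upharpoonright X_i|, |t\upharpoonright X_i|)$ by case analysis on the five component types depicted in Figure~\ref{fig:redBluePathsCycles}. For a path of type 1, the counts are $(|s\upharpoonright X_i|, |t\upharpoonright X_i|) = (g, g-1)$ and $d_H = 2g$, so $\lfloor d_H/2 \rfloor = g = \max(|s\upharpoonright X_i|, |t\upharpoonright X_i|)$; analogous green/red/endpoint tallies dispatch types 2, 3, and 4. Combined with the Appendix identity $d_{pk-MS_2} = \lfloor d_H/2 \rfloor$ on paths, this closes the four path cases and, via the decomposition above, proves the lemma on any instance containing only path components.

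The main obstacle will be the cycle (type 5) case: a cycle with $k$ green and $k$ red edges has $|s\upharpoonright X| = |t\upharpoonright X| = k$, whereas the natural schedule — remove one base pair to open the cycle, perform $k-1$ shifts along the resulting open path, then add one base pair to close — uses $k+1$ moves. Closing this gap is the heart of the proof, and I would attack it by constructing a $k$-move pk-$MS_2$ schedule that exploits the freedom to pass through pseudoknotted intermediate structures, so that a conflicting pair can be temporarily moved into a pseudoknotted position to enable a shift that would otherwise be blocked by a base-triple constraint. I expect this cycle accounting to absorb the bulk of the proof-writing effort; once it is in hand, summing the per-component equalities across all classes via the decomposition of the first step immediately yields the lemma.
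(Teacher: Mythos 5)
Your decomposition step and your tallies for path types 1--4 are sound, and they mirror what the paper actually does in Appendix~\ref{section:pathsOneToFive}: one explicit subroutine per component type, a per-component count, and summation over the equivalence classes. The genuine gap is exactly where you located it, the type 5 (cycle) case --- but the $k$-move schedule you propose to construct there does not exist. The obstruction inside a cycle is the base-triple constraint (condition 2 of Definition~\ref{def:secStr}), not the crossing constraint (condition 3), and the pk-$MS_2$ relaxation waives only condition 3: pseudoknotted intermediates must still be triple-free. Write $\Phi(u)=|u\bigtriangleup t|$; then $\Phi(s\upharpoonright{X})=2k$, each move changes $\Phi$ by at most $2$, and a $k$-move trajectory would require every move to be a shift carrying a base pair of $u-t$ onto a base pair of $t-u$. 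In a cycle every position of $X$ is paired in both $s\upharpoonright{X}$ and $t\upharpoonright{X}$, so the very first such shift, from some $\{y,z\}\in s$ onto some $\{x,y\}\in t$, lands on a position $x$ still occupied by a different base pair of $s\upharpoonright{X}$ and creates a base triple. Hence the first move must be a removal, and at least $\lceil(2k-1)/2\rceil=k$ further moves are needed, for a total of $k+1$. This matches the paper's own appendix computation $d^X_{pk-MS_2}=\lfloor d^X_H/2\rfloor+1$ for cycles.

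The consequence is that the per-component identity you need, $d_{pk-MS_2}(s\upharpoonright{X},t\upharpoonright{X})=\max(|s\upharpoonright{X}|,|t\upharpoonright{X}|)$, is genuinely false for cycles: the left side is $k+1$ while the right side is $k$. (The appendix asserts both values in adjacent sentences; only $k+1$ is correct, and the displayed formula of the lemma is consistent with equation~(\ref{eqn:relationBetweenHammingAndPKMS2distance}) only when no type 5 components occur. The correct additive formula carries an extra $+1$ for each cycle.) So no amount of cleverness with pseudoknotted intermediates will close your gap; the honest outcome of your plan is a counterexample, not a proof. A secondary repair is also needed in your lower-bound step: the relation $\sim$ of Definition~\ref{def:equivRelation} is defined from the fixed endpoints $s,t$, so a base pair created in an intermediate structure between positions of two different classes does not ``instantiate'' $\sim$, and indirect pk-$MS_2$ trajectories are not automatically confined to single classes; the per-class lower bound should instead be argued with the potential $|u\bigtriangleup t|$ restricted to each class.
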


This lemma is useful, since the pk-$MS_2$ distance provides a lower bound 
for the $MS_2$ distance between any two secondary structures, and hence
allows a straightforward, but slow (exponential time)
branch-and-bound algorithm 
to be implemented for the exact $MS_2$ distance -- pseudocode for the
branch-and-bound algorithm is given in 
Section~\ref{section:branchAndBoundAlgorithm} of the Appendix.
To compute pk-$MS_2$ distance, we remove those base pairs in $s-t$ that
are not touched by $t$, compute the equivalence classes (connected components)
$X$ on the set of positions belonging to the remaining base pairs (provided that
the position does not belong to a common base pair of both $s$ and $t$),
then determine for each $X$ a minimum length pk-$MS_2$ folding pathway from
$s \upharpoonright{X}$ to $t \upharpoonright{X}$. The formal pseudocode
follows.

\begin{algorithmPeter}[pk-$MS_2$ distance]
\label{algo:ms2pathLengthArbitraryStr}
$MS_2$-path length between two possibly pseudoknotted structures $s,t$.
\end{algorithmPeter}
\bigskip
\begin{small}
\mverbatim
 1. remove from $s$ all base pairs of $BP_1$ 
 2. $numMoves = ||BP_1||$ 
 3. $Q = A \cup B_0$ 
 4. while $Q \ne \emptyset$ \{
 5.   $x_0 = \min(Q)$; $X=[x_0]$ //$X$ is equivalence class of $x_0$
 6.   determine path type of $X$
 7.   compute minimum length folding pathway from $s \upharpoonright{X}$ to $t \upharpoonright{X}$
 8.   $numMoves = numMoves + \max(||s\upharpoonright{X}||, ||t\upharpoonright{X}||)$
 9. add to $s$ all base pairs in $BP_2$ 
10. $numMoves = numMoves + ||BP_2||$
11. return $numMoves$
|mendverbatim
\end{small}

Straightforward details of how to implement line 7 are given in the
Appendix.  The principle underlying the reason that 
Algorithm~\ref{algo:ms2pathLengthArbitraryStr} produces a minimum
length (pseudoknotted) $MS_2$ folding trajectory from $s$ to $t$ is that
we {\em maximize} the number of shift moves, since a single shift move from
$\{ x,y \} \in s$ to $\{ y,z \} \in t$ corresponds to the simultaneous
removal of $\{ x,y\}$ and addition of $\{y,z\}$. We apply this principle
in the next section to determine the minimum 
length (non-pseudoknotted) $MS_2$ folding trajectory from $s$ to $t$.

\section{$MS_2$ distance between secondary structures}
\label{section:exactIPalgorithm}

In this section, we present an integer programming (IP) algorithm to 
compute the $MS_2$ distance between any two secondary structures $s,t$, i.e.
the minimum length of an $MS_2$ trajectory from $s$ to $t$.
Our algorithm has been cross-checked with the exhaustive
branch-and-bound algorithm mentioned at the end of the last section.

As in the previous section, our goal is to maximize the number of shift
operations in the $MS_2$ trajectory, formalized in the following simple
theorem, whose proof is clear.

\begin{theorem}
Suppose that the $MS_2$ distance between secondary structures $s,t$ is
$k$, i.e. base pair distance $d_{BP}(s,t) = |s-t| + |t-s| = k$.
Suppose that $\ell$ is the number of shift moves occurring
in a minimum length $MS_2$ refolding trajectory $s=s_0,s_1,\ldots,s_{m} = t$ 
from $s$ to $t$.
Then the $MS_2$ distance between $s$ and $t$ equals
\begin{align}
\label{thm:maxShiftMoves}
d_{MS_2}(s,t) = \ell + (k-2 \ell) = k-\ell
\end{align}
\end{theorem}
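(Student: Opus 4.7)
The plan is to count the moves of a minimum length $MS_2$ trajectory by type and use a straightforward conservation argument. Reading the hypothesis as $d_{BP}(s,t)=|s-t|+|t-s|=k$, let $r$, $a$, $\ell$ denote, respectively, the numbers of removal, addition, and shift moves in a fixed minimum length trajectory $s=s_0,s_1,\ldots,s_m=t$, so that the trajectory length is $m=r+a+\ell$.

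First, I would track the ``progress'' potential $\delta_i = |s_i \triangle t|$, with $\delta_0 = k$ and $\delta_m = 0$. A single base pair addition or removal changes $\delta_i$ by exactly $\pm 1$, while a shift $\{x,y\}\to\{y,z\}$ changes $\delta_i$ by $-2$, $0$, or $+2$ depending on whether the removed pair was in $s_i\setminus t$ (resp.\ $s_i\cap t$) and whether the added pair is in $t\setminus s_i$ (resp.\ outside $t$). Summing the contributions of the $m$ moves gives $(r+a)+2\ell \geq k$, hence $r+a+\ell \geq k-\ell$, so $m\geq k-\ell$.

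Next, I would show that equality must hold by an exchange/minimality argument: in a minimum trajectory no move is wasted. Concretely, if some removal takes a base pair not in $s\setminus t$, or some addition adds a base pair not in $t\setminus s$, or some shift is not of the form $\{x,y\}\in s_i\setminus t \to \{y,z\}\in t\setminus s_i$, then some base pair is added and later removed (or removed and later added). Pairing off such inefficient moves and excising them yields a strictly shorter valid trajectory from $s$ to $t$, contradicting minimality. Therefore every move achieves its maximum reduction of $\delta$, which forces $(r+a)+2\ell = k$, i.e.\ $m = r+a+\ell = k-\ell$.

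The delicate step is the exchange argument, since secondary structures are constrained to be non-crossing and to respect $\theta$, so simply deleting a ``wasted'' add/remove pair need not leave a valid intermediate. However, this can be handled at the level of individual base pairs: each base pair $p$ has a trajectory of appearances/disappearances, and in a minimum trajectory we may assume each $p\in s\cap t$ is never touched, each $p\in s\setminus t$ is removed or shifted away exactly once (and never re-added), and each $p\in t\setminus s$ is added or shifted in exactly once. This yields $r+\ell=|s-t|$ and $a+\ell=|t-s|$, whose sum is the desired identity $r+a+2\ell = k$.
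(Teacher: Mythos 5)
The paper itself gives no argument here (the theorem is introduced as ``a simple theorem, whose proof is clear''), so your proposal must stand on its own, and it has a genuine gap in the equality step. The lower-bound half is fine: tracking $\delta_i=|s_i\bigtriangleup t|$ gives $(r+a)+2\ell\ge k$, hence $m\ge k-\ell$ for any trajectory with $\ell$ shifts. The problem is your claim that in a minimum-length trajectory every move is maximally productive, argued by ``pairing off inefficient moves and excising them'' to get a \emph{strictly} shorter trajectory. When the wasted add/remove events are halves of shift moves rather than standalone moves, the excision does not strictly shorten. Concretely, take $s=\{(1,10)\}$ and $t=\{(5,14)\}$ over a sequence making $(1,10)$, $(10,14)$, $(5,14)$ all legal pairs (each satisfies $j-i>\theta=3$). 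Then $k=2$ and $d_{MS_2}(s,t)=2$, attained both by remove-then-add ($\ell=0$, consistent with $k-\ell$) and by the two shifts $(1,10)\rightarrow(10,14)\rightarrow(5,14)$ ($\ell=2$, for which $k-\ell=0\ne 2$). The second trajectory is minimum-length yet passes through the pair $(10,14)\notin s\cup t$; replacing its two unproductive shifts by a removal plus an addition yields an \emph{equally long} trajectory, so no contradiction with minimality arises and your exchange argument cannot exclude it. (This also shows the theorem is false as literally stated: the shift count of ``a minimum length trajectory'' is not well defined across all minimum-length trajectories.)

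Your per-base-pair accounting in the last paragraph is the right tool, but it proves only $m=(r+\ell)+(a+\ell)-\ell\ge |s-t|+|t-s|-\ell=k-\ell$, with equality precisely when every removal or shift-source is a distinct pair of $s-t$, every addition or shift-target is a distinct pair of $t-s$, and no pair outside $s\bigtriangleup t$ is ever created; the phrase ``we may assume'' conceals exactly the fact that needs proof, and as the example shows it cannot be assumed of an \emph{arbitrary} minimum-length trajectory. The statement that can be salvaged, and the one the paper actually uses to justify maximizing shifts, is that $d_{MS_2}(s,t)=k-\ell^{*}$ where $\ell^{*}$ is the number of shifts in some minimum-length trajectory all of whose shifts carry a pair of $s-t$ onto a pair of $t-s$. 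To complete the proof you need the non-strict version of your replacement argument: any trajectory can be transformed, without increasing its length, into one in which every move is productive (removals of non-$s$ pairs and shift chains through pairs outside $s\bigtriangleup t$ are replaced move-for-move by at most as many removals/additions, and removals remain valid since deleting a base pair never creates a crossing or triple); the identity then holds for that canonical trajectory, and the lower bound shows its length is optimal.
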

Our strategy will now be to use a graph-theoretic approach to maximize the
number of shift moves.

\subsection{RNA conflict digraph}
\label{subsection:RNAconflictDigraph}

Throughout this section, we take $s,t$ to be two arbitrary, distinct, but fixed
secondary structures of the RNA sequence $a_1,\ldots,a_n$.
Recall the definitions of $A,B,C,D$ in equations
(\ref{eqn:defA}--\ref{eqn:defD}), so that 
$A$ is the set of positions $x \in [1,n]$ that are base-paired in both
$s$ and $t$, but the base pairs in $s$ and $t$ are not identical; 
$B$ is the set of positions $x \in [1,n]$ that are base-paired in one of
$s$ or $t$, but not both;
$C$ is the set of positions $x \in [1,n]$ that are base-paired in neither
$s$ nor $t$, and 
$D$ is the set of positions $x \in [1,n]$ that are base-paired to the same
partner in both $s$ and $t$.  

To determine a minimum length $MS_2$ folding trajectory from secondary
structure $s$ to secondary structure $t$ is to maximize the number of
shift moves and minimize the number of base pair additions and removals. 
To that end, note that the
base pairs in $s$ that do not touch any base pair of $t$ must be removed
in any $MS_2$ path from $s$ to $t$, since there is no shift of such base 
pairs to a base pair of $t$ -- such base pairs are exactly those in
$BP_1$, defined in equation~(\ref{eqn:defBP1}).
Similarly, note that the
base pairs in $t$ that do not touch any base pair of $s$ must occur must be
added, in the transformation of $s$ to $t$, since there is no shift of any 
base pair from s to obtain such base pairs of $t$ --
such base pairs are exactly those in
$BP_2$, defined in equation~(\ref{eqn:defBP2}).
We now focus on the remaining base pairs
of $s$, all of which touch a base pair of $t$, and hence could theoretically
allow a shift move in transforming $s$ to $t$, {\em provided} that there
is no base triple or 
pseudoknot introduced by performing such a shift move. 
Examples of all six possible types of shift move are illustrated in
Figure~\ref{fig:sixShiftMovesRedGreen}.  To handle
such cases, we define the notion of {\em RNA conflict digraph}, solve
the {\em feedback vertex set} (FVS) problem
\cite{karpNPcomplete} by integer programming (IP), 
apply topological sorting \cite{cormen} 
to the acyclic digraph obtained by  removing a minimum set of vertices
occurring in feedback loops, then apply
shift moves in topologically sorted order. We now formalize this argument.

Define the digraph $G=(V,E)$, whose vertices
(or nodes) $n \in V$ are defined in the following
Definition~\ref{def:digraphNode} and whose directed edges are defined in 
Definition~\ref{def:digraphEdge}. 

\begin{definition} [Vertex in an RNA conflict digraph] \hfill\break
\label{def:digraphNode}
If $s,t$ are distinct secondary structures for the RNA sequence
$a_1,\ldots,a_n$, then a vertex in the RNA conflict digraph $G=G(s,t)$ is 
a triplet node, or more simply, node $v = (x,y,z)$ 
consisting of integers $x,y,z$, such that the base pair
$\{ x,y \}_< = (\min(x,y),\max(x,y))$ belongs to $t$, and the base pair
$\{ y,z \}_< = (\min(y,z),\max(y,z))$ belongs to $s$. Let $v.t$ [resp. 
$v.s$] denote the base pair $\{x,y\}_<$ [resp. $\{ y,z \}_<$] belonging
to $t$  [resp. $s$].  The middle integer $y$ of node $v=(x,y,z)$ is called
the {\em pivot} position, since it is common to both $s$ and $t$. Nodes
are ordered by the integer ordering of their pivot positions:
$(x,y,z) \preceq (x',y',z')$ if and only if $y \leq y'$
(or $y=y'$ and $x<x'$, or $y=y'$, $x=x'$, and $z<z'$).
If $v=(x,y,z)$ is a node, then $flatten(v)$ is
defined to be the set $\{ x,y,z\}$ of its coordinates.
\end{definition}
Nodes are representations of a potential shift move, and
can be categorized into six types, as shown in 
Figure~\ref{fig:sixShiftMovesRedGreen}.

\begin{definition}[Directed edge in an RNA conflict digraph] \hfill\break
\label{def:digraphEdge}
Base pairs $\{a,b\}_<$ and $\{c,d \}_<$ are said to touch if
$|\{a,b\} \cap \{c,d\}| = 1$;
in other words, base pairs touch if they form a base triple.
Base pairs $\{a,b\}_<$ and $\{ c,d\}_<$ are said to {\em cross} 
if either $\min(a,b) < \min(c,d) < \max(a,b) < \max(c,d)$ or
$\min(c,d) < \min(a,b) < \max(c,d) < \max(a,b)$;
in other words, base pairs cross if they form  a pseudoknot.
There is a directed edge from node
$n_1=(x_1,y_1,z_1)$ to node $n_2=(x_2,y_2,z_2)$, denoted by
$(n_1,n_2)\in E$ or equivalently by $n_1 \rightarrow n_2$, if
(1) $|flatten(n_1) \cap flatten(n_2)| \leq 1$, or in other words if
$n_1$ and $n_2$ overlap in at most one position, and
(2) the base pair $\{ y_1,z_1\}_< \in s$ from $n_1$ either touches or crosses
the base pair $\{ x_2,y_2\}_< \in t$ from $n_2$.
\end{definition}

Note that if the base pair $\{ y_1,z_1\}_< \in s$ from $n_1$ touches
the base pair $\{ x_2,y_2\}_< \in t$ from $n_2$, then it must be that
$z_1=x_2$; indeed, since each pivot node $y_1$ [resp. $y_2$] belongs to
a base pair of both $s$ and $t$, it cannot be that $z_1=y_2$ 
(because then $\{ y_1,z_1 \}_< \in s$ and $\{ y_2,z_2\}_< \in s$ would
form a base triple in $s$ at $z_1=y_2$), nor can it be that $y_1=x_2$
(because then $\{ x_1,y_1 \}_< \in t$ and $\{ x_2,y_2\}_< \in t$ would
form a base triple in $t$ at $y_1=x_2$).  Note as well that if
$n_1=(x_1,y_1,z_1)$ and $n_2=(x_2,y_2,z_2)$ are triplet nodes, then
$|flatten(n_1) \cap flatten(n_2)|=1$ implies that either
$n_1 \rightarrow n_2$ or $n_2 \rightarrow n_1$. 
Indeed, if there is a common element shared by
$n_1$ and $n_2$, then it cannot be a pivot element, since $s$ and $t$
cannot have a base triple. For the same reason, the common element cannot
belong to the base pairs $\{x_1,y_1\} \in t$ of $n_1$ and
$\{x_2,y_2\} \in t$ of $n_2$ (otherwise $t$ would contain a base triple),
nor can the common element
belong to the base pairs $\{y_1,z_1\} \in s$ of $n_1$ and
$\{y_2,z_2\} \in s$ of $n_2$ (otherwise $s$ would contain a base triple).
It follows that either $\{ x_1,y_1 \} \cap \{ y_2,z_2\} \ne \emptyset$,
or $\{ x_2,y_2 \} \cap \{ y_1,z_1\} \ne \emptyset$. From the assumption that
$|flatten(n_1) \cap flatten(n_2)|=1$, this implies that either
$n_2 \rightarrow n_1$ or that $n_1 \rightarrow n_2$, but not both.
Finally, note that if $n_1 = (x_1,y_1,z_1)$, $n_2 = (x_2,y_2,z_2)$ and
$|flatten(n_1) \cap flatten(n_2)|=2$, then there
are exactly three possibilities, all of which can be realized: 
\begin{enumerate}
\item
$n_1.t=n_2.t$, so that $\{ x_1,y_1 \} = \{ x_2,y_2 \}$, 
as in the example
$(1,5) \in s$, $(10,15)\in s$, $(5,10)\in t$, $n_1 = (10,5,1)$,
$n_2=(5,10,15)$;
\item
$n_1.s=n_2.s$, so that $\{ y_1,z_1 \} = \{ y_2,z_2 \}$,
as in the example
$(1,5) \in t$, $(10,15)\in t$, $(5,10)\in s$, $n_1 = (1,5,10)$,
$n_2=(15,10,5)$;
\item
$\{ x_1,z_1 \} = \{ x_2,z_2 \}$, as shown in 
Figure~\ref{fig:specialClosed2cycle}. This latter example will be called
a {\em closed 2-cycle}.
\end{enumerate}

These considerations produce the equivalent but sharper following definition.
\begin{definition}[Conflict digraph $G=(V,E)$]
\label{def:conflictDigraphG}
Let $s,t$ be distinct secondary structures for the RNA sequence 
$a_1,\ldots,a_n$. The RNA confict digraph 
$G(s,t)=(V(s,t),E(s,t))$, or $G=(V,E)$ when $s,t$ are clear from 
context, is defined by
\begin{align}
\label{eqn:conflictDigraphVertexSetV}
V &= \{ (x,y,z): x,y,z \in [1,n] \land \{x,y \} \in t \land \{y,z\} \in s \} \\
& \nonumber \\
E &= \Big\{ (n_1,n_2): 
n_1=(x_1,y_1,z_1)\in V \land n_2=(x_2,y_2,z_2) \in V \land  \nonumber \\
& |flatten(n_1) \cap flatten(n_2)| \leq 1 \land 
\Big( z_1=x_2 \lor \nonumber \\
& \Big( 
\left[
\min(y_1,z_1) < \min(x_2,y_2) < \max(y_1,z_1) < \max(x_2,y_2) \right] \lor
\nonumber \\ 
& \left[
\min(x_2,y_2) < \min(y_1,z_1) < \max(x_2,y_2) < \max(y_1,z_1) \right]
\Big) \Big\} 
\label{eqn:conflictDigraphEdgeSetE}
\end{align}
\end{definition}
The set of directed edges of conflict digraph $G=(V,E)$, as defined in
Definition~\ref{def:conflictDigraphG},
establishes a {\em partial ordering} on vertices of $V$ 
with the property that $n_1 \rightarrow n_2$ holds for vertices
$n_1 = (x,y,z)$, $n_2 = (u,v,w)$  if and only if 
(1) $n_1$ and $n_2$ overlap in at most one position, and
(2) when shift move
$n_2$ is applied, shifting $\{ v,w\} \in s$ to $\{ u,v\} \in t$,
the base pair $\{ u,v \}$ either touches or crosses the base pair
$\{ y,z \} \in s$ in $n_1$. It follows that
if $n_1 \rightarrow n_2$, then the shift move in which
$\{ y,z\} \in s$ shifts to $\{ x,y\} \in t$ {\em \underline{must}} be
performed {\em \underline{before}} the shift move where
$\{ v,w\} \in s$ shifts to $\{ u,v\} \in t$ -- indeed, if shifts are performed
in the opposite order, then after shifting $\{ v,w\} \in s$ to $\{u,v\} \in t$
and before shifting $\{ y,z\} \in s$ to $\{ x,y \} \in t$, we would create
either a base triple or a pseudoknot.  

Strictly speaking, the overlap condition (1) is not a necessary requirement,
and in our first IP algorithm to compute $MS_2$ distance \cite{wabi2017},
we considered a somewhat
more general edge relation without condition (1). If distinct vertices
$n_1,n_2$ violate condition (1), then
$|flatten(n_1)\cap flatten(n_2)| = 2$, and the constraint
($\ddag$) in line 7 of Algorithm~\ref{algo:MS2path} would ensure that at most 
one of $n_1,n_2$ are selected by the IP solver for the set $\Vbar$, in the 
resulting acyclic digraph $\Gbar = (\Vbar,\Ebar)$. 
Nevertheless, the
run time of our algorithm depends heavily on the number of simple directed
cycles in the initial conflict digraph $G=(V,E)$. Without condition (1),
nodes $n_1, n_2$ in a closed 2-cycle (see Figure~\ref{fig:specialClosed2cycle})
satisfy $n_1 \rightarrow n_2$ and $n_2 \rightarrow n_1$. Since it is possible
that $n_1$ belong to other cycles that do not contain $n_2$, this can
(and does) lead to a greatly increased number of directed cycles, hence
much longer run time -- indeed, our algorithm in \cite{wabi2017} runs
10 times slower than the current algorithm.

Before proceeding with the description of the algorithm, we must explain
how to treat {\em closed 2-cycles}, as shown in
Figure~\ref{fig:specialClosed2cycle}, for which there exist four integers 
$a_1<a_2<a_3<a_4$,
such that either Case A or Case B holds.
\smallskip

\noindent
{\sc \bf Case A:}
Base pairs $(a_1,a_2)$ and $(a_3,a_4)$ belong to $t$,
while base pairs $(a_1,a_4)$ and $(a_2,a_3)$ belong to $s$, 
as shown in Figure~\ref{fig:specialClosed2cycle}a.

\noindent
In this case, the conflict digraph $G=(V,E)$ contains the following 4 vertices
$v_1 = (a_1,a_2,a_3)$ of type 1, $v_2 = (a_3,a_4,a_1)$ of type 5,
$v_3 = (a_2,a_1,a_4)$ of type 4, and $v_4 = (a_4,a_3,a_2)$ of type 2. 
The overlap of any two distinct vertices has size 2, so 
by Definition~\ref{def:conflictDigraphG}, there can be no directed edge
between any vertices.  There are four optimal trajectories of size 3; for
specificity we select the following optimal trajectory:
\begin{align}
\label{eqn:trajectoryClosed2cycleA}
&\mbox{remove $(a_1,a_4)$ from $s$}
\end{align}
\smallskip

\noindent
{\sc \bf Case B:}
Base pairs $(a_1,a_2)$ and $(a_3,a_4)$ belong to $s$,
while base pairs $(a_1,a_4)$ and $(a_2,a_3)$ belong to $t$,
as shown in Figure~\ref{fig:specialClosed2cycle}b.

\noindent
In this case, the conflict digraph $G=(V,E)$ contains the following 4 vertices
$v_1 = (a_1,a_4,a_3)$  of type 6,
$v_2 = (a_4,a_1,a_2)$  of type 3,
$v_3 = (a_2,a_3,a_4)$  of type 1, and
$v_4 = (a_3,a_2,a_1)$  of type 2.
The overlap of any two distinct vertices has size 2, so 
by Definition~\ref{def:conflictDigraphG}, there can be no directed edge
between any vertices.  There are four optimal trajectories of size 3; for
specificity we select the following optimal trajectory:
\begin{align}
\label{eqn:trajectoryClosed2cycleB}
&\mbox{remove $(a_1,a_2)$ from $s$}
\end{align}

In Algorithm~\ref{algo:MS2path} below, it is necessary to 
list all closed 2-cycles, as depicted in 
Figure~\ref{fig:specialClosed2cycle}. This can be done in linear time
$O(n)$, for RNA sequence ${\bf a}=a_1,\ldots,a_n$ and secondary structures
$s,t$ by computing equivalence classes as 
defined in Definition~\ref{def:equivRelation}, then inspecting
all size 4 equivalence classes $X = \{ a_1,a_2,a_3,a_4 \}$ to determine
whether Case A or Case B applies. For each such closed 2-cycle,
Algorithm~\ref{algo:MS2path} computes the partial trajectory
(\ref{eqn:trajectoryClosed2cycleA}) or
(\ref{eqn:trajectoryClosed2cycleB}) appropriately, then the vertices
$v_1,v_2,v_3,v_4$ are deleted. No edges need to be deleted, since there
are no edges between $v_i$ and $v_j$ for $1 \leq i,j \leq 4$.
In creating the partial trajectories, the variable {\tt numMoves} must be
updated.

By special treatment of closed 2-cycles, we obtain a
10-fold speed-up in the exact IP Algorithm \ref{algo:MS2path}
over that of the precursor Algorithm 10 in our
WABI 2017 proceedings paper \cite{wabi2017} -- compare run times
of  Figure~\ref{fig:runTimeBenchmark} with those from Figure 5 of
\cite{wabi2017}.
Except for the special case of closed 2-cycles that must be handled before
general treatment, note that Definition~\ref{def:conflictDigraphG}
establishes a partial ordering on vertices of the conflict digraph
$G=(V,E)$, in that edges determine the order in which
shift moves should be performed. Indeed, if $n_1 = \{ x,y,z \}$, 
$n_2 = \{ u,v,z \}$  and $(n_1,n_2)\in E$, which we denote from now
on by $n_1 \rightarrow n_2$, then the shift move in which
$\{ y,z\} \in s$ shifts to $\{ x,y\} \in t$ {\em \underline{must}} be
performed {\em \underline{before}} the shift move where
$\{ v,w\} \in s$ shifts to $\{ u,v\} \in t$ -- indeed, if shifts are performed
in the opposite order, then after shifting $\{ v,w\} \in s$ to $\{u,v\} \in t$
and before shifting $\{ y,z\} \in s$ to $\{ x,y \} \in t$, we would create
either a base triple or a pseudoknot. Our strategy to efficiently compute
the $MS_2$ distance between secondary structures $s$ and $t$ will be to
(1) enumerate all simple cycles in the conflict digraph $G=(V,E)$ and to
(2) apply an integer programming (IP) solver to solve the minimum feedback arc
set problem $V' \subset V$. Noticing that the {\em induced digraph}
$\Gbar=(\Vbar,\Ebar)$, where $\Vbar=V-V'$ and 
$\Ebar = E \cap (\Vbar \times \Vbar)$, is
acyclic, we then (3) topologically sort $\Gbar$, and (4) perform shift moves
from $\Vbar$ in topologically sorted order.

\begin{algorithmPeter}[$MS_2$ distance from $s$ to $t$] 
\label{algo:MS2path}
\hfill\break
{\sc Input:} Secondary structures $s,t$ for RNA sequence $a_1,\ldots,a_n$
\hfill\break
{\sc Output:} Folding trajectory
$s = s_0,s_1,\ldots,s_m = t$, where $s_0,\ldots,s_m$ are
secondary structures, $m$ is the minimum possible value for which
$s_{i}$ is obtained from $s_{i-1}$ by a single base pair addition, removal or
shift for each $i=1,\ldots,m$.
\end{algorithmPeter}
First, initialize the variable {\tt numMoves} to $0$, and the list
{\tt moveSequence} to the empty list {\tt [ ]}. Recall that
$BP_2 = \{ (x,y) : (x,y) \in t, (s-t)[x]=0, (s-t)[y]=0\}$.
Bear in mind that $s$ is constantly being updated, so actions performed
on $s$ depend on its current value.
\bigskip
\begin{small}
\mverbatim
  //remove base pairs from $s$ that are untouched by $t$
 1. $BP_1 = \{ (x,y) : (x,y) \in s, (t-s)[x]=0, (t-s)[y]=0\}$
 2. for $(x,y) \in BP_1$
 3.   remove $(x,y)$ from $s$; numMoves = numMoves+1
  //define conflict digraph $G=(V,E)$ on updated $s$ and unchanged $t$
 4. define $V$ by equation (\ref{eqn:conflictDigraphVertexSetV})
 5. define $E$ by equation (\ref{eqn:conflictDigraphEdgeSetE})
 6. define conflict digraph $G=(V,E)$ 
  //IP solution of minimum feedback arc set problem
 7. maximize $\sum_{v \in V} x_v$ where $x_v \in \{0,1\}$, subject to constraints ($\dag$) and ($\ddag$)
  //constraint to remove vertex from each simple cycle of $G$
  ($\dag$) $\sum\limits_{v \in C} x_v < ||C||$ for each simple directed cycle $C$ of $G$ 
  //constraint to ensure shift moves cannot be applied if they share same base pair from $s$ or $t$
  ($\ddag$) $x_{v} + x_{v'} \leq 1$, for all pairs of vertices $v=(x,y,z)$ and $v'=(x',y',z')$ with $||\{ x,y,z\} \cap \{ x',y',z'\}||=2$
  //define IP solution acyclic digraph $\Gbar = (\Vbar,\Ebar)$
 8. $\Vbar = \{ v \in V: x_v = 1\}$; $V' = \{ v \in V: x_v = 0\}$
 9. $\Ebar = \{ (v,v'):  v,v' \in \Vbar \land (v,v') \in E\}$
10. $\Gbar = (\Vbar,\Ebar)$
  //handle special, closed 2-cycles
11. for each closed 2-cycle $[x]=\{a_1,a_2,a_3,a_4\}$ as depicted in Figure \ref{fig:specialClosed2cycle}
12.   if $[x]$ is of type A as depicted in Figure \ref{fig:specialClosed2cycle}a
13.     remove base pair from $s$ by equation (\ref{eqn:trajectoryClosed2cycleA}) 
14.   if $[x]$ is of type B as depicted in Figure \ref{fig:specialClosed2cycle}b
15.     remove base pair from $s$ by equation (\ref{eqn:trajectoryClosed2cycleB}) 
  //remove base pairs from $s$ that are not involved in a shift move
16. $\Vbar.s = \{ (x,y): \exists v \in \Vbar ( v.s = (x,y) ) \}$
17. for $(x,y) \in s-t$
18.   if $(x,y) \not\in \Vbar.s$
19.     remove $(x,y)$ from $s$; numMoves = numMoves+1 
  //topological sort for IP solution $\Gbar = (\Vbar,\Ebar)$
20. topological sort of $\Gbar$ using DFS \cite{cormen} to obtain total ordering $\prec$ on $\Vbar$
21. for $v=(x,y,z) \in \Vbar$ in topologically sorted order $\prec$
22.   shift $\{y,z\}$ to $\{x,y\}$ in $s$; numMoves = numMoves+1
  //add remaining base pairs from $t-s$, e.g. from $BP_2$ and type 4,5 paths in Figure \ref{fig:redBluePathsCycles} 
23. for $(x,y)\in t-s$
24.   add $(x,y)$ to $s$; numMoves = numMoves+1
25. return folding trajectory, numMoves
|mendverbatim
\end{small}
\bigskip

We now illustrate the definitions and the execution of the algorithm 
for a tiny example where $s = \{ (1,5), (10,15), (20,25) \}$ and
$t=\{ (5,10), (15,20) \}$. From Definition~\ref{def:equivRelation},
there is only one equivalence class $X=\{ 1,5,10,15,20,25 \}$ and 
it is a path of type 1, as illustrated 
in Figure~\ref{fig:redBluePathsCycles}, where $b_1=1$, $a_1=5$,
$b2=10$, $a_2=15$, $b_3=20$, $a_3=25$. From
Definition~\ref{def:digraphNode}, there are 4 vertices in
the conflict digraph $G=(V,E)$, where $v_1 = (10,5,1)$, 
$v_2 = (5,10,15)$, $v_3 = (20,15,10)$, $v_4 = (15,20,25)$ --
recall the convention from that definition that vertex $v=(x,y,z)$
means that base pair $\{ y,z \} \in s$ and base pair $\{ x,y\} \in t$,
so that the pivot position $y$ is shared by base pairs from both 
$s$ and $t$. From
Definition~\ref{def:digraphEdge}, there are only two directed edges,
$v_1 \rightarrow v_3$ since $v_1.s$ touches $v_2.t$, and
$v_2 \rightarrow v_4$ since $v_2.s$ touches $v_4.t$. Note there is
no edge from $v_1$ to $v_2$, or from $v_2$ to $v_3$, or from
$v_3$ to $v_4$, since their overlap has size 2 -- for instance
$flatten(v_1)=\{ 1,5,10\}$, $flatten(v_2)=\{ 5,10,15\}$, and
$flatten(v_1) \cap flatten(v_2)= \{ 5,10\}$ of size 2. There is
no cycle, so the constraint ($\dag$) in line 7 of
Algorithm~\ref{algo:MS2path} is not applied; however the constraint
($\ddag$) does apply, so that 
$x_{v_1}+x_{v_2} \leq 1$, $x_{v_2}+x_{v_3} \leq 1$,
$x_{v_3}+x_{v_4} \leq 1$. It follows that there are three possible
IP solutions for the vertex set $\Vbar$.
\smallskip

\noindent
{\sc Case 1:} $\Vbar = \{ v_1,v_3\}$ \hfill\break
Then $v_1.s = (1,5)$, $v_3.s = (10,15)$ so
$\Vbar.s =  \{ (1,5),(10,15)\}$ and by lines 11-14 we remove
base pair $(20,25)$ from $s$.
Now $\Gbar = (\Vbar,\Ebar)$, where $\Ebar=\{v_1 \rightarrow v_3\}$, so 
topological sort is trivial and we complete the trajectory by 
applying shift $v_1$ and then shift $v_3$. Trajectory length is 5.
\smallskip

\noindent
{\sc Case 2:} $\Vbar = \{ v_1,v_4\}$ \hfill\break
Then $v_1.s = (1,5)$, $v_4.s = (20,25)$ so
$\Vbar.s =  \{ (1,5),(20,25)\}$ and by lines 11-14 we remove
base pair $(10,15)$ from $s$.
Now $\Gbar = (\Vbar,\Ebar)$, where $\Ebar = \emptyset$, so
topological sort is trivial and we complete the trajectory by 
applying shift $v_1$ and then shift $v_4$, or by applying
shift $v_4$ and then shift $v_1$. Trajectory length is 5.
\smallskip

\noindent
{\sc Case 3:} $\Vbar = \{ v_2,v_4\}$ \hfill\break
Then $v_2.s = (10,15)$, $v_4.s = (20,25)$ so
$\Vbar.s =  \{ (10,15),(20,25)\}$ and by lines 11-14 we remove
base pair $(1,5)$ from $s$.
Now $\Gbar = (\Vbar,\Ebar)$, where $\Ebar = \{ v_2 \rightarrow v_4 \}$,
so topological sort is trivial and we complete the trajectory by 
applying shift $v_2$ and then shift $v_4$.  Trajectory length is 5.

\subsection{Examples to illustrate Algorithm~\ref{algo:MS2path}}
\label{section:traceOfMainAlgorithm}

We illustrate concepts defined so far with three examples: a toy 20 nt
RNA sequence, a 25 nt bistable switch, and the 56 nt spliced leader RNA from 
{\em L. collosoma}.

\subsubsection{Toy 20 nt sequence}

For the toy 20 nt sequence GGGAAAUUUC CCCAAAGGGG with initial structure $s$
shown in Figure~\ref{fig:cycleExample}a, and target structure $t$
shown in Figure~\ref{fig:cycleExample}b, the corresponding conflict
digraph is shown in Figure~\ref{fig:cycleExample}c.
This is a toy example, since the 
empty structure is energetically more favorable than either
structure: free energy of $s$ is +0.70 kcal/mol, while that for $t$ is
+3.30 kcal/mol. The conflict digraph contains 6 vertices, 10 directed
edges, and 3 simple cycles:
a first cycle
$\{(8, 20, 10), (9, 19, 11), (18, 10, 20), (19, 9, 1)\}$ of size 4,
a second cycle $\{[(8, 20, 10), (19, 9, 1)\}$ of size 2, and a third
cycle $\{(18, 10, 20), (9, 19, 11)\}$ of size 2.

\subsubsection{Bistable switch}

Figure~\ref{fig:conflictDigraphMicuraSwitch} depicts the secondary
structure for the metastable and the MFE structures, as well as the
corresponding conflict digraphs for the 25 nt bistable switch, with sequence
UGUACCGGAA GGUGCGAAUC UUCCG, taken from 
Figure 1(b).1 of \cite{Hobartner.jmb03}, in which the authors report
structural probing by comparative imino proton NMR spectroscopy.
The minimum free energy (MFE) structure has -10.20 kcal/mol, while
the next metastable structure has -7.40 kcal/mol.
Two lower energy structures exist, having -9.00 kcal/mol
resp. -7.60 kcal/mol; however, each is a minor variant of the MFE structure.
Figures~\ref{fig:conflictDigraphMicuraSwitch}a and
\ref{fig:conflictDigraphMicuraSwitch}b depict respectively the metastable 
and the MFE secondary structures for this 25 nt RNA, while 
Figures~\ref{fig:conflictDigraphMicuraSwitch}c and
\ref{fig:conflictDigraphMicuraSwitch}d depict respectively the
MFE conflict digraph and the metastable conflict digraph. 

For this 25 nt bistable switch, let $s$ denote the metastable structure and
$t$ denote the MFE structure. We determine the following.
, then we have the following.
\begin{align*}
s &= [(1, 16), (2, 15), (3, 14), (4, 13), (5, 12), (6, 11)] \mbox{ with
6 base pairs} \\ 
t &= [(6, 25), (7, 24), (8, 23), (9, 22), (10, 21), (11, 20), (12, 19), (13, 18)]  \mbox{ with 8 base pairs}\\
A &= \{6, 11, 12, 13\} \\
B &= \{1, 2, 3, 4, 5, 7, 8, 9, 10, 14, 15, 16, 18, 19, 20, 21, 22, 23, 24, 25\}\\
C &= \{ 17 \} \\
D &= \emptyset\\
BP_1 &= \{(1, 16), (2, 15), (3, 14) \} \mbox{ with 3 base pairs}\\
BP_2 &= \{ (7, 24), (8, 23), (9, 22), (10, 21) \} \mbox{ with 4 base pairs}\\
B_0 &= \{4, 5, 18, 19, 20, 25\}\\
B_1 &= \{1, 2, 3, 14, 15, 16 \}\\
B_2 &= \{7, 8, 9, 10, 21, 22, 23, 24 \}
\end{align*}
and there are three equivalence classes:
$X_1 = \{4, 13, 18\}$ of type 2, $X_2 = \{5, 12, 19\}$ of type 2, and
$X_3 = \{6, 11, 20, 25\}$ of type 4.

Figure~\ref{fig:conflictDigraphMicuraSwitch}c depicts the MFE
conflict digraph, where $s$ denotes the metastable structure and $t$ denotes  
the MFE structure. 
In the MFE conflict digraph $G=(V,E)$, vertices
are triplet nodes $(x,y,z)$,
where (unordered) base pair $\{ y,z\} \in s$ belongs to the
metastable [resp. MFE] structure, and (unordered) base pair $\{x,y\}\in t$
belongs to the MFE [resp. metastable] structure. A direct edge
$(x,y,z) \rightarrow (u,v,w)$ occurs if $\{ y,z\}\in s$ touches or crosses
$\{u,v\}\in t$.
Both the MFE and the metastable conflict digraphs are acyclic. Although
there are no cycles, the IP solver is nevertheless invoked in line 7 with
constraint ($\ddag$), resulting in {\em either} a first solution 
$\Vbar = \{ (18, 13, 4), (19, 12, 5), (20, 11, 6)\}$ or a second solution
$\Vbar = \{ (18, 13, 4), (19, 12, 5), (25, 6, 11)\}$. Indeed, the
overlap of vertices $(20, 11, 6)$ and $(25, 6, 11)$ has size 2, so one of
these vertices must be excluded from $\Vbar$ in 8 of 
Algorithm~\ref{algo:MS2path}. Assume that the first solution is returned
by the IP solver. Then we obtain the following minimum length $MS_2$
folding trajectory from metastable $s$ to MFE $t$.

Vertex and edge set of $G=(V,E)$ are given by the following.
\begin{align*}
V &= \{(18, 13, 4), (19, 12, 5), (20, 11, 6), (25, 6, 11)\}\\
E &= \{(18, 13, 4) \rightarrow (19, 12, 5), (18, 13, 4)
 \rightarrow (20, 11, 6), (18, 13, 4)  \rightarrow (25, 6, 11), \\
&(19, 12, 5)  \rightarrow (20, 11, 6), (19, 12, 5)  \rightarrow (25, 6, 11)\}
\end{align*}
One of two minimum length $MS_2$ folding trajectories is given by the following.
\bigskip

\begin{small}
\mverbatim
 1. UGUACCGGAAGGUGCGAAUCUUCCG
 2. 1234567890123456789012345

 0. ((((((....)))))).........   metastable $s$
 1. .(((((....)))))..........   remove (1,16) 
 2. ..((((....))))...........   remove (2,15) 
 3. ...(((....)))............   remove (3,14) 
 4. ....((....))(....).......   shift (4,13) to (13,18) 
 5. .....(....)((....))......   shift (5,12) to (12,19) 
 6. ..........(((....))).....   shift (6,11) to (11,20) 
 7. ......(...(((....)))...).   add (7,24)  
 8. ......((..(((....)))..)).   add (8,23)  
 9. ......(((.(((....))).))).   add (9,22)
10. ......(((((((....))))))).   add (10,21) 
11. .....((((((((....))))))))   add (6,25)
|mendverbatim
\end{small}
\bigskip

Algorithm~\ref{algo:MS2path} executes the following steps:
(1) Remove base pairs in $BP_1$ from $s$.
(2) Compute conflict digraph $G=(V,E)$.
(3) Apply IP solver to determine maximum size $\Vbar \subseteq V$,
subject to removing a vertex from each cycle ($\dag$) and not allowing
any two vertices in $\Vbar$ to have overlap of size 2.
(4) Topologically sorting the induced digraph $\Gbar=(\Vbar,\Ebar)$.
(5) Execute shifts according to total ordering $\prec$ given by topological
sort.
(6) Add remaining base pairs from $t-s$.
Note that in trajectory steps 7-10, the base pair added comes from $BP_2$,
while that in step 11 is a base pair from $t$ that is ``leftover'', due
to the fact that triplet node (shift move)
$(25,6,11)$ does not belong to IP solution $\Vbar$.

\subsubsection{Spliced leader from {\em L. collosoma}}

For the 56 nt {\em L. collosoma} spliced leader RNA, whose switching
properties were investigated in \cite{lecuyerCrothers} by
stopped-flow rapid-mixing and temperature-jump measurements,
the MFE and metastable structures are shown in Figure~\ref{fig:lecuyer},
along with the conflict digraph for $MS_2$ folding from the metastable
structure to the MFE structure.  This RNA
has sequence AACUAAAACA AUUUUUGAAG AACAGUUUCU GUACUUCAUU GGUAUGUAGA
GACUUC, an MFE structure having -9.40 kcal/mol, and an alternate metastable
structure having -9.20 kcal/mol.
Figure~\ref{fig:lecuyer} displays the MFE and metastable structures for
{\em L. collosoma} spliced leader RNA, along with the conflict digraph
for $MS_2$ folding from the metastable to the MFE structure. 

For {\em L. collosoma} spliced leader RNA, if we let $s$ denote the metastable
structure and $t$ denote the MFE structure, then there are seven equivalence
classes:
$X_1 = \{10, 45, 31, 23\}$ of type 4;
$X_2 = \{11, 43, 33\}$ of type 3;
$X_3 = \{12, 42, 34, 20\}$ of type 4,
$X_4 = \{13, 41, 35, 19\}$ of type 4,
$X_5 = \{22, 32, 44\}$ of type 3,
$X_6 = \{24, 54, 30, 48\}$ of type 1, and
$X_7 = \{25, 53, 29, 49\}$ of type 1.
As in the case with the 25 nt bistable switch,
the equivalence classes for the situation where $s$ and $t$ are interchanged
are {\em identical}, although type 1 paths become type 4 paths (and vice versa),
and type 2 paths become type 3 paths (and vice versa). Output from our (optimal) IP algorithm is as follows.
\bigskip

\begin{small}
\mverbatim
AACUAAAACAAUUUUUGAAGAACAGUUUCUGUACUUCAUUGGUAUGUAGAGACUUC
12345678901234567890123456789012345678901234567890123456

Number of Nodes: 12
Number of edges: 71
Number of cycles: 5
s: .......................((((((((((((.....)))))..)))))))..  -9.20 kcal/mol
t: .......((((((..(((((.((((...)))).)))))..))).))).........  -9.40 kcal/mol

 0. .......................((((((((((((.....)))))..)))))))..	metastable $s$
 1. .......................((.(((((((((.....)))))..)))).))..	remove	(26,52)
 2. .......................((..((((((((.....)))))..)))..))..	remove	(27,51)
 3. .......................((...(((((((.....)))))..))...))..	remove	(28,50)
 4. .......................((....((((((.....)))))..)....))..	remove	(29,49)
 5. .......................((.....(((((.....))))).......))..	remove	(30,48)
 6. .......................((...).(((((.....)))))........)..	(25,53)	-> (25,29)
 7. .......................((...))(((((.....)))))...........	(24,54)	-> (24,30)
 8. .........(.............((...)).((((.....)))))...........	(31,45)	-> (10,45)
 9. .........(...........(.((...)).)(((.....))).)...........	(32,44)	-> (22,32)
10. .........((..........(.((...)).).((.....))).)...........	(33,43)	-> (11,43)
11. .........(((.........(.((...)).)..(.....))).)...........	(34,42)	-> (12,42)
12. .........(((......(..(.((...)).)..)......)).)...........	(35,41)	-> (19,35)
13. .......(.(((......(..(.((...)).)..)......)).).).........	add	(8,47)
14. .......(((((......(..(.((...)).)..)......)).))).........	add	(9,46)
15. .......(((((...(..(..(.((...)).)..)..)...)).))).........	add	(16,38)
16. .......(((((...((.(..(.((...)).)..).))...)).))).........	add	(17,37)
17. .......(((((...((((..(.((...)).)..))))...)).))).........	add	(18,36)
18. .......(((((...(((((.(.((...)).).)))))...)).))).........	add	(20,34)
19  .......(((((...(((((.((((...)))).)))))...)).))).........	add	(23,31)
20. .......((((((..(((((.((((...)))).)))))..))).))).........	add	(13,41)

Number of base pair removals: 5
Number of base pair additions: 8
Number of base pair shifts: 7
MS2 Distance: 20
|mendverbatim
\end{small}
\bigskip

Figure~\ref{fig:lecuyer}a depicts the initial structure $s$, and
Figure~\ref{fig:lecuyer}b depicts the target minimum free energy structure
$t$ for spliced leader RNA from {\em L. collosoma}. The conflict digraph
for the refolding from $s$ to $t$ is shown in
Figure~\ref{fig:lecuyer}c.
Figure~\ref{fig:LcollosomaBis}a displays the 
rainbow diagram for spliced leader RNA from {\em L. collosoma}, 
in which the base pairs for the
initial structure $s$ (Figure~\ref{fig:lecuyer}a)
are shown below the line in red, while those for the
target structure $t$ (Figure~\ref{fig:lecuyer}b)
are shown above the line in blue.
Figure~\ref{fig:LcollosomaBis}c displays the Arrhenius tree,
where leaf index 2 represents the initial metastable structure $s$ with free 
energy -9.20 kcal/mol as shown in 
Figure~\ref{fig:lecuyer}a, while
leaf index 1 represents the target MFE structure $t$ with free 
energy -9.40 kcal/mol as shown in Figure~\ref{fig:lecuyer}b.
In Figure~\ref{fig:LcollosomaBis}b,
the dotted blue line depicts the free energies of structures in the shortest 
$MS_2$ folding trajectory for spliced leader, as computed by 
Algorithm~\ref{algo:MS2path}, while the solid red line depicts the 
free energies of the energy-optimal folding trajectory as computed by
the programs {\tt RNAsubopt}
\cite{wuchtyFontanaHofackerSchuster} and {\tt barriers} \cite{flammHofacker}.

\subsubsection{xpt riboswitch from {\em B. subtilis}}

In this section, we describe the shortest $MS_2$ folding trajectory from
the initial gene ON structure $s$ 
to the target gene OFF structure $t$ 
for the 156 nt xanthine phosphoribosyltransferase (xpt) riboswitch from 
{\em B. subtilis}, where the sequence and secondary structures are taken from
Figure 1A of \cite{breaker:Riboswitch2}. The gene ON [resp. OFF] structures
for the 156 nt xpt RNA sequence
AGGAACACUC AUAUAAUCGC GUGGAUAUGG CACGCAAGUU UCUACCGGGC ACCGUAAAUG
UCCGACUAUG GGUGAGCAAU GGAACCGCAC GUGUACGGUU UUUUGUGAUA UCAGCAUUGC
UUGCUCUUUA UUUGAGCGGG CAAUGCUUUU UUUAUU
are displayed in 
Figure~\ref{fig:conflictDigraphXPT}a 
[resp.  \ref{fig:conflictDigraphXPT}b], while 
Figure~\ref{fig:conflictDigraphXPT}c shows the {\em rainbow} diagram, where
lower red arcs [resp. upper blue arcs] indicate the base pairs of the
initial gene ON [resp. target gene OFF] structure. The default structure
for the xpt riboswitch in {\em B. subtilis} is the gene ON structure;
however, the binding of a guanine nucleoside ligand to cytidine in position 66 
triggers a conformational change to the gene OFF structure.
Figure~\ref{fig:conflictDigraphXPT}d depicts the conflict digraph 
$G=(V,E)$ containing 18 vertices, 113 directed
edges, and 1806 directed cycles, which is used to
compute the shortest $MS_2$ folding trajectory from the gene ON to the
gene OFF structure.
Figures~\ref{fig:xptStructuresAndTrajectory}a and 
\ref{fig:xptStructuresAndTrajectory}b show an enlargement 
of the initial gene ON structure $s$ and target gene OFF structure $t$, which
allows us to follow the moves in a shortest $MS_2$ trajectory
that is displayed in  Figure~\ref{fig:xptStructuresAndTrajectory}c.

\section{An algorithm for near-optimal $MS_2$ distance}
\label{section:nearOptimalAlgorithm}

Since the exact IP Algorithm~\ref{algo:MS2path} could not compute
the shortest $MS_2$ folding trajectories between the minimum free energy
(MFE) structure and Zuker suboptimal structures for some Rfam sequences
of even modest size ($\approx 100$ nt), we designed a near-optimal IP
algorithm (presented in this section), and a greedy algorithm
(presented in Section~\ref{subsection:greedyAlgorithm} of the Appendix).
The exact branch-and-bound algorithm from 
Appendix~\ref{section:branchAndBoundAlgorithm} was used to debug and
cross-check all algorithms.

The run time complexity of both the exact IP Algorithm~\ref{algo:MS2path} and
the greedy algorithm is due to the possibly exponentially large set of
directed simple cycles in the RNA conflict digraph.  By designing
a 2-step process, in which the {\em feedback arc set} (FAS) problem is first
solved for a coarse-grained digraph defined below, and subsequently
the {\em feedback vertex set} (FVS) problem is solved for each equivalence
class, we obtain a much faster algorithm to compute a near-optimal  
$MS_2$ folding  trajectory between secondary structures $s$ and $t$ 
for the RNA sequence $\{ a_1,\ldots,a_n \}$.  In the first step, we use 
IP to solve the {\em feedback arc set} (FAS) problem for a particular 
coarse-grained digraph defined below, whose vertices
are the equivalence classes as defined in Definition~\ref{def:equivRelation}.
The number of cycles for this coarse-grained digraph is quite manageable,
even for large RNAs, hence the FAS can be efficiently solved.
After removal of an arc from each directed cycle, topological sorting is 
applied to determine a total ordering according to which each individual
equivalence class is processed. 
In the second step, Algorithm~\ref{algo:MS2path}  is applied to each
equivalence class in topologically sorted order, whereby the feedback
vertex set (FVS) problem is solved for the equivalence class under 
consideration.  In the remainder of this section, we fill in the details for
this overview, and then present pseudocode
for the near-optimal Algorithm~\ref{algo:nearOptimalMS2path}.

Given secondary structures $s$ and $t$ 
for the RNA sequence $\{ a_1,\ldots,a_n \}$, we partition
the set $[1,n]$ into disjoint sets $A,B,C,D$ as in 
Section~\ref{section:ms2distancePseudoknottedStr} by
following equations (\ref{eqn:defA}), (\ref{eqn:defB}), (\ref{eqn:defC}),
(\ref{eqn:defD}). The union $A \cup B$ is subsequently partitioned into 
the equivalence classes $X_1,\ldots,X_m$, defined 
in Definition~\ref{def:equivRelation}. Define the coarse-grain,
conflict digraph $G=(V,E)$, whose vertices are the 
indices of equivalence classes $X_1,\ldots,X_m$,
and whose directed edges $i \rightarrow j$ are defined 
if there exists a base pair $(x,y)\in s$, $x,y \in X_i$ which crosses 
a base pair $(u,v)\in t$, $u,v \in X_j$. Although there
may be many such base pairs $(x,y) \in s$ and $(u,v) \in t$, there is
only one edge between $i$ and $j$; i.e. $G$ is a directed graph, not
a directed multi-graph.
If $i \rightarrow j$ is an edge, then
we define 
$N_{i,j}$ to be the set of all base pairs $(u,v) \in s$,
$u,v \in X_i$ that cross some base pair $(u,v)\in t$, $u,v \in X_j$, and let
$n_{i,j}$ the number of base pairs in $N_{i,j}$.
Formally, given
equivalence classes $X_1,\ldots,X_m$, the coarse-grain, conflict
{\em digraph} $G=(V,E)$ is defined by

\begin{align}
\label{eqn:heuristicDigraphVertices}
V &= \{ 1,\ldots,m \}\\
\label{eqn:heuristicDigraphEdges}
E &= \Big\{ i \rightarrow j: 
\exists (x,y) \in s \exists (u,v) \in t \big[ 
x,y \in X_i \land u,v \in X_j \big] \Big \}
\end{align}
A directed edge from $i$ to $j$ may be denoted either by
$i \rightarrow j \in E$ or by $(i,j) \in E$. For each edge $i \rightarrow j$,
we formally define $N_{i,j}$ and $n_{i,j}$ by the following.
\begin{align}
\label{eqn:NijForCoarseGrainedDigraph}
N_{i,j} &=  \Big\{ (x,y) \in s: x,y \in X_i \land
\exists (u,v) \in t \big[ 
u,v \in X_j  \land \mbox{ $(x,y)$ crosses $(u,v)$ } \big] \Big \} \\
\label{eqn:nijForCoarseGrainedDigraph}
n_{i,j} &= \Big| N_{i,j} \Big|
\end{align}

We now solve the feedback arc set (FAS) problem, rather than the feedback 
vertex set (FVS) problem, for digraph $G$, by applying 
an IP solver to solve the following optimization problem: 

\begin{quote}
\begin{small}
\mverbatim
 1. maximize $\sum_{(i,j) \in E} n_{i,j} \cdot x_{i,j}$ subject to constraint ($\sharp$):
    ($\sharp$) $\sum\limits_{\stackrel{(i,j) \in E}{i,j \in C}} x_{i,j} < ||C||$
    for every directed cycle $C = \left( i_1,i_2,i_3,\ldots,i_{k-1},i_k \right)$
|mendverbatim
\end{small}
\end{quote}
This IP problem can be quickly solved, since there is usually only
a modest number
of directed cycles for the coarse-grained digraph. For each directed
edge or arc $i \rightarrow j$ that is to be removed from
a directed cycle, we remove {\em \underline{all}}
base pair $(x,y) \in $ from structure $s$ that cross some base pair 
$(u,v) \in t$ for which $u,v \in X_j$.
Removal of certain base pairs from $s$ can disconnect some
previous equivalence classes into two or more connected components,
hence equivalence classes $X'_1,\ldots,X'_{m'}$ must be recomputed for
the updated structure $s$ and  (unchanged) structure $t$. The conflict,
digraph $G'=(V',E')$ is then defined by equations
(\ref{eqn:heuristicDigraphVertices}) and (\ref{eqn:heuristicDigraphEdges})
for (updated) $s$ and (unchanged) $t$.  Since $G'$ is now
acyclic, it can be topologically sorted, which determines an ordering
$\sigma(1),\ldots,\sigma(m')$ for processing equivalence classes
$X'_1,\ldots,X'_{m'}$. To process an equivalence class $X'$, we restrict
the exact Algorithm~\ref{algo:MS2path} to each equivalence class. Indeed,
to process equivalence class $X'$, we define
a (local) conflict digraph $G(X') = (V(X'),E(X'))$ defined as follows.

\begin{align}
\label{eqn:defVofXprime}
V(X') &= \{ (x,y,z): x,y,z \in X'
\land \{x,y \} \in t \land \{y,z\} \in s\\
\label{eqn:defEofXprime}
E(X') &= \{ (x,y,z) \rightarrow (u,v,w): (x,y,z) \in V(X') \land
(u,v,w) \in V(X') \land \\
& \quad  
\mbox{ $\{ x,y\} $ touches or crosses $\{v,w\}$ }
\} \nonumber
\end{align}

\begin{algorithmPeter}[Near-optimal $MS_2$ distance from $s$ to $t$] 
\label{algo:nearOptimalMS2path}
\hfill\break
{\sc Input:} Secondary structures $s,t$ for RNA sequence $a_1,\ldots,a_n$
\hfill\break
{\sc Output:} $s = s_0,s_1,\ldots,s_m = t$, where $s_0,\ldots,s_m$ are
secondary structures, $m$ is a near-optimal value for which
$s_{i}$ is obtained from $s_{i-1}$ by a single base pair addition, removal or
shift for each $i=1,\ldots,m$.
\end{algorithmPeter}
First, initialize the variable {\tt numMoves} to $0$, and the list
{\tt moveSequence} to the empty list {\tt [~ ]}. Define
$BP_1 = \{ (x,y) : (x,y) \in t, (t-s)[x]=0, (t-s)[y]=0\}$; i.e.
$BP_1$ consists of those base pairs in $s$ which are not touched by any
base pair in $t$. Define
$BP_2 = \{ (x,y) : (x,y) \in t, (s-t)[x]=0, (s-t)[y]=0\}$; i.e.
$BP_2$ consists of those base pairs in $t$ which are not touched by any
base pair in $s$.

\bigskip
\begin{small}
\mverbatim
  //remove base pairs from $s$ that are untouched by $t$
 1. for $(x,y) \in BP_1$
 2.   $s = s - \{ (x,y) \}$
 3.   numMoves = numMoves + 1
  //define equivalence classes on updated $s,t$
 4. $[1,n] = A \cup B \cup C \cup D$
 5. determine equivalence classes $X_1,\ldots,X_m$ with union $A \cup B$
  //define conflict digraph $G=(V,E)$ on collection of equivalence classes
 6. define $V = \{ 1,\ldots,m\}$
 7. define $E = \{ (i,j): 1\leq i,j \leq m\}$ by equation (\ref{eqn:heuristicDigraphEdges})
 8. define coarse-grain, conflict digraph $G=(V,E)$ 
  //IP solution of feedback arc set problem (not feedback vertex set problem)
 9. maximize $\sum_{(i,j)  \in E} n_{i,j} \cdot x_{i,j}$ subject to constraint ($\sharp$):
  //remove arc from each simple cycle where$n_{i,j}$ defined in equation (\ref{eqn:nijForCoarseGrainedDigraph})
    ($\sharp$) $\sum\limits_{\stackrel{(i,j) \in E}{i \rightarrow j \in C}} x_{i,j} < ||C||$ 
    for every directed cycle $C = \left( i_1,i_2,\ldots,i_{k-1},i_k \right)$
10. $\widetilde{E} = \{ (i,j) : x_{i,j} = 0 \}$ //$\widetilde{E}$ is set of edges that must be removed
  //process the IP solution $\widetilde{E}$
11. for $(i,j) \in \widetilde{E}$
12.   for $(x,y) \in N_{i,j}$ //$N_{i,j}$ defined in Definition \ref{eqn:NijForCoarseGrainedDigraph}
13.     $s = s - \{ (x,y) \}$ //remove base pair from $s$ belonging to feedback arc 
14.     numMoves = numMoves + 1
  //determine equivalence classes ${X'}_1,\ldots,{X'}_{m'}$ for updated $s$ and (unchanged) $t$
15. $[1,n] = A' \cup B' \cup C' \cup D'$
16. define ${X'}_1,\ldots,{X'}_{m'}$ whose union is $(A' \cup B')$
17. define $V' = \{ 1,\ldots,m'\}$
18. define $E' = \{ (i,j): 1\leq i,j \leq m'\}$ by equation (\ref{eqn:heuristicDigraphEdges}) 
19. define $G'=(V',E')$ //note that $G'$ is an acyclic multigraph
20. let $\sigma \in S_{m'}$ be a topological sort of $V'$//$S_{m'}$ denotes set of all permutations on $[1,m']$ 
    //process shifts in $X_{\sigma(i)}$ in topological order by adapting part of Algorithm \ref{algo:MS2path} 
21. for $i=1$ to $m'$
22.   define $V(X_{\sigma(i)})$ by equation (\ref{eqn:defVofXprime})
23.   define $E(X_{\sigma(i)})$ by equation (\ref{eqn:defEofXprime})
24.   define $G(X_{\sigma(i)})= (V(X_{\sigma(i)}), E(X_{\sigma(i)}))$ 
      //IP solution of minimum feedback vertex set problem 
25.   maximize $\sum_{v \in V(X_{\sigma(i)})} x_v$ where $x_v \in \{0,1\}$, subject to constraints ($\dag_i$) and ($\ddag_i$) 
      //first constraint removes vertex from each simple cycle of $G(X_{\sigma(i)})$
        ($\dag_i$) $\sum\limits_{v \in C} x_v < ||C||$ for each simple directed cycle $C$ of $G(X_{\sigma(i)})$ 
      //ensure shift moves cannot be applied if they share same base pair from $s$ or $t$
        ($\ddag_i$) $x_{v} + x_{v'} \leq 1$, for distinct vertices $v=(x,y,z),v'=(x',y',z')$ with $||\{ x,y,z\} \cap \{ x',y',z'\}||=2$
      //define the induced, acyclic digraph $\Gbar(X_{\sigma(i)})$ 
26.   $\Vbar(X_{\sigma(i)}) = \{ v \in V(X_{\sigma(i)}): x_v = 1\}$
27.   $\Ebar(X_{\sigma(i)}) = \{ (v,v'):  v,v' \in \Vbar \land (v,v') \in E(X_{\sigma(i)})\}$
28.   let $\Gbar(X_{\sigma(i)}) = (\Vbar(X_{\sigma(i)}),\Ebar(X_{\sigma(i)}))$
  //handle special, closed 2-cycles
29.   for each closed 2-cycle $[x]=\{a_1,a_2,a_3,a_4\}$ as depicted in Figure \ref{fig:specialClosed2cycle}
30.     if $[x]$ is of type A as depicted in Figure \ref{fig:specialClosed2cycle}a
31.       remove base pair from $s$ by equation (\ref{eqn:trajectoryClosed2cycleA})
32.     if $[x]$ is of type B as depicted in Figure \ref{fig:specialClosed2cycle}b
33.       remove base pair from $s$ by equation (\ref{eqn:trajectoryClosed2cycleB})
      //remove base pairs from $s$ that are not involved in a shift move
34.   $\Vbar(X_{\sigma(i)}).s = \{ (x,y): \exists v \in \Vbar(X_{\sigma(i)})  ( v.s = (x,y) ) \}$
35.   for $(x,y) \in s-t$
36.     if $(x,y) \not\in \Vbar(X_{\sigma(i)}).s$
37.       remove $(x,y)$ from $s$; numMoves = numMoves+1
        //topological sort of IP solution $\Vbar(X_{\sigma(i)})$
38.   topological sort of $\Gbar(X_{\sigma(i)})$ using DFS to obtain total ordering $\prec$ on $\Vbar(X_{\sigma(i)})$
38.   for $v=(x,y,z) \in \Vbar$ in topologically sorted order $\prec$
        //check if shift would create a base triple, as in type 1,5 paths from Figure \ref{fig:redBluePathsCycles}
40.     if $s[x]=1$ //i.e. $\{u,x\} \in s$ for some $u\in [1,n]$
41.       remove $\{ u,x \}$ from $s$; numMoves = numMoves+1 
42.     shift $\{y,z\}$ to $\{x,y\}$ in $s$; numMoves = numMoves+1
        //remove any remaining base pairs from $s$ that have not been shifted
43.   for $(y,z)\in s-t$ which satisfy $y,z \in X_{\sigma(i)}$
44.     if $(x,y,z) \in V(X_{\sigma(i)})$ where $x = t[y]$
45.       shift base pair $\{ y,z \} \in s$ to $\{ x,y \} \in t$; numMoves = numMoves+1
46.     else //$\{y,z\}$ is a remaining base pair of $s$ but cannot be applied in a shift
47.       remove $(x,y)$ from $s$; numMoves = numMoves+1
  //add remaining base pairs from $t$ to $s$
48. for $(x,y) \in t-s$
49.   $s = s \cup \{ (x,y) \}$
50.   numMoves = numMoves + 1 
51. return moveSequence, numMoves
|mendverbatim
\end{small}
\bigskip

\section{Benchmarking results}
\label{section:benchmarking}

\subsection{Random sequences}
\label{section:benchmarkingRandom}

Given a random RNA sequence ${\bf a} = a_1,\ldots a_n$ of length $n$, 
we generate a list $L$ of all possible base pairs, then choose with uniform
probability a base pairs $(x,y)$ from $L$, add $(x,y)$ to the secondary
structure $s$ being constructed, then remove all base pairs $(x',y')$ from
$L$ that either touch or cross $(x,y)$, and repeat these last three steps
until we have constructed a secondary structure having the desired number
($n/5$) of base pairs. If the list $L$ is empty before completion of the
construction of secondary structure $s$, then reject $s$ and start over.
The following pseudocode describes how we generated the benchmarking data
set, where for each sequence length $n=10,15,20,\cdots,150$ nt, twenty-five 
random RNA sequences were generated of length $n$, with probability of $1/4$ 
for each nucleotide, in which twenty secondary structures $s,t$ were
uniformly randomly generated for each sequence so that 40\% of the nucleotides 
are base-paired.

\begin{small}
\begin{quote}
\mverbatim
 1. for $n$ = 10 to 150 with step size 10
 2.   for numSeq = 1 to 25
 3.     generate random RNA sequence ${\bf a}=a_1,\ldots,a_n$ of length $n$
 4.     generate $20$ random secondary structures of ${\bf a}$
 5.     for all ${20 \choose 2}=190$ pairs of structures $s,t$ of ${\bf a}$
 6.       compute optimal and near-optimal $MS_2$ folding trajectories from $s$ to $t$
|mendverbatim
\end{quote}
\end{small}
\smallskip

The number of computations per sequence length is thus $25 \cdot 190 = 4750$,
so the size of the benchmarking set is $15 \cdot 4750 = 71,250$. This 
benchmarking set is used in 
Figures~\ref{fig:benchmarkingAccuracy} -- \ref{fig:benchmarkingAvgNumCycles}.
Figure~\ref{fig:benchmarkingAccuracy} compares various distance measures
discussed in this paper: $MS_2$ distance computed by the optimal IP
Algorithm~\ref{algo:MS2path}, approximate $MS_2$ distance computed by 
the near-optimal Algorithm~\ref{algo:nearOptimalMS2path}, 
$pk-MS_2$ distance that allows pseudoknotted intermediate structures,
$MS_1$ distance, and Hamming distance divided by 2. Additionally, this
figure distinguishes the number of base pair additions/removals and
shifts in the $MS_2$ distance.
 
Figure~\ref{fig:benchmarkingCorrelation}a  shows the scatter plots and
Pearson correlation coefficients all pairs of the distance measures:
$MS_2$ distance, near-optimal $MS_2$ distance, 
$pk-MS_2$ distance, Hamming distance divided
by $2$,  $MS_1$ distance.  In contrast to 
Figure~\ref{fig:benchmarkingCorrelation}, the second panel 
Figure~\ref{fig:benchmarkingCorrelation}b shows the length-normalized values.
It is unclear why $MS_2$ distance has a slightly higher length-normalized
correlation with both Hamming distance divided by 2 and $MS_1$ distance,
than that with approximate $MS_2$ distance, as computed by 
Algorithm~\ref{algo:nearOptimalMS2path} -- despite the fact that the latter
algorithm approximates $MS_2$ distance much better than either Hamming distance
divided by 2 or $MS_1$ distance.
Figure~\ref{fig:runTimeBenchmark} shows that run-time of
Algorithms~\ref{algo:MS2path} and
\ref{algo:nearOptimalMS2path}, where the former is broken down into
time to generate the set of directed cycles and the time for the IP solver.
Note that there is a 10-fold speed-up in Algorithm~\ref{algo:MS2path}
from this paper, compared with a precursor of this algorithm that appeared
in the proceedings of the Workshop in Bioinformatics (WABI 2017).
Since Algorithm~\ref{algo:nearOptimalMS2path} applies
Algorithm~\ref{algo:MS2path} to each equivalence class, there is a
corresponding, but less striking speed-up in the near-optimal algorithm.
Since run-time depends heavily on the number of directed cycles in the
conflict digraphs, 
Figure~\ref{fig:benchmarkingSizeConflictDigraphAndCycleLengthDistribution}a 
shows the
size of vertex and edge sets of the conflict digraphs for the benchmarking
data, and 
Figure~\ref{fig:benchmarkingSizeConflictDigraphAndCycleLengthDistribution}b 
depicts the cycle length
distribution for benchmarking data of length 150; for different lengths,
there are similar distributions (data not shown).
Finally, 
Figure~\ref{fig:benchmarkingSizeConflictDigraphAndCycleLengthDistribution}c 
showns the (presumably)
exponential increase in the number of directed cycles, as a function of
sequence length. Since Algorithm~\ref{algo:nearOptimalMS2path} does not
compute the collection of all directed cycles (but only those for each
equivalence class), the run time of Algorithm~\ref{algo:nearOptimalMS2path} 
appears to be linear in sequence length, compared to the (presumably)
exponential run time of Algorithm~\ref{algo:MS2path}.

\subsection{Rfam sequences}
\label{section:benchmarkingRfam}

In this section, we use data from the Rfam 12.0 database
\cite{Nawrocki.nar15} for analogous computations as those from
the previous benchmarking section.  For each Rfam family having
average sequence length less than 100 nt, one sequence is randomly
selected, provided that the
base pair distance between its MFE structure and its Rfam 
consensus structure is a minimum.
For each such sequence ${\bf a}$, the target structure $t$
was taken to be the secondary structure having minimum free energy among
all structures of ${\bf a}$ that are compatible with the Rfam consensus
structure, as computed by {\tt RNAfold -C} \cite{Lorenz.amb11}
constrained with the consensus structure of ${\bf a}$.
The corresponding initial structure $s$ for sequence ${\bf a}$
was selected from a Zuker-suboptimal structure, obtained by
{\tt RNAsubopt -z} \cite{Lorenz.amb11}, with the property that
$|d_{\mbox{\small BP}}(s,t) - d_{\mbox{\small H}}(s,t)| < 0.2 \cdot
d_{\mbox{\small BP}}(s,t)$. Since we know from 
Figure~\ref{fig:runTimeBenchmark} that run time of the optimal
IP Algorithm~\ref{algo:MS2path} depends on the number of
cycles in the corresponding RNA conflict digraph, the last criterion is
likely to result in a less than  astronomical number of cycles.
The resulting dataset consisted of 1333 sequences, some of
whose lengths exceed 100 nt. Nevertheless, the number of cycles in the
RNA constraint digraph of 22 of the 1333 sequences exceeded 50 million
(an upper bound set for our program),
so all figures described in this section are based on 1311 sequences 
from Rfam.

Figure~\ref{fig:distanceMeasureBenchmarkRfam} depicts the moving 
averages in centered windows $[x-2,x+2]$ of the following
distance measures for the $1311$ sequences extracted from
Rfam 12.0 as described.  Distance measures include
(1) optimal $MS_2$-distance computed
by the exact IP (optimal) Algorithm~\ref{algo:MS2path} (where
the number of base pair additions ($+$) or removals ($-$) is indicated,
along with the number of shifts),
(2) near-optimal $MS_2$-distance computed
by near-optimal Algorithm~\ref{algo:nearOptimalMS2path},
(3) Hamming distance divided by $2$,
(4) $MS_1$ distance aka base pair distance,
(5) pseudoknotted $MS_2$ distance (pk-$MS_2$) computed
from Algorithm~\ref{algo:ms2pathLengthArbitraryStr},
(6) optimal local $MS_2$ with parameter $d=10$, and
(7) optimal local $MS_2$ with parameter $d=20$. 
The latter values were computed
by a variant of the exact IP Algorithm~\ref{algo:MS2path} with
{\em locality parameter} $d$, defined to allow
base pair shifts of the form
$(x,y) \rightarrow (x,z)$ or $(y,x) \rightarrow (z,x)$ only
when $|y-z| \leq d$.  This data suggests that Hamming distance over
$2$ ($d_{\mbox{\small H}}(s,t)/2$) closely approximates the 
distance computed by near-optimal
Algorithm~\ref{algo:nearOptimalMS2path}, while pk-$MS_2$ distance
($d_{\mbox{\small pk-$MS_2$}}(s,t)$)
is a better approximation to $MS_2$ distance than 
is Hamming distance over $2$.
Figure~\ref{fig:RfamDataCorrelation} presents scatter plots
and Pearson correlation values when comparing
various distance measures using the Rfam data.
Figure~\ref{fig:RfamDataCorrelation}a [resp.
Figure~\ref{fig:RfamDataCorrelation}b] 
presents Pearson correlation [resp.
{\em normalized} Pearson correlation] values computed, where by
{\em normalized}, we mean that for each of the $1311$ extracted Rfam
sequences ${\bf a}$ with corresponding initial structure $s$ 
and target structure $t$, the 
{\em length-normalized} distance measures $d(s,t)/|{\bf a}|$ are
correlated.
Figure~\ref{fig:runTimeBenchmarkRfam} depicts the moving average
run times as a function of sequence length, where for given value
$x$ the run times are averaged for sequences having length in
$[x-2,x+2]$. Finally, Figure~\ref{fig:numberRfamSequences}
depicts the number of sequences of various lengths used in the
Rfam benchmarking set of 1311 sequences.

\section{Conclusion}
\label{section:conclusion}

In this paper, we have introduced the first optimal and near-optimal
algorithms to compute the shortest RNA secondary structure folding 
trajectories in which each intermediate structure is obtained from its
predecessor by the addition, removal or shift of a base pair; i.e. 
the shortest $MS_2$ trajectories.
Since helix zippering and defect diffusion employ shift moves, one
might argue that it is better to include shift moves when physical 
modeling RNA folding, and indeed the RNA folding kinetics simulation 
program {\tt Kinfold} \cite{flamm} uses the $MS_2$ move set by default. 
Using the novel notion of RNA conflict directed graph, we describe
an optimal and near-optimal algorithm to compute the shortest $MS_2$
folding trajectory. Such trajectories pass through substantially higher
energy barriers than trajectories produced by {\tt Kinfold}, which
uses Gillespie's algorithm \cite{gillespieStochasticSimulation1}
(a version of event-driven Monte Carlo simulation) to generate
physically realistic $MS_2$ folding trajectories. We have shown in
Theorem~\ref{thm:NPhardnessMS2} that
it is NP-hard to compute the $MS_2$ folding trajectory having minimum
energy barrier, and have presented
anecdotal evidence that suggests that it may also NP-hard to compute 
the shortest $MS_2$ folding trajectory.  For this reason, and because
of the exponentially increasing number of cycles 
(see Figure~\ref{fig:benchmarkingAvgNumCycles})
and subsequent time requirements of our optimal IP 
Algorithm~\ref{algo:MS2path}, it is unlikely that (exact) $MS_2$ distance
prove to be of much use in molecular evolution studies such as
\cite{Borenstein.pnas06,Wagner:robustness,GarciaMartin.bb16}. 
Nevertheless, Figures~\ref{fig:benchmarkingAccuracy} and
\ref{fig:distanceMeasureBenchmarkRfam} suggest that either
pk-$MS_2$ distance and/or near-optimal $MS_2$ distance may be a 
better approximation to (exact) $MS_2$ distance than using
Hamming distance, as done in
\cite{schusterStadler:conformationalEvolution,Wagner.bj14}.
However, given the high correlations between these measures, it 
is unlikely to make much difference in molecular evolution studies.

Our graph-theoretic formulation involving RNA conflict digraphs
raises some interesting mathematical questions partially addressed 
in this paper; in particular,
it would be very interesting to characterize the class of digraphs that
can be represented by RNA conflict digraphs, and to determine whether
computing the shortest $MS_2$ folding trajectory is $NP$-hard. We
hope that the results presented in this paper may lead to resolution of
these questions.

\bibliographystyle{plain}

\hfill\break\clearpage

\section{Figures}

\begin{figure}
\centering
\includegraphics[width=0.7\textwidth]{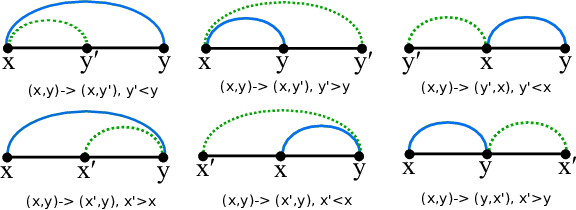}
\caption{Illustration of shift moves, taken from \cite{Clote.po15}.
}
\label{fig:shiftMoves}
\end{figure}

\begin{figure}
\centering
\includegraphics[width=0.8\textwidth]{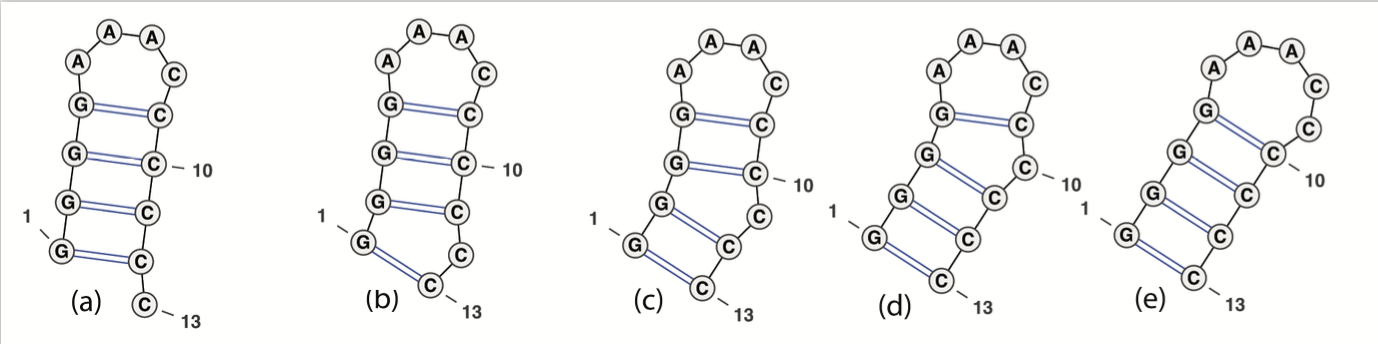}
\caption{Defect diffusion \cite{defectDiffusion}, where a bulge migrates
stepwise to become absorbed in an hairpin loop. The move from structure
(a) to structure (b) is possible by the shift $(1,12) \rightarrow (1,13)$,
the move from (b) to (c) by shift $(2,11) \rightarrow (2,12)$, etc.
Our algorithm properly accounts for such moves with respect to energy
models A,B,C.  Image taken from \cite{Clote.po15}.
}
\label{fig:defectDiffusion}
\end{figure}

\begin{figure}
\centering
\includegraphics[width=0.8\textwidth]{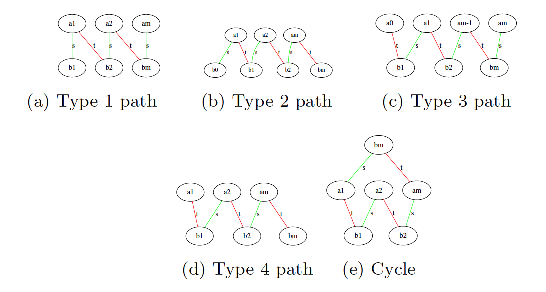}
\caption{All possible maximal length  red-green paths and cycles.
Each equivalence class $X$, as defined in Definition~\ref{def:equivRelation},
can be depicted as a maximal length path or cycle, consisting of those 
positions $x \in [1,n]$ that are connected by alternating base pairs drawn 
from secondary structures $s$ (green) and $t$ (red).  Each path or cycle $X$ 
is depicted in a fashion that the leftmost [resp. rightmost position] 
satisfies the following properties, where $End(s,X)$ [resp. $End(t,X)$]
denotes the set of elements of $X$ that are untouched by $t$ [resp. $s$]: 
for paths of type 1, $|X|$ is even, 
$|End(s,X)]=2$, $b_1=\min(End(s,X))$, $t[a_m]=0$;
for paths of type 2, $|X|$ is odd, 
$|End(s,X)]=1=|End(t,X)|$, $b_0=\min(End(s,X))$, $s[b_m]=0$;
for paths of type 3, $|X|$ is odd, 
$|End(s,X)]=1=|End(t,X)|$, $a_0=\min(End(t,X))$, $t[a_m]=0$;
for paths of type 4, $|X|$ is even, 
$|End(t,X)]=2$, $a_1=\min(End(t,X))$, $s[b_m]=0$;
for paths of type 5, or cycles, $|X|$ must be even, and
$a_1 = \min(X)$.
Note that the appearance of positions in left-to-right order does {\em not}
necessarily respect integer ordering, so the leftmost 
position is not necessarily the minimum $\min(X)$, nor is the rightmost 
position necessarily the maximum $\max(X)$.
}
\label{fig:redBluePathsCycles}
\end{figure}

\begin{figure}
\centering
\includegraphics[width=0.8\textwidth]{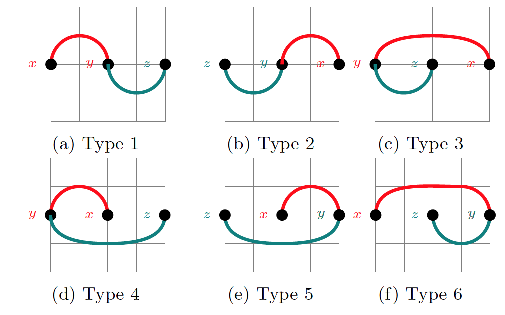}
\caption{All six possible shift moves, in which a base pairs of $s$ (teal) 
that touches a base pairs of $t$ (red) is shifted, thus reducing the base
pair distance $d_{\mbox{\small BP}}(s,t)$ by $2$. 
Each such shift move can uniquely be designated
by the triple $(x,y,z)$, where $y$ is the {\em pivot position} 
(common position to
a base pair in both $s$ and $t$), $x$ is the remaining position in the base pair
in $t$, and $z$ is the remaining position in the base pair in $s$.
}
\label{fig:sixShiftMovesRedGreen}
\end{figure}

\begin{figure}
\centering
\includegraphics[width=0.8\textwidth]{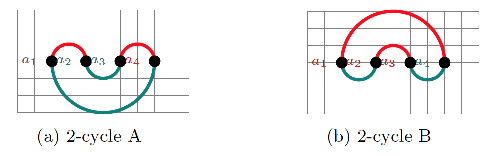}
\caption{Two special types of {\em closed 2-cycles}.
(a) RNA conflict digraph $G=(V,E)$ for secondary structures $t$ and $s$,
where $a_1<a_2<a_3<a_4$ and
$t = \{ (a_1,a_2), (a_3,a_4) \}$, and
$s = \{ (a_1,a_4), (a_2,a_3) \}$.
Nodes of $V = \{ v_1,v_2,v_3,v_4 \}$ are the following:
$v_1 = (a_1,a_2,a_3)$ of type 1, 
$v_2 = (a_3,a_4,a_1)$ of type 5, 
$v_3 = (a_2,a_1,a_4)$ of type 4, and 
$v_4 = (a_4,a_3,a_2)$ of type 2. 
(b) RNA conflict digraph $G=(V,E)$ for secondary structures $t$ and $s$,
where $a_1<a_2<a_3<a_4$ and
$t = \{ (a_1,a_4), (a_2,a_3) \}$ and
$s = \{ (a_1,a_2), (a_3,a_4) \}$.
Nodes of $V = \{ v_1,v_2,v_3,v_4 \}$ are the following:
$v_1 = (a_1,a_4,a_3)$  of type 6,
$v_2 = (a_4,a_1,a_2)$  of type 3,
$v_3 = (a_2,a_3,a_4)$  of type 1,
$v_4 = (a_3,a_2,a_1)$  of type 2.
Since the overlap between any two distinct vertices in (a) and (b) is 2, 
there are \underline{no} edges in $E$ for the conflict digraphs of (a) and (b).
An optimal trajectory from $s$ to $t$ is constructed by removing a base
pair from $s$, performing a shift, and adding the remaining base pair from 
$t$. In each case there are 2 choices for the base pair to remove and
two choices for the shift, so 4 optimal trajectories for each of (a) and
(b). 
}
\label{fig:specialClosed2cycle}
\end{figure}

\begin{figure}
\centering
\includegraphics[width=0.8\textwidth]{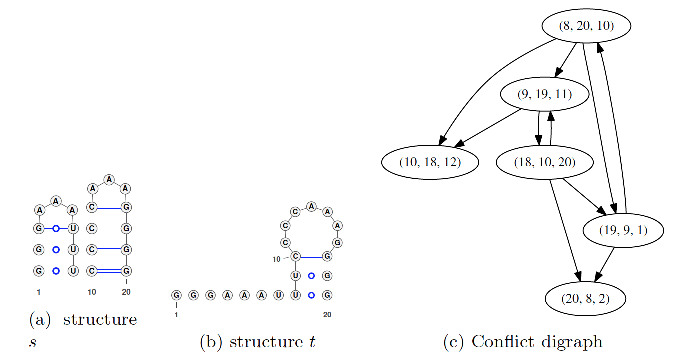}
\caption{Conflict digraph for a 20 nt toy example with sequence
GGGAAAUUUC CCCAAAGGGG, with initial structure $s$ whose free energy
is +0.70 kcal/mol, and target structure $t$ whose free energy is
+ 3.30 kcal/mol. The conflict digraph contains 3 simple cycles:
a first cycle
$\{(8, 20, 10), (9, 19, 11), (18, 10, 20), (19, 9, 1)\}$ of size 4,
a second cycle $\{[(8, 20, 10), (19, 9, 1)\}$ of size 2, and a third
cycle $\{(18, 10, 20), (9, 19, 11)\}$ of size 2.
}
\label{fig:cycleExample}
\end{figure}

\begin{figure}
\centering
\includegraphics[width=0.9\textwidth]{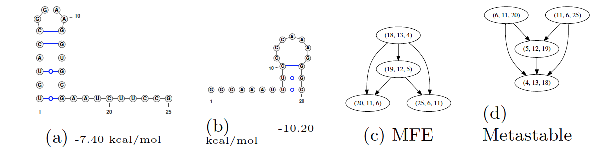}
\caption{Conflict digraphs for the 25 nt bistable switch with sequence
UGUACCGGAA GGUGCGAAUC UUCCG taken from 
Figure 1(b).1 of \cite{Hobartner.jmb03}, in which the authors performed
structural probing by comparative imino proton NMR spectroscopy.
(a) Minimum free energy (MFE) structure having -10.20 kcal/mol.
(b) Alternate metastable structure having next lowest free free energy of
-7.40 kcal/mol. Two lower energy structures exist, having -9.00 kcal/mol
resp. -7.60 kcal/mol; however, each is a minor variant of the MFE structure.
(c) RNA conflict digraph $G=(V,E)$, having directed edges
$(x,y,z) \rightarrow (u,v,w)$ if the (unordered) base pair
$\{ y,z\} \in s$ touches or crosses the (unordered) base 
pair $\{ u,v\} \in t$.  Here, $s$ is
in the metastable structure shown in (b) having -7.40 kcal/mol, while
$t$ is the MFE structure shown in (a) having -10.20 kcal/mol. The
conflict digraph represents a necessary order of application of shift
moves, in order to avoid the creation of base triples or pseudoknots.
Note that the digraph $G$ is
acyclic, but the IP solver must nevertheless be invoked with constraint
($\ddag$) that precludes both vertices $(20,11,6)$ and $(25,6,11)$ from
belonging to the solution $\Vbar$.
(d) RNA conflict digraph $G'=(V',E')$, having similar definition in which
roles of $s$ and $t$ are reversed -- i.e. $MS_2$ folding pathways from 
the MFE structure to the (higher energy) metastable structure.
}
\label{fig:conflictDigraphMicuraSwitch}
\end{figure}

\begin{figure}
\centering
\includegraphics[width=0.8\textwidth]{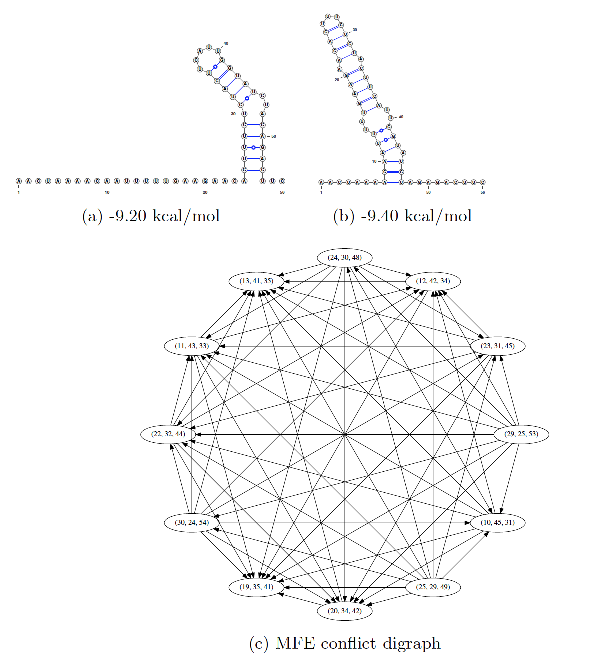}
\caption{Conflict digraph for the 56 nt spliced 
leader RNA from {\em L. collosoma}
with sequence AACUAAAACA AUUUUUGAAG AACAGUUUCU GUACUUCAUU GGUAUGUAGA
GACUUC. (a) Metastable structure having free energy of -9.20 kcal/mol. (b) 
Minimum free energy (MFE) structure having free energy of -9.40 kcal/mol.
(c) The RNA conflict digraph for refolding from metastable $s$ 
to MFE $t$ contains 12 vertices, 61 directed edges and no directed cycles.
Minimum free energy (MFE) and metastable structures in (a) and (b) computed
by Vienna RNA Package \cite{Lorenz.amb11}. Secondary structure images
generated using VARNA \cite{Darty.b09}.
}
\label{fig:lecuyer}
\end{figure}

\begin{figure}
\centering
\includegraphics[width=0.8\textwidth]{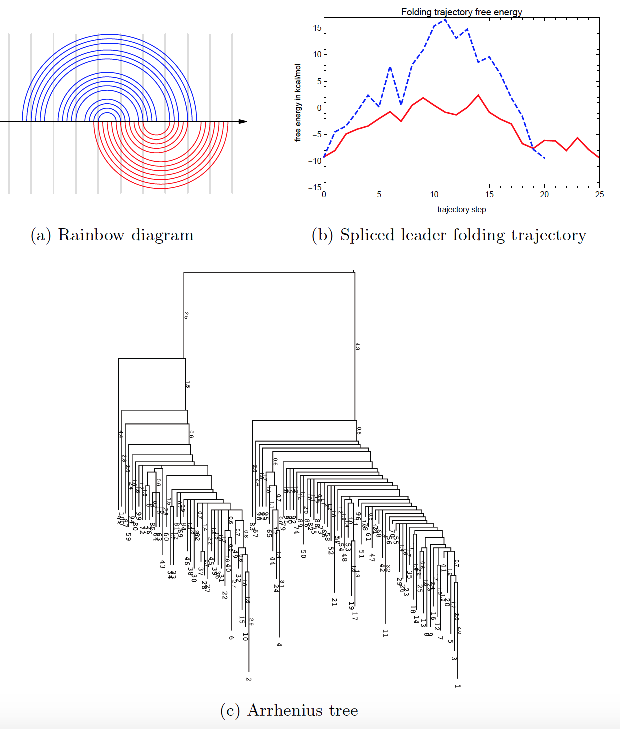}
\caption{(a) Rainbow diagram for spliced 
leader RNA from {\em L. collosoma}, in which the base pairs for the
initial structure $s$ (Figure~\ref{fig:lecuyer}a)
are shown below the line in red, while those for the
target structure $t$ (Figure~\ref{fig:lecuyer}b)
are shown above the line in blue.
(b) Free energies of structures in the shortest $MS_2$ folding trajectory 
for spliced leader are shown by the dotted blue line, while those for the
energy-optimal $MS_2$ folding trajectory are shown in the solid red line.
Algorithm~\ref{algo:MS2path} was used to compute the shortest $MS_2$
trajectory, while the programs {\tt RNAsubopt} 
\cite{wuchtyFontanaHofackerSchuster} and {\tt barriers} \cite{flammHofacker}
were used to compute the energy-optimal folding trajectory.
}
\label{fig:LcollosomaBis}
\end{figure}

\begin{figure}
\centering
\includegraphics[width=0.8\textwidth]{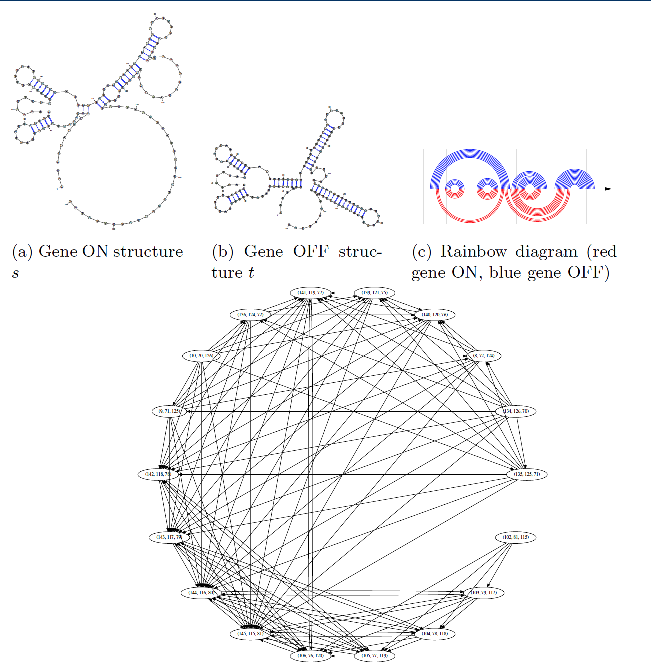}
\caption{Gene ON and gene OFF structures and the RNA conflict digraph for
the 156 nt xanthine phosphoribosyltransferase (xpt) riboswitch from 
{\em B. subtilis} -- 
structures consistent with in-line probing data taken from
Figure 1A of \cite{breaker:Riboswitch2}.
(a) Gene ON structure (default) in absence of free guanine, having
(computed) free energy of -33.11 kcal/mol.
(b) Gene OFF structure when guanine binds cytidine in position 66, having
(computed) free energy of -56.20 kcal/mol (guanine not shown). 
(c) Rainbow diagram with red gene ON structure below line and blue gene OFF
structure above line. Rainbow diagrams allow one to determine by visual
inspection when base pairs touch or cross.
(d) Conflict digraph $G=(V,E)$, containing 18 vertices, 113 directed
edges, and 1806 directed cycles.
Minimum free energy (MFE) and metastable structures in (a) and (b) computed
by Vienna RNA Package \cite{Lorenz.amb11}. Free energy computations using
Turner energy model \cite{Turner.nar10} computed with Vienna RNA Package
\cite{Lorenz.amb11}. Secondary structure images
generated using VARNA \cite{Darty.b09}.
}
\label{fig:conflictDigraphXPT}
\end{figure}

\begin{figure}
\centering
\includegraphics[width=0.8\textwidth]{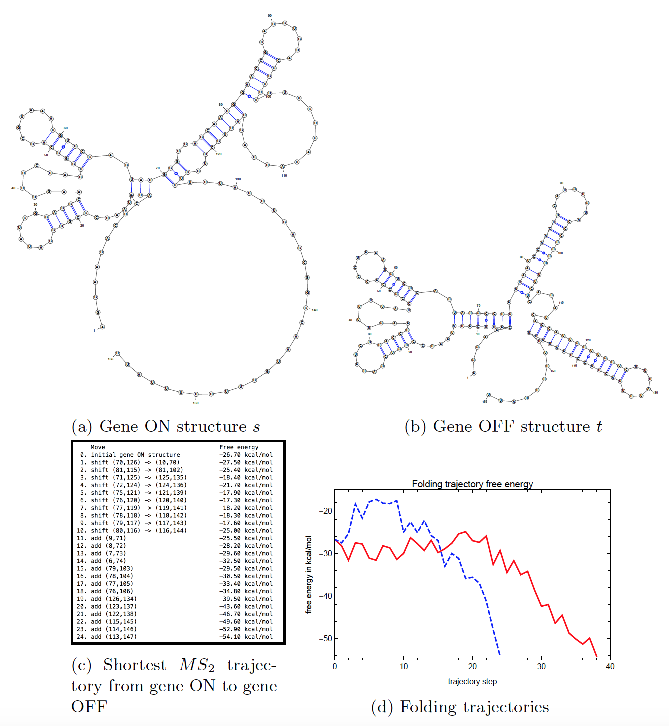}
\caption{(a) Initial gene ON structure (default) in absence of free guanine 
(enlarged image).
(b) Target gene OFF structure when guanine binds cytidine in position 66 
(enlarged image).
(c) Shortest $MS_2$ folding trajectory 
from the gene ON structure $s$ to the gene OFF structure $t$ for the  156 nt
xpt riboswitch from {\em B. subtilis}, described in the caption to
Figure~\ref{fig:conflictDigraphXPT}. 
Note the initial elongation of the P1 helix by the first shift, followed by 
the stepwise removal of the anti-terminator and construction of the terminator
loops by shift 2-10, followed by base pair additions to lengthen the
terminator loop.
Although (a,b) are duplicated from the previous figure,
this is necessary to follow the sequence of moves in the $MS_2$ trajectory.
Secondary structure images generated using VARNA \cite{Darty.b09}.
(d) Free energies of structures in the shortest $MS_2$ folding trajectory 
for xpt are shown by the dotted blue line, while those for the an
energy near-optimal $MS_1$ folding trajectory are shown in the solid red line.
Algorithm~\ref{algo:MS2path} was used to compute the shortest $MS_2$
trajectory, while the program {\tt RNAtabuPath} 
\cite{Dotu.nar10} was
used to compute the energy near-optimal $MS_1$ folding trajectory. 
The size of the 156 nt xpt riboswitch and the fact that the program
{\tt RNAsubopt} would need to generate all secndary structures within
30 kcal/mol of the minimum free energy -54.1 kcal/mol preclude any possibility
that the optimal $MS_2$ trajectory can be computed by application of the
program {\tt barriers} \cite{flammHofacker}. 
}
\label{fig:xptStructuresAndTrajectory}
\end{figure}

\begin{figure}
\centering
\includegraphics[width=0.7\textwidth]{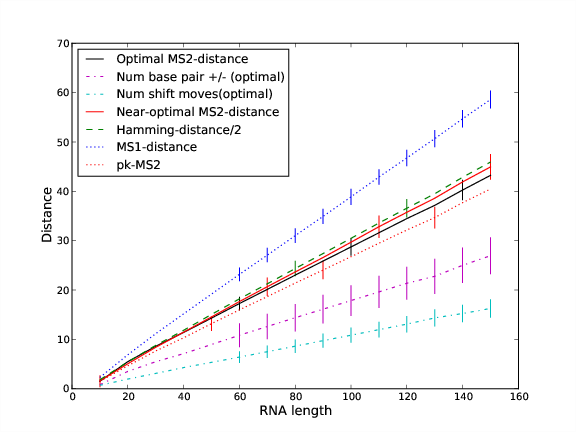}
\caption{Benchmarking statistics for optimal and near-optimal algorithm
to compute minimum length $MS_2$ folding trajectories between random
secondary structures $s,t$ of random RNA sequences of variable lengths.
For each sequence length $n=10,15,20,\cdots,150$ nt, twenty-five random RNA
sequences were generated of length $n$, with probability of $1/4$ for each
nucleotide. For each RNA sequence, twenty secondary structures $s,t$ were 
uniformly randomly generated so that 40\% of the nucleotides are base-paired.
Thus the number of computations per sequence length is thus 
$25 \cdot 190 = 4750$,
so the size of the benchmarking set is $15 \cdot 4750 = 71,250$. 
Using this dataset, the average $MS_2$ distance was computed 
for both the exact IP Algorithm~\ref{algo:MS2path} and the near-optimal
Algorithm~\ref{algo:nearOptimalMS2path}.
In addition to $MS_2$ distance computed by the exact IP and the near-optimal
algorithm, the figure displays $pk-MS_2$ distance (allowing pseudoknots in
intermediate structures) as computed by
Algorithm~\ref{algo:ms2pathLengthArbitraryStr}, 
the $MS_1$ distance (also known as base
pair distance), Hamming distance over 2, and provides a breakdown of the
$MS_1$ distance in terms of the number of base pair addition/removal moves
``num base pair +/- (optimal)'' and the shift moves
``num shift moves (optimal)''.
}
\label{fig:benchmarkingAccuracy}
\end{figure}

\begin{figure}
\centering
\includegraphics[width=0.9\textwidth]{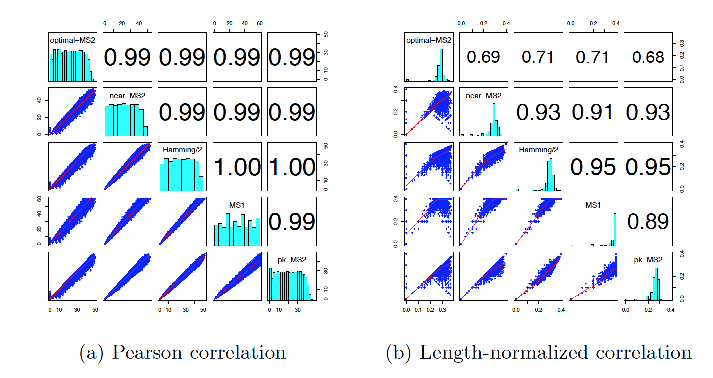}
\caption{Pairwise correlations for optimal $MS_2$ distance,
$pk-MS_2$ distance, near-optimal $MS_2$ distance, Hamming distance divided
by $2$,  and $MS_1$ distance (also called base pair distance).
For each two measures, scatter plots were created for the
$71,250$ many data points from the benchmarking set described in 
Figure~\ref{fig:benchmarkingAccuracy}. Pearson correlation and
{\em normalized} Pearson correlation values computed, where by
normalized, we mean that for each of the $71,250$ data points,
we consider the {\em length-normalized} distance (distance divided by
sequence length). These correlations are statistically significant --
each Pearson correlation values has a $p$-value less than $10^{-8}$.
}
\label{fig:benchmarkingCorrelation}
\end{figure}

\begin{figure}
\centering
\includegraphics[width=0.9\textwidth]{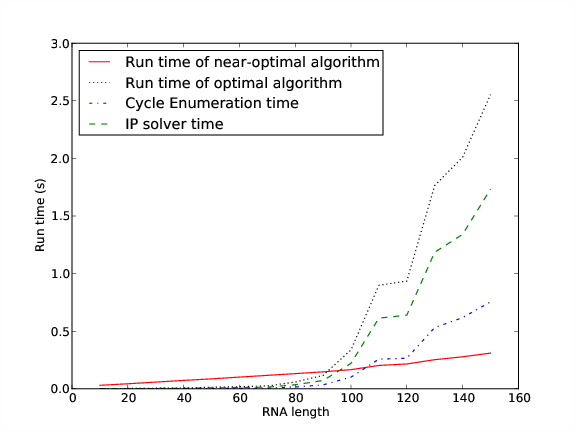}
\caption{Run time for the exact IP (optimal) algorithm~\ref{algo:MS2path}
and the near-optimal algorithm~\ref{algo:nearOptimalMS2path}
to compute minimum length $MS_2$ folding trajectories for the same data
set from previous Figure~\ref{fig:benchmarkingAccuracy}. Each data point
represents the average $\mu \pm \sigma$ where error bars indicate one
standard deviation, taken over  $71,250$ many sequence/structure pairs.
Run time of the optimal algorithm depends on time to perform topological
sort, time to enumerate all directed cycles using our Python implementation
of Johnson's algorithm \cite{Johnson:cycleEnumeration},  and time for 
the Gurobi IP solver (ordered here by increasing time demands).
}
\label{fig:runTimeBenchmark}
\end{figure}

\begin{figure}
\centering
\includegraphics[width=0.9\textwidth]{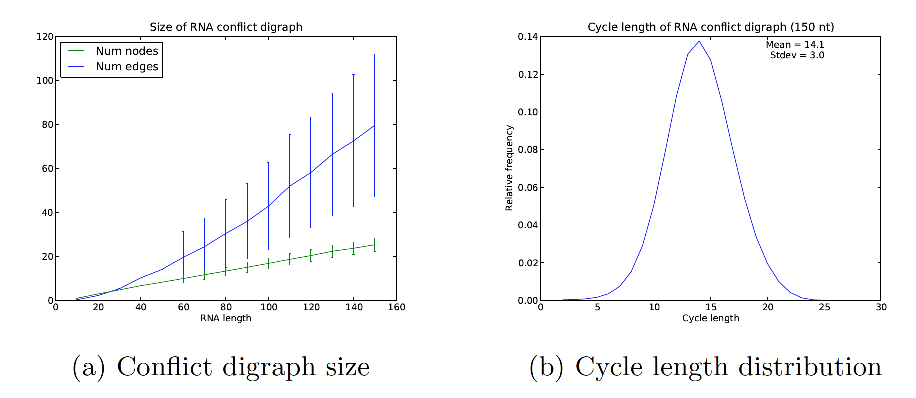}
\caption{{\em (Left)} Average size of vertex sets $V$ and of directed edge sets
$E$ for RNA conflict digraphs $G=(V,E)$ for the data set of described in
Figure~\ref{fig:benchmarkingAccuracy}. Error bars represent $\pm 1$ standard
deviation. Clearly the size of a conflict digraph grows linearly in the length
$n$ of random RNAs ${\bf a}=a_1,\ldots,a_n$,
given random secondary structures $s,t$ having $n/5$ base pairs.
{\em (Right)}
Cycle length distribution for random RNAs ${\bf a}=a_1,\ldots,a_n$ 
of length $n=150$, with randomly chosen secondary structures $s,t$ having
$n/5$ base pairs, using data extracted from the data set described in
Figure~\ref{fig:benchmarkingAccuracy}. For values of $n=50,\ldots,150$,
the cycle length distribution appears approximately normal, although
this is not the case for $n \leq 40$ (data not shown).
}
\label{fig:benchmarkingSizeConflictDigraphAndCycleLengthDistribution}
\end{figure}

\begin{figure}
\centering
\includegraphics[width=0.7\textwidth]{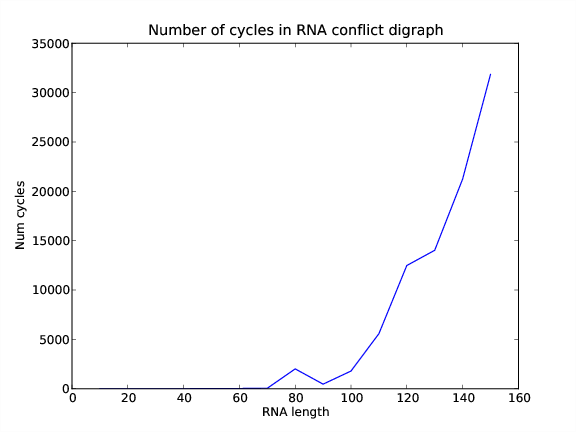}
\caption{Average number of directed cycles as a function of sequence length 
for the data set described in Figure~\ref{fig:benchmarkingAccuracy}. 
For each random RNA sequence ${\bf a}=a_1,\ldots,a_n$ of length $n$, and 
for each pair of random secondary structures $s,t$ of ${\bf a}$ having
$n/5$ base pairs, we computed the total number of directed cycles in the 
conflict digraph $G({\bf a},s,t)$.  The figure suggests that starting at
a threshold sequence length $n$, there is an exponential growth in the number
of directed cycles in the conflict digraph of random sequences of length
$n$. 
}
\label{fig:benchmarkingAvgNumCycles}
\end{figure}

\begin{figure}
\centering
\includegraphics[width=0.9\textwidth]{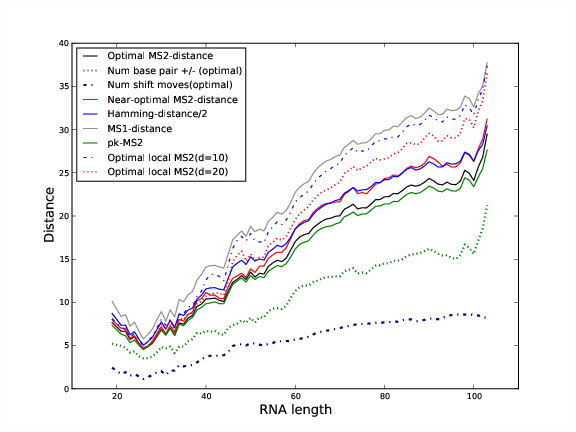}
\caption{Moving averages of distance measures graphed as a function of 
sequence length for the $1311$ many sequences extracted from
Rfam 12.0 (see text). 
Distance measures included
optimal $MS_2$-distance computed
by the exact IP (optimal) algorithm~\ref{algo:MS2path} (where
the number of base pair additions ($+$) or removals ($-$) is indicated,
along with the number of shifts),, 
near-optimal $MS_2$-distance computed
by near-optimal algorithm~\ref{algo:nearOptimalMS2path},
Hamming distance divided by $2$,
$MS_1$ distance aka base pair distance,
pseudoknotted $MS_2$ distance (pk-$MS_2$) computed
from Lemma~\ref{lemma:pkMS2},
optimal local $MS_2$ with parameter $d=10$, and
optimal local $MS_2$ with parameter $d=20$. The latter values were computed
by a variant of the exact IP algorithm~\ref{algo:MS2path} where
shift moves were restricted to be {\em local} with parameter $d$, whereby
base pair shifts of the form
$(x,y) \rightarrow (x,z)$ or $(y,x) \rightarrow (z,x)$ were allowed only
when $|y-z| \leq d$. All moving averages were computed over symmetric
windows of size 9, i.e. $[i-4,i+4]$.
From smallest to largest value, the measures are:
number of shifts in optimal $MS_2$ trajectory $<$
number of base pair additions or deletions ($+/-$) in optimal $MS_2$ 
trajectory $<$
pk-$MS_2$ $<$ $MS_2$ distance $<$ Hamming distance
over $2$ $\approx$ near-optimal $MS_2$ $<$ 
$MS_2$ with locality parameter $d=20$ $<$
$MS_2$ with locality parameter $d=10$ $<$
$MS_1$.
}
\label{fig:distanceMeasureBenchmarkRfam}
\end{figure}

\begin{figure}
\centering
\includegraphics[width=0.9\textwidth]{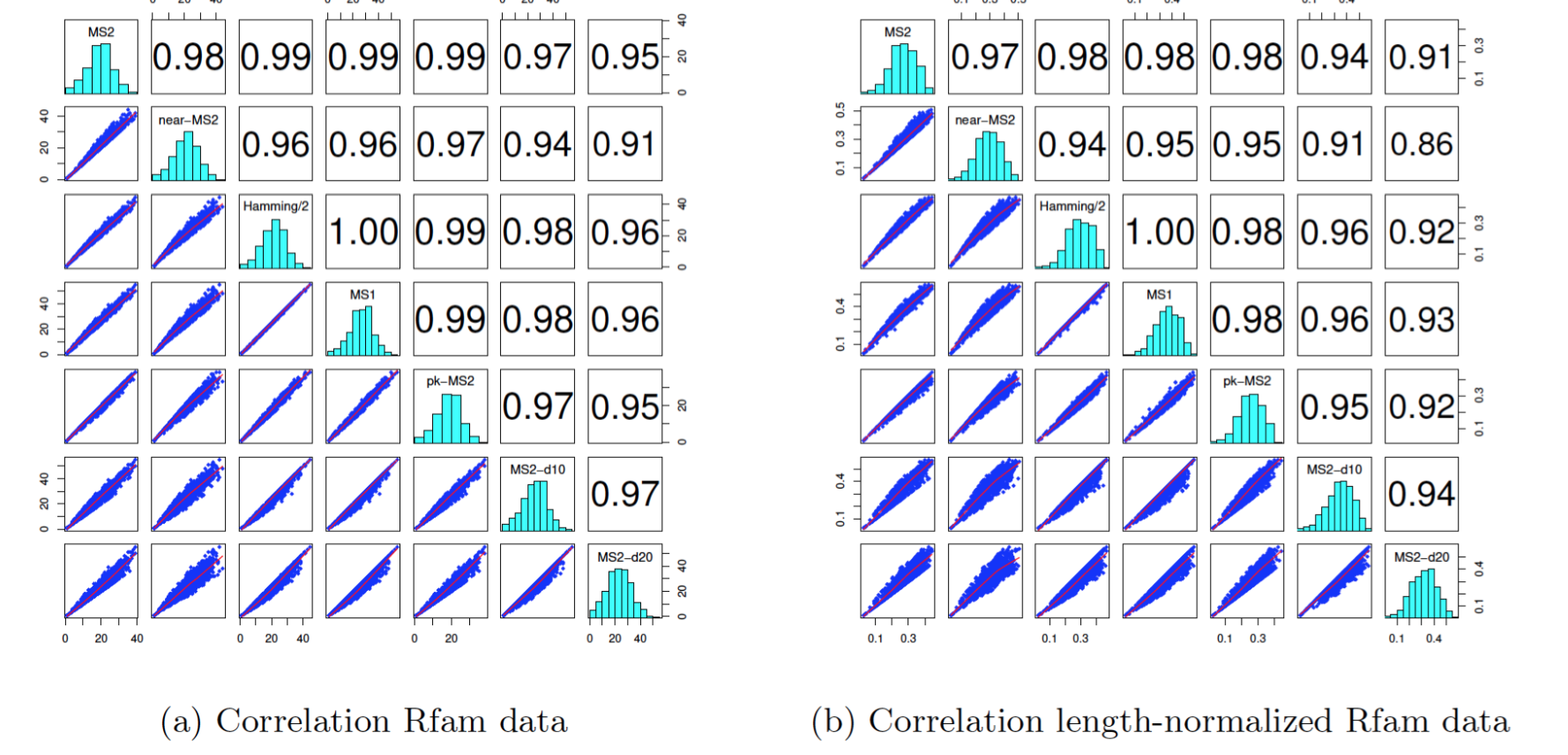}
\caption{Pairwise correlations using Rfam benchmarking data
(see text) for optimal $MS_2$ distance,
$pk-MS_2$ distance, near-optimal $MS_2$ distance, Hamming distance divided
by $2$,  and $MS_1$ distance (also called base pair distance).
For each two measures, scatter plots were created for the
$1311$ many data points.  Pearson correlation and
{\em normalized} Pearson correlation values computed, where by
normalized, we mean that for each of the $1311$ data points,
we consider the {\em length-normalized} distance (distance divided by
sequence length). These correlations are statistically significant --
each Pearson correlation values has a $p$-value less than 
$2.2 \cdot 10^{-16}$.
}
\label{fig:RfamDataCorrelation}
\end{figure}

\begin{figure}
\centering
\includegraphics[width=0.7\textwidth]{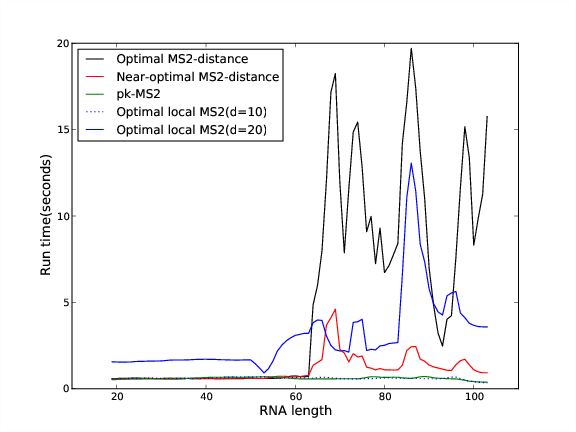}
\caption{Moving averages of run time as a function of 
sequence length for the $1311$ sequences extracted from
Rfam 12.0 (see text). Distance measures considered are
the exact $MS_2$ distance computed by the 
optimal IP Algorithm~\ref{algo:MS2path}, an approximation to the
$MS_2$ distance computed by the near-optimal IP 
Algorithm~\ref{algo:nearOptimalMS2path},
the $pk-MS_2$ distance (allowing pseudoknots in
intermediate structures) as computed by
Algorithm~\ref{algo:ms2pathLengthArbitraryStr},  and two variants
of exact $MS_2$ distance, where shifts are restricted by locality
parameter $d = 10,20$.  These latter values were computed
by the exact IP Algorithm~\ref{algo:MS2path} modified to allow
base pair shifts of the form
$(x,y) \rightarrow (x,z)$ or $(y,x) \rightarrow (z,x)$ only
when $|y-z| \leq d$. 
}
\label{fig:runTimeBenchmarkRfam}
\end{figure}

\begin{figure}
\centering
\includegraphics[width=0.7\textwidth]{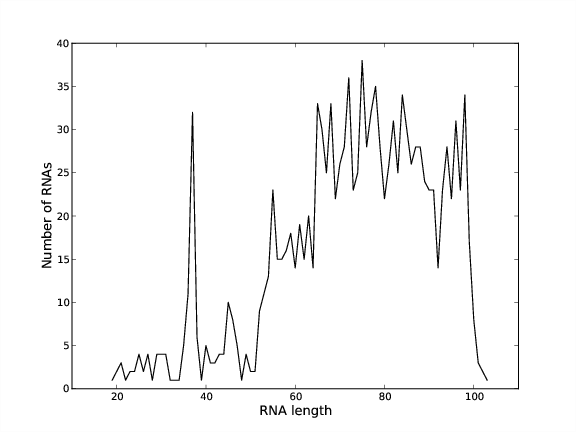}
\caption{Number of sequences as a function of sequence length
for the 1311 sequences extracted from Rfam 12.0 and used in
benchmarking tests (see text).
}
\label{fig:numberRfamSequences}
\end{figure}

\begin{figure}
\centering
\includegraphics[width=0.8\textwidth]{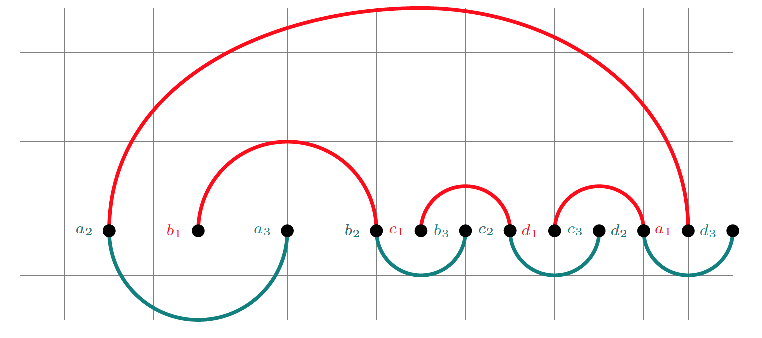}
\caption{RNA conflict digraph $G=(V,E)$ for secondary structures $t$ and $s$,
where
$t = \{ (a_2,a_1), (b_1,b_2),(c_1,c_2), (d_1,d_2) \}$, and
$s = \{ (a_2,a_3), (b_2,b_3),(c_2,c_3), (d_2,d_3) \}$.
The triplet nodes of $V = \{ v_a,v_b, v_c,v_d \}$ are the following:
$v_a = (a_1,a_2,a_3)$ of type 3, shift $v_b = (b_1,b_2,b_3)$ of type 1, shift
$v_c = (c_1,c_2,c_3)$ of type 1, and shift $v_d = (d_1,d_2,d_3)$ of type 1. 
The edges in $E$ are the following:
$v_a \rightarrow v_b$,
$v_b \rightarrow v_c$,
$v_c \rightarrow v_d$,
$v_d \rightarrow v_a$.
The conflict digraph $G=(V,E)$ is 
order-isomorphic to the digraph $G'=(V',E')$, where $V' = \{ 1,2,3,4\}$
and edges are as follows: 
$1 \rightarrow 2 \rightarrow 3 \rightarrow 4 \rightarrow 1$. 
}
\label{fig:4cycle}
\end{figure}

\begin{figure}
\centering
\includegraphics[width=0.8\textwidth]{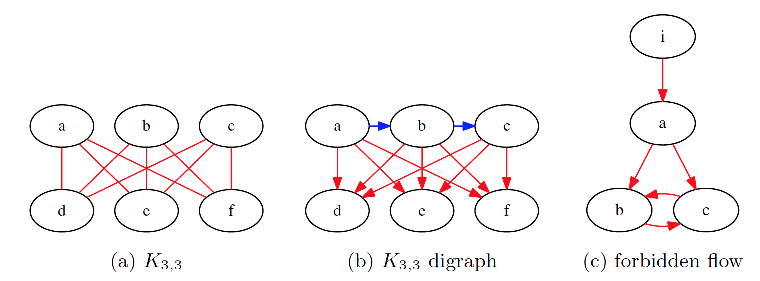}
\caption{(a) Complete bipartite graph $K_{3,3}$. A finite graph 
is planar if and only if it does not contain the forbidden graph $K_{3,3}$
or the complete graph $K_5$ \cite{forbiddengraphK33planar}.
(b) Directed graph realized by the RNA conflict digraph in 
Figure~\ref{fig:RNAconflictGraphForK33}.
It follows that RNA conflict digraphs are not necessarily planar.
(c) Directed graph realized by the RNA conflict digraph in
Figure~\ref{fig:RNAconflictGraphForFlowGraph}. 
A flow graph is {\em reducible} if and only
if it does not contain such a forbidden flow graph, where edges between
nodes may be replaced by arc-disjoint directed paths \cite{flowgraphUllman}. 
It follows that RNA conflict digraphs are not necessarily reducible flow
graphs.
}
\label{fig:forbiddenGraphs}
\end{figure}

\begin{figure}
\centering
\includegraphics[width=0.8\textwidth]{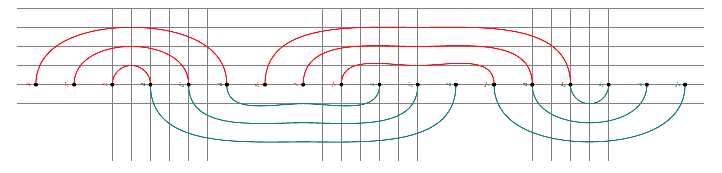}
\caption{RNA conflict digraph that realizes the digraph $K_{3,3}$ depicted in
Figure~\ref{fig:forbiddenGraphs}b, whose undirected red edges represent the
undirected graph $K_{3,3}$ depicted in Figure~\ref{fig:forbiddenGraphs}b.
The nonplanar complete, bipartite digraph $K_{3,3}$, with shift moves
$c,b,a,d,e,f$, all of type 1, in order from left to right -- i.e.
order of positions along the $x$-axis is given by:
$c_1,b_1,a_1,a_2,b_2,c_2$,
$d_1,e_1,f_1,c_3,b_3,a_3$,
$f_2,e_2,d_2,d_3,e_3,f_3$.  Notice that
$a_s$ crosses $b_t$, $c_t$, $d_t$, $e_t$ and $f_t$ so 
$c \leftarrow a$, $b \leftarrow a$ and 
$a \rightarrow d$, $a \rightarrow e$, $a \rightarrow f$;
$b_s$ crosses $c_t$, $d_t$, $e_t$ and $f_t$ so 
$c \leftarrow b$ and
$b \rightarrow d$, $b \rightarrow e$, $b \rightarrow f$;
$c_s$ crosses $d_t$, $e_t$ and $f_t$ so 
$c \rightarrow d$, $c \rightarrow e$, $c \rightarrow f$.
}
\label{fig:RNAconflictGraphForK33}
\end{figure}

\begin{figure}
\centering
\includegraphics[width=0.8\textwidth]{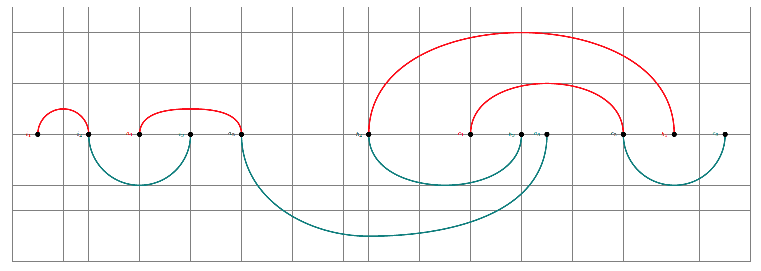}
\caption{Forbidden flow graph with nodes $i=(i_1,i_2,i_3)$,
$a=(a_1,a_2,a_3)$, $b=(b_1,b_2,b_3)$, $c=(c_1,c_2,c_3)$, where
nodes $i,a,c$ are of type 1 and node $b$ is of type 3. Notice that
$i_s$ crosses $a_t$ so $i \rightarrow a$;
$a_s$ crosses $b_t$ so $a \rightarrow b$;
$a_s$ crosses $c_t$ so $a \rightarrow c$; 
$b_s$ crosses $c_t$ so $b \rightarrow c$; 
$c_s$ crosses $b_t$ so $c \leftarrow b$.
}
\label{fig:RNAconflictGraphForFlowGraph}
\end{figure}

\begin{figure}
\centering
\includegraphics[width=0.8\textwidth]{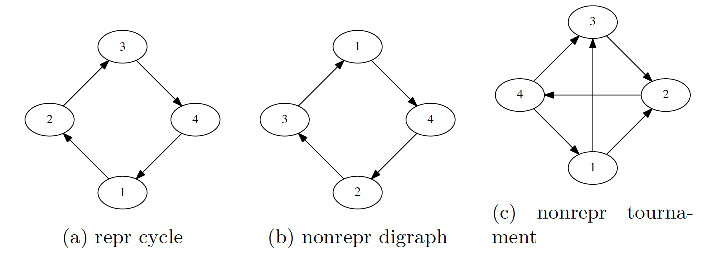}
\caption{(a) Digraph of an ordered 4-cycle, which is representable by
an RNA conflict digraph, as shown in Figure~\ref{fig:4cycle}.
(b) Digraph of an ordered 4-cycle, which is {\em not} representable by
an RNA conflict digraph. Note that digraph (b) is Eulerian, with the 
property that the in-degree of each vertex equals its out-degree.
(c) Digraph of a tournament on 4 vertices, which is
{\em not} representable by an RNA conflict digraph. Digraph (a) is
isomorphic with digraph (b), thus showing that representability is not
preserved under isomorphism. 
Since it is not difficult to show that all 
$2^{ {3 \choose 2} }=8$ tournaments on 3 nodes are representable by
RNA conflict digraphs (data not shown), it follows that digraph (c) is
a minimum sized non-representable tournament, which we 
verified by constraint programming. In general there are 
$2^{ {n \choose 2} }$ many tournaments on $n$.
}
\label{fig:digraphsNotRepresentableAsConflictDigraphs}
\end{figure}

\begin{figure}
\centering
\includegraphics[width=0.8\textwidth]{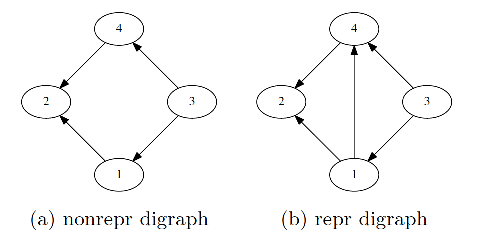}
\caption{Example of a 4-node digraph in (a) that is not representable by an
RNA conflict digraph. However, by adding an edge to digraph (a), we obtain a
representable digraph in (b). Note that digraph (a) is {\em neither} 
order-isomorphic to digraph (b), {\em nor} is there an order-preserving
embedding of digraph (a) into digraph (b). 
}
\label{fig:digraphsNotRepresentableAsConflictDigraphsBis}
\end{figure}

\hfill\break\clearpage
\appendix

\section{Classification of edges in RNA constraint digraphs}
\label{section:edgeClassificationAppendix}

In this section, we describe the collection of all possible directed edges
$v \rightarrow v'$ in which $v.s$ crosses $v'.t$, that can
appear in an RNA conflict digraph, classified as forward, backward or 
2-cycles and according to each type of vertex (see 
Figure~4 for the six types of vertices).
It is straightforward for the reader to imaging additional directed edges
$v \rightarrow v'$ in which $v.s$ touches $v'.t$, so these are not shown.
In addition, we provide the pseudocode for a slow branch-and-bound algorithm
to determine a shortest $MS_2$ trajectory.

Given two secondary structures $s,t$ for the RNA sequence $a_1,\ldots,a_n$,
recall that
notation for a shift move from the (unordered) base pair $\{ x,y \} \in s$
to the (unordered) base pair $\{ y,z \} \in t$ is given by the triple
$(z,y,x)$, where the middle coordinate $y$ is the {\em pivot position},
common to both base pairs $\{x,y\}\in s$ and $\{y,z\} \in t$, while the
first [resp. last] coordinate $z$ [resp. $x$]
is the remaining position from the base pair
$\{y,z\} \in t$ [resp.  $\{x,y\}\in s$]. A directed edge is given 
from shift move $(x,y,z)$ to shift move $(u,v,w)$ if the base pair
$\{ y,z\} \in s$ from the first shift move {\em crosses} with the
base pair $\{u,v\}\in t$ from the second shift move; i.e.
$\min(u,v) < \min(y,z) < \max(u,v) < \max(y,z)$ or
$\min(y,z) < \min(u,v) < \max(y,z) < \max(u,v)$. The reason for the
directed edge is that if the second shift $(u,v,w)$ is applied before
the first shift $(x,y,z)$, then a pseudoknot (crossing) would be created;
it follows that the first shift must be applied before the second shift.

Edges may be forward (left-to-right) or backward (right-to-left), depending
on whether the {\em pivot} position of the first shift is (strictly) less than
or (strictly) greater than the {\em pivot} position of the second shift.
This section does not list similar examples, where the (unordered) base pair
$\{ y,z \} \in s$ from the first shift move {\em touches} the (unordered)
base pair $\{ u,v\}\in t$  from the second shift move, as such examples are
clear from Figure~3 of the main text.

\subsection{Forward Edges}
\label{section:appendixForwardEdges}


\hspace{1cm}

\subsection{Summary tables of shift moves edges}
\label{section:appendixSummaryTables}

Table~\ref{table:bidirectionalArcs} presents a count of all 12 possible
bidirectional edges, while 
Table~\ref{table:forwardArcs} [resp.  Table~\ref{table:backArcs}]
presents a count of all 34 possible forward [resp. back] directed edges.
Here, by {\em bidirectional edge} between nodes $x$ and $y$, we mean 
the existence of directed edges $x \rightarrow y$ and $y \rightarrow x$.
Figures in Sections \ref{section:appendix2cycles},
\ref{section:appendixForwardEdges} and \ref{section:appendixBackwardEdges}
depict all of these these directed edges.

\begin{table}	
\centering
\begin{tabular}{||c|c|c|c|c|c||}
\hline \hline 
1 $\leftrightarrow$ 5 & 2 $\leftrightarrow$ 6 & 3 $\leftrightarrow$ 5 & 4 $\leftrightarrow$ 6 & 3 $\leftrightarrow$ 1 & 4 $\leftrightarrow$ 2 \\
\hline\hline
\end{tabular}
\caption{All 6 possible bidirectional edges, or 2-cycles. Note that
$1 \leftrightarrow 5$ is distinct from
$5 \leftrightarrow 1$, since the pivot point from the left node must be
less than that from the right node in our notation.  
Here, by {\em bidirectional edge} between nodes $x$ and $y$, we mean 
the existence of directed edges $x \rightarrow y$ and $y \rightarrow x$.
}
\label{table:bidirectionalArcs}
\end{table}

\begin{table}	
\centering
\begin{tabular}{||cr|cr|cr|cr|cr|cr||}	
\hline\hline
{\sc edge} & num &
{\sc edge} & num &
{\sc edge} & num &
{\sc edge} & num &
{\sc edge} & num &
{\sc edge} & num \\
\hline
1 $\rightarrow$ 1	&	2		&	2 $\rightarrow$ 1	&	1		&	3 $\rightarrow$ 1	&	1		&	4 $\rightarrow$ 1	&	2		&	5 $\rightarrow$ 1	&	1		&	6 $\rightarrow$ 1	&	0		\\
1 $\rightarrow$ 2	&	1		&	2 $\rightarrow$ 2	&	0		&	3 $\rightarrow$ 2	&	1		&	4 $\rightarrow$ 2	&	1		&	5 $\rightarrow$ 2	&	0		&	6 $\rightarrow$ 2	&	0		\\
1 $\rightarrow$ 3	&	1		&	2 $\rightarrow$ 3	&	0		&	3 $\rightarrow$ 3	&	1		&	4 $\rightarrow$ 3	&	1		&	5 $\rightarrow$ 3	&	0		&	6 $\rightarrow$ 3	&	0		\\
1 $\rightarrow$ 4	&	0		&	2 $\rightarrow$ 4	&	0		&	3 $\rightarrow$ 4	&	0		&	4 $\rightarrow$ 4	&	0		&	5 $\rightarrow$ 4	&	0		&	6 $\rightarrow$ 4	&	0		\\
1 $\rightarrow$ 5	&	1		&	2 $\rightarrow$ 5	&	1		&	3 $\rightarrow$ 5	&	0		&	4 $\rightarrow$ 5	&	1		&	5 $\rightarrow$ 5	&	1		&	6 $\rightarrow$ 5	&	0		\\
1 $\rightarrow$ 6	&	2		&	2 $\rightarrow$ 6	&	1		&	3 $\rightarrow$ 6	&	1		&	4 $\rightarrow$ 6	&	2		&	5 $\rightarrow$ 6	&	1		&	6 $\rightarrow$ 6	&	0		\\
\hline	
{\bf 1 $\Rightarrow \ast$}	&	{\bf 7}	&	{\bf 2 $\Rightarrow \ast$}	&	{\bf 3}	&	{\bf 3 $\Rightarrow \ast$}	&	{\bf 4}	&	{\bf 4 $\Rightarrow \ast$}	&	{\bf 7}	&	{\bf 5 $\Rightarrow \ast$}	&	{\bf 3}	&	{\bf 6 $\Rightarrow \ast$}	&	{\bf 0}	 \\
\hline\hline
\end{tabular}	
\caption{All 24 possible forward edges and their number. Here only shift moves
of the form $(x,y,z) \rightarrow (u,v,w)$ are considered, where the (unordered)
base pair $\{y,z\} \in s$ crosses the (unordered) base pair $\{ u,v\} \in t$,
where $y<v$.
}
\label{table:forwardArcs}
\end{table}	

\begin{table}	
\centering
\begin{tabular}{||cr|cr|cr|cr|cr|cr||}	
\hline\hline
{\sc edge} & num &
{\sc edge} & num &
{\sc edge} & num &
{\sc edge} & num &
{\sc edge} & num &
{\sc edge} & num \\
\hline
1 $\leftarrow$ 1	&	0	&	2 $\leftarrow$ 1	&	1	&	3 $\leftarrow$ 1	&	1	&	4 $\leftarrow$ 1	&	1	&	5 $\leftarrow$ 1	&	0	&	6 $\leftarrow$ 1	&	0	\\
 1 $\leftarrow$ 2	&	1	&	 2 $\leftarrow$ 2	&	2	&	3 $\leftarrow$ 2	&	2	&	4 $\leftarrow$ 2	&	1	&	5 $\leftarrow$ 2	&	0	&	6 $\leftarrow$ 2	&	1	\\
 1 $\leftarrow$ 3	&	0	&	 2 $\leftarrow$ 3	&	0	&	3 $\leftarrow$ 3	&	0	&	4 $\leftarrow$ 3	&	0	&	5 $\leftarrow$ 3	&	0	&	6 $\leftarrow$ 3	&	0	\\
1 $\leftarrow$ 4	&	0	&	2 $\leftarrow$ 4	&	1	&	3 $\leftarrow$ 4	&	1	&	4 $\leftarrow$ 4	&	1	&	5 $\leftarrow$ 4	&	0	&	6 $\leftarrow$ 4	&	0	\\
1 $\leftarrow$ 5	&	1	&	2 $\leftarrow$ 5	&	2	&	3 $\leftarrow$ 5	&	2	&	4 $\leftarrow$ 5	&	1	&	5 $\leftarrow$ 5	&	0	&	6 $\leftarrow$ 5	&	1	\\
 1 $\leftarrow$ 6	&	1	&	2 $\leftarrow$ 6	&	1	&	3 $\leftarrow$ 6	&	1	&	4 $\leftarrow$ 6	&	0	&	5 $\leftarrow$ 6	&	0	&	6 $\leftarrow$ 6	&	1	\\
\hline	
{\bf 1 $\Leftarrow \ast$}	&	{\bf 3}	&	{\bf 2 $\Leftarrow \ast$}	&	{\bf 7}	&	{\bf 3 $\Leftarrow \ast$}	&	{\bf 7}	&	{\bf 4 $\Leftarrow \ast$}	&	{\bf 4}	&	{\bf 5 $\Leftarrow \ast$}	&	{\bf 0}	&	{\bf 6 $\Leftarrow \ast$}	&	{\bf 3}	\\
\hline\hline
\end{tabular}
\caption{All 24 possible backward edges and their number. Here only shift moves
of the form $(x,y,z) \leftarrow (u,v,w)$ are considered, where the (unordered)
base pair $\{v,w\} \in s$ crosses the (unordered) base pair $\{ x,y \} \in t$,
where $y<v$.
}
\label{table:backArcs}
\end{table}

\section{Minimal length pk-$MS_2$ folding pathways}
\label{section:pathsOneToFive}

This section provides details on the simple algorithms for pk-$MS_2$ minimum
length folding pathways for each of the five types of paths depicted in
Figure~\ref{fig:redBluePathsCycles}. 
If $s$ and $t$ are (possibly pseudoknotted) structures on $[1,n]$, and 
$X \subseteq [1,n]$ is an equivalence class, then define the 
{\em restriction} of $s$ [resp. $t$] to $X$, denoted by
$s \upharpoonright{X}$ [resp. $t \upharpoonright{X}$], to be the
set of base pairs $(i,j)$ in $s$ [resp. $t$] such that $i,j \in X$.
Each path or cycle in $A \cup B$ can be subdivided into the following
five cases. Each equivalence class can be classified as one of five types
of paths, depicted in Figure~\ref{fig:redBluePathsCycles} described below.
For this classification, we need to define
$End(s,X) = \{ x\in X: t[x]=0\}$  and
$End(t,X) = \{ x\in X: s[x]=0\}$ -- i.e. $End(s,X)$ [resp. $End(t,X)$]
is the set of elements $x$ of $X$ that belong to a base pair in $s$ [resp. $t$],
but the path cannot be extended because $x$ is not touched by a base pair
from $t$ [resp. $s$].
For each type of path $X$, we present a (trivial) algorithm that returns the
shortest $MS_2$ folding trajectory from
$s \upharpoonright{X}$ to $t \upharpoonright{X}$. Additionally, we determine
the relation between the pseudoknotted $MS_2$ distance between
$s \upharpoonright{X}$ and $t \upharpoonright{X}$, denoted
$d^X_{pk-MS_2}(s,t)$, as well as the Hamming distance, denoted $d_H^X(s,t)$.

An equivalence class 
$X$ of size $m$ is defined to be a {\em path of type 1}, if
$m$ is even, so path length is odd, and $|End(s,X)|=2$. 
Let $b_0 = \min(End(s,X))$ and for $1 \leq i < m/2$, define
$a_{i+1}= s[b_{i}]$ and $b_{i}=t[a_{i}]$, as shown in 
Figure~\ref{fig:redBluePathsCycles}a. A minimum length sequence of 
$MS_2$ moves to transform $s\upharpoonright X$ into $t\upharpoonright X$ 
is given by the following:
\medskip

\noindent
{\bf Path 1 subroutine}
\hfill\break
1. remove $\{ b_{m/2},a_{m/2}\}$ from $s$ \\
2. for $(m/2)-1$ down to $1$\\
3. {\hskip 0.5cm} shift base pair $(b_i,a_{i})$ to $(a_i,b_{i+1})$
\medskip

\noindent
An alternate procedure would be to remove the first base pair  $\{b_1,a_1\}$
and perform shifts from left to right.
Notice that if $m=|X|=2$, then a path of type 1 is simply a base pair
with the property that neither $i$ nor $j$ is touched by $t$.  For arbitrary
$m$, $d^X_{pk-MS_2}(s,t) = \frac{d^X_H(s,t)}{2}$.
The Hamming distance $d^X_{H}(s,t)=m$, and
$d^X_{pk-MS_2}(s,t) = m/2$, so $d^X_{pk-MS_2}(s,t) = 
\lfloor \frac{d^X_H(s,t)}{2} \rfloor$. 
Moreover,
$d^X_{pk-MS_2}(s,t) = \max( |s\upharpoonright X|, |t\upharpoonright X|)$.

An equivalence class 
$X$ of size $m$ is defined to be a {\em path of type 2}, if
$m$ is odd, so path length is even, and $|End(s,X)|=1=|End(t,X)|$, 
and $\min(End(s,X))<\min(End(t,X))$.
Let $b_0 = \min(End(s,X))$ and
for $1 \leq i \leq \lfloor m/2 \rfloor$, define
$a_{i+1}= s[b_{i}]$ and $b_{i}=t[a_{i}]$, as shown in 
Figure~\ref{fig:redBluePathsCycles}b. A minimum length sequence of 
$MS_2$ moves to transform $s\upharpoonright X$ into $t\upharpoonright X$ 
is given by the following:
\medskip

\noindent
{\bf Path 2 subroutine}
\hfill\break
1. for $i=\lfloor m/2 \rfloor$ down to $1$ \\
2. {\hskip 0.5cm} shift base pair $\{b_{i-1},a_{i}\}$ to $\{ a_i,b_i \}$
\medskip

\noindent
The Hamming distance $d^X_{H}(s,t)=m$, and
$d^X_{pk-MS_2}(s,t) = \lfloor m/2 \rfloor$, so $d^X_{pk-MS_2}(s,t) = 
\lfloor \frac{d^X_H(s,t)}{2} \rfloor$.
Moreover,
$d^X_{pk-MS_2}(s,t) = \max( |s\upharpoonright X|, |t\upharpoonright X|)$.

An equivalence class 
$X$ of size $m$ is defined to be a {\em path of type 3}, if
$m$ is odd, so path length is even, and $|End(s,X)|=1=|End(t,X)|$, 
and $\min(End(t,X))<\min(End(s,X))$.  Let $a_0 = \min(End(t,X))$ and
for $1 \leq i \leq \lfloor m/2 \rfloor$, define
$b_{i}= t[a_{i-1}]$ and $a_{i}=s[b_{i}]$, as shown in 
Figure~\ref{fig:redBluePathsCycles}c. A minimum length sequence of 
$MS_2$ moves to transform $s\upharpoonright X$ into $t\upharpoonright X$ 
is given by the following:
\medskip

\noindent
{\bf Path 3 subroutine}
\hfill\break
1. for $i=1$ to $\lfloor m/2 \rfloor$\\
2. {\hskip 0.5cm} shift base pair $\{b_{i},a_{i}\}$ to $\{ a_{i-1},b_i \}$
\medskip

\noindent
The Hamming distance $d^X_{H}(s,t)=m$, and
pk-$MS_2$ distance $d^X_{pk-MS_2}(s,t) = \lfloor m/2 \rfloor$, so 
$d^X_{pk-MS_2}(s,t) = \lfloor \frac{d^X_H(s,t)}{2} \rfloor$.
Moreover,
$d^X_{pk-MS_2}(s,t) = \max( |s\upharpoonright X|, |t\upharpoonright X|)$.

An equivalence class 
$X$ of size $m$ is defined to be a {\em path of type 4}, if
$m$ is even, so path length is odd, and $|End(t,X)|=2$.
Let $a_1 = \min(End(t,X))$ and
for $2 \leq i < m/2$, define
$a_{i+1}= s[b_{i}]$ and for
$1 \leq i \leq m/2$, define
$b_{i}=t[a_{i}]$, as shown in 
Figure~\ref{fig:redBluePathsCycles}d. A minimum length sequence of 
$MS_2$ moves to transform $s\upharpoonright X$ into $t\upharpoonright X$ 
is given by the following:
\medskip

\noindent
{\bf Path 4 subroutine}
\hfill\break
1. for $i=1$ to $m/2-1$\\
2. {\hskip 0.5cm} shift base pair $\{b_{i},a_{i+1}\}$ to $\{ a_i,b_i \}$\\
3. add base pair $\{ a_{m/2},b_{m/2} \}$ 
\medskip

\noindent
Notice that if $m=2$, then a path of type 4 is simply a base pair
$(i,j) \in t$, with the property that neither $i$ nor $j$ is touched by $s$.
The Hamming distance $d^X_{H}(s,t)=m$, and
$d^X_{pk-MS_2}(s,t) = m/2$, so $d^X_{pk-MS_2}(s,t) = \frac{d^X_H(s,t)}{2}$.
Moreover,
$d^X_{pk-MS_2}(s,t) = \max( |s\upharpoonright X|, |t\upharpoonright X|)$.

An equivalence class 
$X$ of size $m$ is defined to be a {\em path of type 5}, if it is a cycle,
i.e. each element $x \in X$ is touched by both $s$ and $t$.
Since base triples are not allowed due to 
condition 2 of Definition~\ref{def:secStr}, cycles have only even length,
and so $|X|$ is also even.  Let $a_1 = \min(X)$, and for
$1 \leq i \leq m/2$, define $b_{i}=t[a_{i}]$, and for
$2 \leq i \leq m/2$, define $a_{i}=s[b_{i-1}]$, as shown in 
Figure~\ref{fig:redBluePathsCycles}e. A minimum length sequence of 
$MS_2$ moves to transform $s\upharpoonright X$ into $t\upharpoonright X$ 
is given by the following:
\medskip

\noindent
{\bf Path 5 subroutine}
\hfill\break
1. remove base pair $\{ b_{m/2},a_1 \}$\\
2. for $i=1$ to $m/2-1$\\
3. {\hskip 0.5cm} shift base pair $\{b_{i},a_{i+1}\}$ to $\{ a_i,b_i \}$\\
4. add base pair $\{ a_{m/2},b_{m/2} \}$ 
\medskip

\noindent
The Hamming distance $d^X_{H}(s,t)=m$, and
$d^X_{pk-MS_2}(s,t) = m/2+1$, so $d^X_{pk-MS_2}(s,t) = 
\lfloor \frac{d^X_H(s,t)}{2} \rfloor +1$.  Moreover,
$d^X_{pk-MS_2}(s,t) = \max( |s\upharpoonright X|, |t\upharpoonright X|)$.
Note that any base pair could
have initially been removed from $s$, and by relabeling the 
remaining positions, the same algorithm would apply.

In summary, pk-$MS_2$ distance between 
$s \upharpoonright{X}$ and $t \upharpoonright{X}$
for any maximal path (equivalence class) $X$
is equal to Hamming distance
$\lfloor \frac{d_H(s \upharpoonright{X},t \upharpoonright{X})}{2}
\rfloor$; in contrast,
pk-$MS_2$ distance between $s \upharpoonright{X}$ and $t \upharpoonright{X}$
for any cycle $X$ is equal to 
$\lfloor \frac{d_H(s \upharpoonright{X},t \upharpoonright{X})}{2} \rfloor +1$.  
It follows that $d_{pk-MS_2}(s,t)=\lfloor \frac{d_H(s,t)}{2} \rfloor$ 
if and only if there are no type 5 paths, thus establishing 
equation~(\ref{eqn:relationBetweenHammingAndPKMS2distance}.

Now let $B_1$ [resp. $B_2$] denote the set of positions of all
type 1 paths [resp. type 4 paths] of length 1 -- i.e. positions
incident to isolated green [resp. red] edges that correspond to
base pairs $(i,j) \in s$ where $i,j$ are not touched by $t$
[resp. $(i,j) \in t$ where $i,j$ are not touched by $s$]. As well,
let $B_0$ designate the set of positions in $B$ not in either $B_1$ or 
$B_2$. Note that $B_1 \subseteq B$ and $B_2 \subseteq B$, and that formally
\begin{align}
\label{eqn:defB0bis}
B_0 &= B - (B_1 \cup B_2)  \\
\label{eqn:defB1bis}
B_1 &= \{ i \in [1,n]: \exists j \left[ \{ i,j \} \in s,
t(i)=0=t(j) \right] \\
\label{eqn:defB2bis}
B_2 &= \{ i \in [1,n]: \exists j \left[ \{ i,j \} \in t,
s(i)=0=s(j) \right] 
\end{align}
Note that $B_1$ and $B_2$ have an even number of elements, and that
all elements of $B-B_1-B_2$ are incident to a terminal edge of
a path of length 2 or more. Correspondingly, define $BP_1$ and $BP_2$
as follows:
\begin{align}
\label{eqn:defBP1bis}
BP_1 &= \{ (i,j) \in s: t[i]=0=t[j] \}\\
\label{eqn:defBP2bis}
BP_2 &= \{ (i,j) \in t: s[i]=0=s[j] \}
\end{align}
Note that $|BP_1| = |B_1|/2$ and $|BP_2| = |B_2|/2$. 
The following is a restatement of Lemma~\ref{lemma:pkMS2}.

\begin{lemma}
Let $s,t$ be two arbitrary pseudoknotted structures for the RNA sequence
$a_1,\ldots,a_n$, and let $X_1,\ldots,X_m$ be the
equivalence classes with respect to equivalence relation $\equiv$ on
$A \cup B_0 = [1,n]-B_1-B_2-C-D$. 
Then the pk-$MS_2$ distance between $s$ and $t$ is equal to
\begin{align*}
|BP_1| + |BP_2| + \sum_{i=1}^m \max\Big( |s \upharpoonright{X_i}|,
|t \upharpoonright{X_i}| \Big)
\end{align*}
Alternatively, if $X_1,\ldots,X_m$ are the equivalence classes on
$A \cup B = [1,n]-C-D$, then
\begin{align*}
d_{pk-MS_2}(s,t) &=\sum_{i=1}^m \max\Big( |s \upharpoonright{X_i}|,
|t \upharpoonright{X_i}| \Big)
\end{align*}
\end{lemma}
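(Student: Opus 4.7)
The plan is to prove the lemma in two parts: an upper bound by explicit construction and a lower bound by a localized counting argument on each equivalence class. The key structural observation underpinning both directions is that distinct equivalence classes $X_i$ and $X_j$ involve disjoint sets of positions, so the restrictions $s \upharpoonright X_i$ and $t \upharpoonright X_i$ can be modified independently, and base pairs in $BP_1$ (from $s$ untouched by $t$) and $BP_2$ (from $t$ untouched by $s$) are subsumed into the trivial length-1 type 1 and type 4 paths, respectively.

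For the upper bound, I would first classify each equivalence class $X_i$ as one of the five path types depicted in Figure~\ref{fig:redBluePathsCycles}. For each type, I would apply the corresponding Path 1--5 subroutine from Appendix~\ref{section:pathsOneToFive}, which produces a valid pk-$MS_2$ subtrajectory from $s \upharpoonright X_i$ to $t \upharpoonright X_i$ using exactly $\max(|s \upharpoonright X_i|, |t \upharpoonright X_i|)$ moves, by direct inspection of the move counts in each subroutine. Because the equivalence classes are position-disjoint, concatenating the subtrajectories yields a valid pk-$MS_2$ trajectory $s \to t$ of total length $\sum_i \max(|s \upharpoonright X_i|, |t \upharpoonright X_i|)$; one must also verify that no concatenation step introduces a base triple, which follows from disjointness of position sets across classes.

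For the lower bound, I would first argue that without loss of generality any minimum-length trajectory contains only moves whose affected base pair lies in $s \cup t$ (any other addition/removal would be wasteful and pairs up into at least two superfluous moves, contradicting minimality). Once restricted in this way, every move is ``local'' to exactly one equivalence class: a pure addition or removal of $\{u,v\}$ lies in the class containing $u,v$, and a shift $\{x,y\} \to \{y,z\}$ involves three positions that are linked by $\sim$, hence by Definition~\ref{def:equivRelation} all lie in one class. Letting $n_i$ count the moves local to $X_i$, and letting $\alpha_i, \beta_i, \gamma_i$ denote the number of pure additions, pure removals, and shifts used within $X_i$, I would observe that each base pair in $s \upharpoonright X_i$ must eventually be removed (by a removal or as the source of a shift) and each base pair in $t \upharpoonright X_i$ must be added (by an addition or as the target of a shift). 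Since a shift does both at once, this yields $n_i = \alpha_i + \beta_i + \gamma_i \geq (r_i - \gamma_i) + (g_i - \gamma_i) + \gamma_i = g_i + r_i - \gamma_i \geq \max(g_i, r_i)$, where $g_i = |s \upharpoonright X_i|$ and $r_i = |t \upharpoonright X_i|$ and $\gamma_i \leq \min(g_i, r_i)$.

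The main obstacle will be the lower-bound case analysis for cycles (type 5 paths), since a naive shift applied to the cycle immediately creates a base triple at the position opposite the shifted pivot; thus a pure removal (or equivalent) is forced before any shift becomes legal, which would a priori give a lower bound of $\max(g_i, r_i) + 1$ rather than $\max(g_i, r_i)$. To reconcile this with the stated formula, I would exploit that the removed base pair inside a cycle can be ``recycled'' as the eventual terminating addition, so that after rearranging the counts of $\alpha_i, \beta_i, \gamma_i$ the bound $n_i \geq \max(g_i, r_i)$ can be shown to be tight in the cyclic case as well; the delicate point is a careful accounting that the extra removal is offset by a shift that would otherwise have been a pure add. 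All other path types (1--4) fall out of the clean $\max$ bound, matching the upper bound exactly and completing the proof.
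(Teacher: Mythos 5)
Your overall architecture --- an upper bound obtained by running the Path 1--5 subroutines on each equivalence class and concatenating, plus a per-class counting lower bound of the form $n_i \ge g_i + r_i - \gamma_i \ge \max(g_i,r_i)$ --- is essentially the paper's own argument in Appendix~\ref{section:pathsOneToFive}, augmented by a lower-bound count that the paper merely asserts. For path types 1--4 your accounting is correct and matches the paper's move counts exactly, and the position-disjointness of distinct equivalence classes does justify concatenation.

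The genuine gap is the cycle (type 5) case, and it cannot be repaired in the direction you propose. Your own lower-bound analysis there is the correct one: in a cycle every position is touched by both $s$ and $t$, so the first move can be neither a shift nor an addition of a base pair of $t$ (either creates a base triple); a removal is forced, that move reduces the symmetric difference by only $1$, and symmetrically the last move must be a pure addition, so at least $\max(g_i,r_i)+1$ moves are needed. The proposed ``recycling'' cannot bring this back down to $\max(g_i,r_i)$: the removed base pair belongs to $s-t$ and so can never serve as the terminating addition of a base pair of $t$, and indeed the paper's own Path 5 subroutine uses $m/2+1=\max(g_i,r_i)+1$ moves, so even the upper-bound construction fails to achieve $\max(g_i,r_i)$ on a cycle. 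The true contribution of a type 5 class is $\max\big(|s\upharpoonright X_i|,|t\upharpoonright X_i|\big)+1$, which is exactly what the appendix computes ($d^X_{pk-MS_2}=\lfloor d^X_H/2\rfloor+1$ for cycles) and is what makes equation~(\ref{eqn:relationBetweenHammingAndPKMS2distance}) hold only in the absence of cycles; the displayed formula of the lemma is attained verbatim only in that case, so you should not try to argue the extra move away. A secondary, smaller issue: your localization of every move to a single equivalence class via $\sim$ is only valid after you have justified restricting to direct trajectories, since $\sim$ is defined from the original $s,t$ and says nothing about intermediate base pairs outside $s\cup t$; that reduction needs an explicit argument rather than the one-line dismissal of ``wasteful'' moves.
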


\section{Branch-and-bound algorithm}
\label{section:branchAndBoundAlgorithm}

In this section, we present pseudocode for an exact and exhaustive 
branch-and-bound search strategy \cite{cormen} to
determine a shortest $MS_2$ folding trajectory between two
given secondary structures of a given RNA sequence.

First base pairs in $BP_1$ and $BP_3$ are removed from $s$ to obtain the root structure.  Starting from the root we perform removal and shift of base pairs until the target or an empty structure is achieved. For each state $S_i$, we have the distance $dist$ from the root to $S_i$. Furthermore, we compute a lower bound $lb$ for the $MS_2$ distance from $S_i$ to the target structure by calculating the $MS_2$ distance allowing pseudoknots. The optimistic $MS_2$ distance from the source to target is estimated as $dist+lb$. If the optimistic distance is higher than the incumbent $MS_2$ distance computed so far, the search is stopped and a different state is considered. In order to prune the search tree more effectively, base pairs that cause more conflicts are considered first. Finally, if an empty structure is achieved we compute the list of remaining moves for $s$.

\begin{algorithm}
\caption{\small
Branch and bound algorithm for $MS_2$ distance\newline \textbf{Input:} Two RNA secondary structures $s$ and $t$ with length $n$  \newline \textbf{Output:} Minimum number of $MS_2$ moves in the path from $s$ to $t$}\label{bb}

\begin{small}
\begin{algorithmic}[1]
\State $BP_1$ is the set of base pairs $(i, j) \in s$ such that neither $i$ nor $j$ is touched by any base pair of $t$ \newline
$BP_2$ is the set of base pairs $(i, j) \in t$ such that neither $i$ nor $j$ is touched by any base pair of $s$ \newline
$BP_3$ is the set of base pairs $(i, j) \in t,s$ \newline
$V = \{(x,y,z) : 1\leq x,y,z \leq n; t[ b ]=a; s[ b ]=c \}$ 
\State remove base pairs in $BP_1 \cup BP_3$ from $s$
\State remove base pairs in $BP_2 \cup BP_3$ from $t$
\State for $v \in V$ compute $n_v$ the the number of crossings between $v.s$ and all base pairs in $t$ 
\State sort $V$ in decreasing order by $n_v$
\State $best = $ base pair distance between $s$ and $t$ \Comment in worst case only $MS_1$ moves are used
\State $curdist=0$ \Comment distance between $s$ and current state $cs$
\State define a data structure $state = \{s,t,dist,lb,rm,ad,sh\}$ representing a node in the search tree. \newline $s$ and $t$ represent the structures in the current state; $dist$ is the $MS_2$ distance from the root;$lb$ is the lower bound for the $MS_2$ distance from $s$ to $t$ for all paths passing through the current state $cs$; $rm$, $ad$ and $sh$ are respectively the lists of removals, additions and shifts performed from the root to obtain the current state.

\State $root = state(s,s,best,0,[~],[~],[~])$
\State define priority queue $Q$ containing states $S_1,...,S_m$, ordered by $S_i.lb$ 
\State $Q = \{root\}$ 
\While {($Q$ is not empty)}
\State $cs = Q.pop()$ \Comment state with smallest lower bound $cs.lb$ will be poped
\If {($cs.lb < best$)}
\If {($s$ has no base pairs)} \Comment current state is a leaf
\State $R =$ set of base pairs in $t$
\State $cs.ad = cs.ad \cup R$
\State $cs.dist = cs.dist + |R| $
\If {($cs.dist < best$)}
\State $Sol = cs $
\State $best = cs.dist$
\EndIf
\ElsIf {($t$ has no base pairs)} \Comment current state is a leaf
\State $R =$ set of base pairs in $s$ 
\State $cs.rm = cs.rm \cup R$
\State $cs.dist = cs.dist+ |R| $
\If {($cs.dist < best$)}
\State $Sol = cs $
\State $best = cs.dist$
\EndIf
\Else \Comment current state is an internal node
\State let $M$ be the list of possible shift moves and removals that can be applied to current state $cs$
\For{$m \in M$}
\State apply move $m$ to $s$
\State $l =$ optimal number of moves to go from $s$ to $t$ passing through $cs+m$ and allowing pseudoknots
\If {$(l<best)$}
\If {($m$ is a shift)}
\State $nt = t - m.t$ \Comment remove the resolved base pair from $t$
\State $ns = s - m.s$ \Comment remove the resolved base pair from $s$
\State $newState = state(ns,nt,cs.dist + 1, l, cs.rm , cs.ad, cs.sh+ m )$
\ElsIf {($m$ is a removal)}
\State $ns = s - m $ \Comment remove the resolved base pair from $s$
\State {$newState = state(ns,cs.t,cs.dist + 1, l , cs.rm + m ,  cs.ad, cs.sh )$}
\EndIf
\State $Q = Q \cup newState$
\EndIf
\EndFor 
\EndIf
\EndIf
\EndWhile
\State $finalPath = BP_1 + Sol.rm + Sol.sh + Sol.add + BP_2 $
\end{algorithmic}
\end{small}
\end{algorithm}

In algorithm \ref{bb} we describe a branch and bound algorithm for computing the $MS_2$ distance between two RNA secondary stuctures $s$ and $t$. First base pairs in $BP_1$ and $BP_3$ are removed from $s$ to obtain the root structure.  Starting from the root we perform removal and shift of base pairs until the target or an empty structure is achieved. For each state $S_i$, we have the distance $dist$ from the root to $S_i$. Furthermore, we compute a lower bound $lb$ for the $MS_2$ distance from $S_i$ to the target structure by calculating the $MS_2$ distance allowing pseudoknots. The optimistic $MS_2$ distance from the source to target is estimated as $dist+lb$. If the optimistic distance is higher than the incumbent $MS_2$ distance computed so far, the search is stopped and a different state is considered. In order to prune the search tree more effectively, base pairs that cause more conflicts are considered first. Finally, if an empty structure is achieved we compute the list of remaining moves for $s$.

\section{Greedy algorithm}
\label{subsection:greedyAlgorithm}

For a digraph $G=(V,E)$, in this section, we present the pseudocode for 
a staightforward greedy algorithm to determine a (possibly non-maximal) 
vertex subset $\Vbar \subset V$ such that the induced subgraph
$H=(\Vbar,\Ebar)$ contains no directed cycles, where $\Ebar = E \cap
(\Vbar \times \Vbar)$. Nevertheless, in the following greedy algorithm, it
is necessary to first generate a list of all (possibly exponentially many)
directed cycles. This computational overhead is sidestepped by the 
near-optimal algorithm in the next section.

\begin{algorithmPeter}[Greedy approximation of $MS_2$ distance from $s$ to $t$] 
\label{algo:greedyMS2path}
\hfill\break
{\sc Input:} Secondary structures $s,t$ for RNA sequence $a_1,\ldots,a_n$
\hfill\break
{\sc Output:} Greedy $MS_2$ folding trajectory 
$s = s_0,s_1,\ldots,s_m = t$, where $s_0,\ldots,s_m$ are
secondary structures, $m$ is the minimum possible value for which
$s_{i}$ is obtained from $s_{i-1}$ by a single base pair addition, removal or
shift for each $i=1,\ldots,m$.
\end{algorithmPeter}
First, initialize the variable {\tt numMoves} to $0$, and the list
{\tt moveSequence} to the empty list {\tt [~ ]}. Define
$BP_1 = \{ (x,y) : (x,y) \in t, (t-s)[x]=0, (t-s)[y]=0\}$; i.e.
$BP_1$ consists of those base pairs in $s$ which are not touched by any
base pair in $t$. Define
$BP_2 = \{ (x,y) : (x,y) \in t, (s-t)[x]=0, (s-t)[y]=0\}$; i.e.
$BP_2$ consists of those base pairs in $t$ which are not touched by any
base pair in $s$.
Bear in mind that $s$ is constantly being updated, so actions performed
on $s$ depend on its current value.
\bigskip
\begin{small}
\mverbatim
  //remove base pairs from $s$ that are untouched by $t$
 1. for $(x,y) \in BP_1$
 2.   remove $(x,y)$ from $s$; numMoves = numMoves+1
  //define conflict digraph $G=(V,E)$ on updated $s$ and unchanged $t$
 3. define $V$ by equation (\ref{eqn:conflictDigraphVertexSetV})
 4. define $E$ by equation (\ref{eqn:conflictDigraphEdgeSetE})
 5. define conflict digraph $G=(V,E)$ 
 6. $\mathcal{C} = \{ C_1,\ldots,C_m \}$ //list of all simple directed cycles in $G$
  //determine set $V_0$ of vertices to remove so that restriction of $G$ to $V-V_0$ is acyclic
 7. $V_0 = \emptyset$  //$V_0$ is set of vertices to be removed from $V$
 8. for $v \in V$
 9.   $\mathcal{C}_v =  \{ C \in \mathcal{C}: v \in C \}$
10. while $\mathcal{C} \ne \emptyset$
11.   $v_0 = \mbox{argmax}_v ||\mathcal{C}_v||$ //$v_0$ belongs to largest number of cycles
12.   $V_0 = V_0 \cup \{ v_0 \}$
13.   $V = V - \{ v_0 \}$
14.   $E = E - \{ (x,y): x=v_0 \lor y=v_0 \}$
15.   $G=(V,E)$ //induced subgraph obtained by removing $v_0$
16.   $\mathcal{C} = \mathcal{C} - \mathcal{C}_{v_0}$ // remove all cycles containing $v_0$ 
17.   $v_0 = (x,y,z)$ //unpack $v_0$ to obtain base pairs $\{x,y\}_< \in t$, $\{ y,z\}_< \in s$
18.   $s = s - \{ (\min(y,z),\max(y,z)) \}$
  //topological sort of the now acyclic digraph $G=(V,E)$ for updated $V,E$
19. topological sort of $G$ using DFS \cite{cormen} to obtain total ordering $\prec$ on $V$
20. for $v=(x,y,z) \in V$ in topologically sorted order $\prec$
       //check if shift would create a base triple, as in type 1,5 paths from Figure 3 of main text
21.   if $s[x]=1$ //i.e. $\{u,x\} \in s$ for some $u\in [1,n]$
22.     remove $\{ u,x \}$ from $s$; numMoves = numMoves+1 
23.   shift $\{y,z\}$ to $\{x,y\}$ in $s$; numMoves = numMoves+1
  //remove any remaining base pairs from $s$ that have not been shifted
24. for $(x,y)\in s-t$
25.   remove $(x,y)$ from $s$; numMoves = numMoves+1
      //add remaining base pairs from $t-s$, e.g. from $BP_2$ and type 4,5 paths in Figure 3 of main text
26. for $(x,y)\in t-s$
27.   add $(x,y)$ to $s$; numMoves = numMoves+1
28. return folding trajectory, numMoves
|mendverbatim
\end{small}
\bigskip

We now analyze the time and space complexity of the greedy algorithm.
In line 6, Johnson's algorithm \cite{Johnson:cycleEnumeration}
is used to enumerate all simple 
directed cycles, resulting in run time $O((|V|+|E|) \cdot (|\mathcal{C}|+1) )$, 
where $|V|$ [resp. $|E|$] denotes the number of vertices [resp. edges] of 
the initial conflict digraph $G$, and $|\mathcal{C}|$ denotes the number 
of directed cycles of $G$. Let  $M = |\mathcal{C}|$ denote the number of
directed cycles in $\mathcal{C}$, and let $N=O(|V| \cdot M)$
denote the total number of vertices
(counting duplicates) in the set of all simple directed cycles
$\mathcal{C} = \{ C_1,\ldots,C_M \}$. 
Lines 7 through 28 require $O(N)$ time and space,
provided that one introduces the data structures $A_1,A_2,A_3,A_4$, defined
by 
as follows:
\begin{align*}
A_1[v] &= |\{ C \in \mathcal{C}: v \in C \}| \\
A_2[v] &= \{ k \in \{ 1,\ldots,|\mathcal{C}|\} : C_k \in \mathcal{C} \land v \in C_k \}\\
A_3[k]  &= \{ v \in V: v \in C_k \}\\
A_4[k]  &= \left\{ \begin{array}{ll}
1 &\mbox{if $C_k \in \mathcal{C}$}\\
0 &\mbox{else} \end{array} \right. 
\end{align*}
In other words, $A_1$ is a linked list of size $|V|$, where $A_1[v]$ equals the
(current) number of cycles to which $v$ belongs (in line 13, the node
$A_1[v]$ is deleted from the linked list);
$A_2$ is an array of size $|V|$, where $A_2[v]$ is a linked list of indices
$k$ of cycles $C_k$ that contain vertex $v$ (note that the size of linked
list $A_2[i]$ is $A_1[i]$);
$A_3$ is an array of size the number  $|\mathcal{C}|$ of cycles, where
$A_3[k]$ is a linked list of vertices $v$ that belong to $C_k$;
$A_4$ is an array of size the number  $|\mathcal{C}|$ of cycles, where
$A_4[k]$ is a a boolean value (true/false), depending on whether the cycle $C_k$
currently belongs to $\mathcal{C}$ (used to implement line 16). Details
are left to the reader, or can be gleaned from reading our publicly
available source code. It follows that the run time complexity of 
Algorithm~\ref{algo:greedyMS2path} is 
$O((|V|+|E|) \cdot (|\mathcal{C}|+1) )$ with space complexity of
$O(|V|\cdot (|\mathcal{C}|+1) + |E|)$.

\section{Graph theoretical properties}
\label{section:graphTheoreticalProperties}

\subsection{Representable digraphs}
\label{section:representableDigraphs}

Recall that digraph $G=(V,E)$ is {\em isomorphic} to digraph 
$G'=(V',E')$ if there is a bijective function (i.e. one-one and onto)
$\Phi:V \rightarrow V'$,
such that for all $u,v \in V$, $(u,v) \in E$ if and only if 
$(\Phi(u),\Phi(v))\in E'$. Since RNA conflict digraphs have a natural
ordering of vertices defined in Definition~\ref{def:digraphNode}, we
now define {\em digraph order-isomorphism}.

\begin{definition} [Order-isomorphism]
\label{def:orderIsomorphism}
Let $G=(V,E,\preceq)$ [resp. $G'=(V',E',\preceq')$] be a
digraph, whose vertex set $V$ [resp. $V'$] is totally ordered by
$\preceq$ [resp. $\preceq'$]. We say that $G$ is order-isomorphic to 
$G'$ if there exists
an order-preserving bijective function $\Phi: V \rightarrow V'$ 
(i.e. one-one and onto) such that 
(1) for $u,v \in V$, $x \preceq y$ if and only if $\Phi(u) \preceq' \Phi(v)$,
(2) for $u,v \in V$, $(u,v) \in E)$ if and only if $(\Phi(u),\Phi(v)) \in E'$.
If $\Phi$ is an injective function (one-one, but not necessarily onto), then
$G$ is said to have an order-preserving embedding in $G'$.
\end{definition}
We say that a digraph $G=(V,E)$ is {\em representable} if it is 
order-isomorphic to an RNA conflict digraph, formally defined as follows.
\begin{definition} [Representable digraph] \hfill\break
\label{def:representableDigraph}
Let $V = \{ 1,\ldots,n\}$ be a set of vertices and $E$ a set of directed
edges on $V$.  The digraph $G=(V,E)$ is said to be {\em representable} if
there exist secondary 
structures $s,t$ of some RNA sequence $a_1,\ldots,a_m$, an integer $N$, and
an order-preserving function $\Phi: [1,n] \rightarrow [1,N]^3$ such that 
(1) for $v,v' \in [1,n]$, $x<y$ if and only if $\Phi(v)<\Phi(v')$,
(2) for each $v \in [1,n]$, $\Phi(v) = (x,y,z)$ where
$x,y,z$ are distinct, $\{ x,y\}_< \in t$, $\{ y,z\}_< \in s$, 
(3) there is an edge $u \rightarrow v$ in $E$ if and only if 
$\Phi(u).s=\{ y,z\}_< \in s$ touches or crosses $\Phi(v).t = \{x,y\}_< \in t$. 
\end{definition}
As just defined, the notion of representability depends on the nucleotide
sequence $a_1,\ldots,a_n$. In a mathematical investigation to determine 
which digraphs are representable, it is more natural to reinterpret 
the notion of secondary structure to satisfy requirements 2-4 of 
Definition~\ref{def:secStr}, but not necessarily requirement 1.

The requirement that mapping $\Phi$ be order-preserving is important.
Consider the RNA conflict digraph $G$ in Figure~\ref{fig:4cycle}, equivalent
to the ordered digraph in 
Figure~\ref{fig:digraphsNotRepresentableAsConflictDigraphs}a, having edges
$1 \rightarrow 2 \rightarrow 3 \rightarrow 4 \rightarrow 1$.
Clearly $G$ is isomorphic to the digraph $G'$ in 
Figure~\ref{fig:digraphsNotRepresentableAsConflictDigraphs}b, although there is no order-isomorphism
between $G$ and $G'$. Indeed, by writing a program to exhaustively
enumerate all representable digraphs having a vertex set of size 4, we
know that $G'$ is not order-isomorphic to any RNA conflict digraph.
It is a straightforward exercise to show that
each of the $2^{ {3 \choose 2} } = 8$ many tounaments on 3 nodes is
representable (data not shown); however, not all
$2^{ {4 \choose 2} } = 64$ many tournaments on 4 nodes are representable,
as shown in Figure~\ref{fig:digraphsNotRepresentableAsConflictDigraphs}c. 
Although representability is
not invariant under isomorphism, it clearly is invariant under 
order-isomorphism. Moreover, we have the following.

\begin{theorem}
\label{thm:nonrepresentabilityInvariantUnderOrderPreservingEmbedding}
Suppose that $\Phi$ is an order-preserving embedding of digraph $G=(V,E)$
into digraph $G'=(V',E')$. If $G$ is not representable, then 
$G'$ is not realizable.
\end{theorem}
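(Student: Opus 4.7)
The plan is to prove the contrapositive. First, I treat ``realizable'' as synonymous with ``representable'' in the sense of Definition~\ref{def:representableDigraph}: the paper's figure captions (e.g., for Figure~\ref{fig:forbiddenGraphs}) describe a digraph as ``realized by'' an RNA conflict digraph in precisely this sense, and no separate notion of ``realizable'' is defined in the excerpt. So I assume $G' = (V', E')$ is representable (= realizable) and construct an explicit representation of $G = (V, E)$ by composing with the embedding $\Phi : V \to V'$, thereby contradicting the hypothesis that $G$ is not representable.

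Unpack the assumption via Definition~\ref{def:representableDigraph}: there exist an integer $N$, secondary structures $s', t'$ on some length-$N$ RNA sequence, and an order-preserving injection $\Psi' : V' \to [1,N]^3$ such that (a) each $\Psi'(u) = (x,y,z)$ satisfies $\{x,y\}_< \in t'$ and $\{y,z\}_< \in s'$, and (b) an edge $u \to u'$ lies in $E'$ iff $\Psi'(u).s$ touches or crosses $\Psi'(u').t$. I then define $\Psi := \Psi' \circ \Phi : V \to [1,N]^3$, using the same RNA sequence and structures $s', t'$ to witness representability of $G$. The verification splits into the three conditions of Definition~\ref{def:representableDigraph}. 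For order-preservation and injectivity, both properties transfer under composition of $\Phi$ (an order-preserving injection, by Definition~\ref{def:orderIsomorphism}) with $\Psi'$ (an order-preserving injection by hypothesis). For the vertex condition, for each $v \in V$ the triple $\Psi(v) = \Psi'(\Phi(v))$ is valid, because $\Phi(v) \in V'$ and condition (a) applies. For the edge condition I chain two biconditionals: $(u,v) \in E$ iff $(\Phi(u), \Phi(v)) \in E'$ by the embedding property of $\Phi$, and the latter iff $\Psi(u).s$ touches or crosses $\Psi(v).t$ by condition (b) applied at $\Phi(u), \Phi(v)$. Hence $G$ is representable, witnessed by the same RNA sequence and structures as $G'$; any base pairs of $s', t'$ not covered by triples in the image of $\Psi$ simply go unused, which Definition~\ref{def:representableDigraph} permits.

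The step I expect to require the most care is the backward (``if'') direction of the edge condition. It relies crucially on the induced-subgraph nature of the embedding — namely, that $(u,v) \in E$ \emph{iff} $(\Phi(u), \Phi(v)) \in E'$, and not merely the forward direction. Reading Definition~\ref{def:orderIsomorphism} confirms that this iff is intended: the ``embedding'' clause is a modification of the isomorphism definition in which bijectivity is weakened to injectivity, while the ordering and edge biconditions are retained. If one instead adopted a weaker (non-induced) notion of embedding, the theorem would fail: a representation of $G'$ could produce touching or crossing between $\Psi(u).s$ and $\Psi(v).t$ that would force a spurious edge into any candidate representation of $G$. Modulo this interpretive point, the argument is a formal composition and nothing deeper is required; as a sanity check, it is consistent with Figure~\ref{fig:digraphsNotRepresentableAsConflictDigraphsBis}, where the non-representable digraph (a) is explicitly noted \emph{not} to embed order-preservingly into the representable digraph (b).
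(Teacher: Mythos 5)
Your proof is correct and takes essentially the same route as the paper's: the paper's one-sentence argument (if $G'$ were representable, the induced subgraph $\Phi(G)$ would be representable, hence so would $G$) is precisely the contrapositive composition $\Psi'\circ\Phi$ that you spell out in detail. Your observations that ``realizable'' is used synonymously with ``representable'' and that the edge biconditional of the induced-subgraph embedding is what makes the backward direction work are consistent with the paper's definitions and intent.
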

The theorem is immediate, since if $G'$ were order-isomorphic to an RNA
conflict digraph, then the induced subgraph $\Phi(G)$ of $G'$ must be
representable, and hence $G$ must be representable.
Figure~\ref{fig:digraphsNotRepresentableAsConflictDigraphsBis}a 
depicts a nonrepresentable digraph having
4 vertices and 4 edges. By adding an edge to that figure, we obtain the
digraph in Figure~\ref{fig:digraphsNotRepresentableAsConflictDigraphsBis}b, 
which is {\em not} representable.

Recall that an {\em automorphism} of a directed graph $G=(V,E)$ is the
set of permutations $\sigma$ on $n$ letters, for $V=\{1,\ldots,n\}$, such
that $G$ and $\sigma(G)$ are isomorphic. Using a small program that we
wrote to compute the automorphism group $Aut(G)$ for any connected, 
directed graph $G=(V,E)$, we found that the digraphs in
Figures~\ref{fig:digraphsNotRepresentableAsConflictDigraphs}c
and \ref{fig:digraphsNotRepresentableAsConflictDigraphsBis}b both
have the trivial automorphism group consisting only of the identity
permutation on 4 letters. Since the former is {\em not} representable and
the latter {\em is} representable, it follows that the automorphism group
of a digraph implies nothing about whether the digraph is representable.

\subsubsection{Example of RNA conflict digraphs}
\label{section:examples}

Computation time for the IP algorithm \ref{algo:MS2path} 
is dominated by the time to generate a list of all simple cycles and the
time to obtain an IP solution satisfying the FVS problem ($\dag$) as well
the constraints ($\ddag$) that ensure that shift moves cannot be applied 
if they share same base pair from $s$ or $t$. This raises the question 
whether the FVS problem is polynomial time solvable for RNA conflict digraphs.
We cannot settle this open question, but provide examples of RNA conflict 
digraphs that indicate the there is no reduction to known cases for which
the FVS problem has been resolved to be NP-complete or polynomial time 
computable.
Figure~\ref{fig:4cycle} depicts a 4-cycle digraph that is order-isomorphic
to the digraph with edges
$1 \rightarrow 2 \rightarrow 3 \rightarrow 4 \rightarrow 1$.
Figure~\ref{fig:forbiddenGraphs}a 
shows the complete bipartite graph $K_{3,3}$.
Recall that in \cite{forbiddengraphK33planar}, Kuratowski proved
that a graph is planar if and only if it does not contain a subgraph
that is a subdivision of $K_{3,3}$ or the complete graph $K_5$; i.e.
does not contain an embedded copy of one of the forbidden graphs
$K_{3,3}$ or $K_5$. 
Figure~\ref{fig:forbiddenGraphs}b 
depicts a copy of $K_{3,3}$ with directed edges,
which is realized in the RNA conflict digraph shown in
Figure~\ref{fig:RNAconflictGraphForK33}. 
It follows from Kuratowski's theorem that
RNA conflict digraphs are not planar in general.
Figure~\ref{fig:forbiddenGraphs}c 
depicts a forbidden digraph, with the property that
a flow graph (graphical representation of code in a programming language
with possible {\sc goto}-statements) is {\em reducible} if and only if
the flow graph does not contain a copy of this forbidden digraph (or
related digraph where the edges of Figure~\ref{fig:forbiddenGraphs}c 
may be replaced by {\em arc-disjoint} paths) \cite{flowgraphUllman}.
Figure~\ref{fig:RNAconflictGraphForFlowGraph} shows an RNA conflict digraph
which represents the forbidden digraph of Figure~\ref{fig:forbiddenGraphs}c.
It follows that RNA conflict digraphs are not reducible flow graphs
in general.
Two classes of digraphs for which FVS
is known to be polynomial time computable are: (1) planar digraphs
\cite{planarMinimaxArcTheorem}, and (2) reducible flow graphs
\cite{ramachandranMinimaxArcTheorem}. On the other hand, FVS
is NP-complete for general digraphs  \cite{karpNPcomplete}, for 
tournaments 
\cite{tournamentFeedbackArcNPcomplete,bookDigraphdirectedgraph}, and for
Eulerian digraphs \cite{feedbackarcsetNPcompleteForEulerianDigraphs}.
It is a simple exercise, left to the reader,  
to show that each of the $2^{{3 \choose 2}}=8$
tournaments on 3 vertices can be represented by an RNA conflict digraph.
\begin{proposition}
All tournaments on a graph having 3 vertices can be represented as
RNA conflict digraphs.
\end{proposition}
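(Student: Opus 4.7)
The plan is to verify the claim by an exhaustive case analysis: since ${3 \choose 2} = 3$ and each of the three edges may be oriented in one of two ways, there are exactly $2^3 = 8$ tournaments on the ordered vertex set $\{1,2,3\}$, and I would exhibit an RNA conflict digraph order-isomorphic to each. A tournament is specified by a triple $(\epsilon_{12}, \epsilon_{13}, \epsilon_{23}) \in \{\text{forward}, \text{backward}\}^3$, where $\epsilon_{ij}$ is the orientation of the edge between $i$ and $j$. The construction will produce three triplet nodes $v_i = (x_i, y_i, z_i)$ with pivots $y_1 < y_2 < y_3$, and the choices of $x_i, z_i$ will be used to realize each $\epsilon_{ij}$ independently.

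First I would fix three well-separated pivot positions $y_1, y_2, y_3$ (say at positions $10, 20, 30$) on an auxiliary sequence long enough to admit all required base pairs, and reserve disjoint intervals of ``free'' positions in the gaps $(-\infty, y_1)$, $(y_1, y_2)$, $(y_2, y_3)$, $(y_3, \infty)$ from which to draw the $x_i, z_i$. Since $v_i.t = \{x_i, y_i\}_<$ and $v_i.s = \{y_i, z_i\}_<$, an edge $v_i \to v_j$ (for $i<j$) arises precisely when $\{y_i, z_i\}$ crosses $\{x_j, y_j\}$, and conversely $v_j \to v_i$ arises when $\{y_j, z_j\}$ crosses $\{x_i, y_i\}$. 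By placing $z_i$ on the appropriate side of $y_j$ (and $x_j$ on the appropriate side of $y_i$), each such crossing can be forced or prevented. I would then tabulate for each of the 8 triples $(\epsilon_{12}, \epsilon_{13}, \epsilon_{23})$ an explicit choice of $x_i, z_i$ that realizes exactly the specified edges and no others, and verify that the resulting base pairs of $s$ and $t$ form valid non-pseudoknotted secondary structures.

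Many of the 8 cases reduce to each other by the two natural symmetries: reversing the sequence (which reverses the vertex ordering and flips every edge direction) and interchanging $s$ with $t$ (which reverses every edge). Modulo these, at most two essentially different constructions are required: one for the transitive tournament and one for the directed $3$-cycle.

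The main obstacle will be ensuring that each construction produces no \emph{unintended} crossings among the three pairs of base pairs in $s$ (and likewise in $t$), so that (a) the candidate structures $s, t$ really are non-pseudoknotted secondary structures, and (b) the induced edge set of the RNA conflict digraph is exactly the prescribed tournament rather than a supergraph of it. Sufficient spacing between the pivots, combined with placement of each $x_i$, $z_i$ in the ``correct'' gap relative to the other pivots, suffices to resolve both difficulties; the arguments mirror the edge-classification diagrams catalogued in Appendix~\ref{section:edgeClassificationAppendix}, which already exhibit isolated realizations of every edge type needed here.
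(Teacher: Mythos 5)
Your plan --- exhaustively constructing an order-isomorphic RNA conflict digraph for each of the $8$ tournaments on three ordered vertices, with symmetries used to cut the casework --- is essentially the paper's own approach, since the paper offers no written proof, describing the verification as a straightforward exercise settled by exhaustive check (``data not shown''); the plan is workable, as the difficulty you flag (keeping $s$ and $t$ individually non-crossing while forcing exactly the prescribed crossings between them) can indeed be handled by nesting the three arcs of $s$ inside one another and spacing positions generously. One minor slip: under the two symmetries you invoke (sequence reversal, which acts by the order-reversing relabeling of vertices, and the $s \leftrightarrow t$ swap, which reverses all edges), the six transitive tournaments fall into \emph{two} orbits rather than one, so three essentially distinct constructions are needed rather than two --- harmless, since you propose to tabulate all eight cases regardless.
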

There are $2^{ {n \choose 2}}$ many tournaments on $n$ vertices.
By applying constraint programming to each of the $2^6=64$ tournaments
on 4 vertices, we found a number of tournaments that are not representable
as conflict digraphs. 
Figure~\ref{fig:digraphsNotRepresentableAsConflictDigraphs}a 
depicts a 4-cycle which is representable by
the RNA conflict digraph represented in Figure~\ref{fig:4cycle}.
Figure~\ref{fig:digraphsNotRepresentableAsConflictDigraphs}b
depicts a simple, connected digraph on
4 nodes (a cycle) which is not representable as a conflict digraph, and
Figure~\ref{fig:digraphsNotRepresentableAsConflictDigraphs}c 
shows a tournament on 4 nodes which is not
representable. 

The FVS problem is known to be NP-complete for
all tournaments, all Eulerian digraphs and for general digraphs, and 
there are polynomial time algorithms for FVS for planar digraphs and for 
flow graphs. Examples we have provided suggest that there is no 
straightforward application of known results to settle the question whether
the  FVS problem is NP-complete for
the class of RNA conflict digraphs.

\begin{question}
\label{question:RNAconflictDigraphFeedbackArcPolytimeOrNPcomplete}
Is the FVS problem polynomial time computable or NP-complete for
the collection of RNA conflict digraphs?
\end{question}
Using constraint programming, we determined which connected digraphs on 4 nodes
could be represented as RNA conflict digraphs. This was done by considering
all partition of $[1,12]$ into four classes, each class corresponding to 
a triplet node $(x,y,z)$ and determining the resulting edge relations defined
by whether a base pair from $s$ belonging to node $v$ crosses a base pair
from $t$ belonging to node $v'$. Is there a better approach?
\begin{question}[Representation of arbitrary $G=(V,E)$ as conflict digraph]
\label{question:representationAsConflictDigraph}
Is there an efficient algorithm to determine whether the labeled digraph
$G=(V,E)$ can be realized as an RNA conflict digraph, where $G=(V,E)$ is given
as input.
\end{question}
The following is perhaps tractable.
\begin{question}
\label{question:representation2}
Is there an efficient algorithm to determine whether a given labeled digraph
$G=(V,E)$ can be realized as an RNA conflict digraph, 
the vertices of $G$ are totally ordered by $v_1 < \cdots < v_n$,
and each vertex is labeled by one of the node types 1,\ldots,6.
\end{question}
Even if the preceding problem has a polynomial time solution, it is not
clear whether the same is true for the following slight generalizations.
\begin{question}
\label{question:representation3}
Is there an efficient algorithm to determine whether the labeled digraph
$G=(V,E)$ can be realized as an RNA conflict digraph, 
where vertices of $D$ are totally ordered by $v_1 < \cdots < v_n$.
\end{question}
\begin{question}
\label{question:representation4}
Is the feedback arc set (FAS) problem complete for RNA conflict digraphs?
Is the problem of computing $MS_2$ distance between arbitrary secondary
structures $s,t$ NP-complete?
\end{question}
Since a digraph $G$ is planar if and only if it contains neither the complete 
graph $K_5$ on 5 vertices nor the complete bipartite graph $K_{3,3}$, so
Theorem~\ref{thm:nonrepresentabilityInvariantUnderOrderPreservingEmbedding}
suggests the following question.

\begin{question}
Is there a finite set of prohibited digraphs $H_1,\ldots,H_r$ such that
a digraph $G$ is representable by an RNA conflict digraph if and only if
there is no order-preserving embedding of $H_i$ into $G$ for any $i=1,\ldots,r$.
\end{question}

\end{document}